\newcommand{\be}{\begin{equation}} 
\newcommand{\ee}{\end{equation}}
\newcommand{\beq}{\begin{eqnarray}}
\newcommand{\eeq}{\end{eqnarray}}
\def\squareforqed{\hbox{\rlap{$\sqcap$}$\sqcup$}}
\def\qed{\ifmmode\squareforqed\else{\unskip\nobreak\hfil
\penalty50\hskip1em\null\nobreak\hfil\squareforqed
\parfillskip=0pt\finalhyphendemerits=0\endgraf}\fi}
\def\endenv{\ifmmode\;\else{\unskip\nobreak\hfil
\penalty50\hskip1em\null\nobreak\hfil\;
\parfillskip=0pt\finalhyphendemerits=0\endgraf}\fi}
\newcommand{\I}{\mathbbm{1}}
\newcommand{\ra}{\rangle}
\newcommand{\la}{\langle}
\newtheorem*{rep@theorem}{\rep@title}
\newcommand{\newreptheorem}[2]{%
\newenvironment{rep#1}[1]{%
 \def\rep@title{#2 \ref{##1}}%
 \begin{rep@theorem}}%
 {\end{rep@theorem}}}
\def\tr{\mbox{tr}}
\newtheorem{thm}{Theorem}%[section]
\newtheorem{lemma}{Lemma}
\newtheorem{defi}{Definition}
\begin{document}

\title{Certification of multi-qubit quantum systems with temporal inequalities}

\author{Gautam Sharma}
\author{Chellasamy Jebarathinam}
\author{Sk Sazim}
\author{Remigiusz Augusiak}
\affiliation{Center for Theoretical Physics, Polish Academy of Sciences, Aleja Lotnik\'{o}w 32/46, 02-668 Warsaw, Poland}

\begin{abstract}
Demonstrating contextual correlations in quantum theory through the violation of a non-contextuality inequality necessarily needs some ``contexts" and thus assumes some compatibility relations between the measurements. As a result, any self-testing protocol based on the maximal violation of such inequality is not free from such assumptions. In this work, 
%we take inspiration from Ref. [Phys. Rev. A 106, 012431, (2022)] and 
we propose temporal inequalities derived from non-contextuality inequalities for multi-qubit systems without assuming any compatibility relations among the measurements. We demonstrate that the new inequalities can be maximally violated via a sequential measurement scenario. Moreover, using the maximal violation of these temporal inequalities we are able to certify multi-qubit graph states and the measurements.
\end{abstract}

\maketitle

\section{Introduction}
%Multi-qubit entangled quantum systems find application in a large variety of quantum information processing tasks like quantum computation \cite{RB01,KMN+07}, quantum error correction \cite{PhysRevA.52.R2493,RevModPhys.87.307}, quantum communications \cite{PhysRevX.10.021071,Hilaire2021resource}, quantum simulations \cite{lanyon2010towards,ma2011quantum} and cryptographic protocols \cite{PhysRevLett.98.020503,Epping_2017}. It is desirable that the quantum states and measurements employed in these information-processing tasks work as specified by the provider. The standard tomography-based methods to certify quantum devices are not only resource-consuming, but they also become unfeasible as the system size increases \cite{PhysRevA.66.012303,PhysRevLett.95.210502}. Thus it is imperative to look for alternate certification techniques which are not only feasible but also require fewer resources.
Multi-qubit entangled quantum systems are a ubiquitous resource in a large variety of quantum information processing tasks like quantum computation \cite{RB01,KMN+07}, quantum error correction \cite{PhysRevA.52.R2493,RevModPhys.87.307}, quantum communications \cite{PhysRevX.10.021071,Hilaire2021resource}, quantum simulations \cite{lanyon2010towards,ma2011quantum} and cryptographic protocols \cite{PhysRevLett.98.020503,Epping_2017}. It is desirable to make sure that the quantum systems employed in these tasks perform as specified by the provider. There exist a number of ways to do such verification of the devices (quantum systems). The standard tomography-based methods used to certify quantum devices are constrained by two key limitations: a) the substantial overhead of resources required during the process, and b) their infeasibility when applied to larger system sizes. \cite{PhysRevA.66.012303,PhysRevLett.95.210502}. Fortunately, there exist alternate certification techniques, namely, self-testing, which are not only feasible for larger system sizes but also require fewer resources.

Self-testing of quantum states and measurements is a remarkable way of certifying the quantum devices \cite{MY04}, in terms of reducing the required resources. A plethora of self-testing schemes have been introduced using non-local correlations based on the maximal quantum violation of Bell-type inequalities \cite{Supic2020selftestingof,PhysRevLett.117.070402,PhysRevLett.124.020402,panwar2022elegant}. Although the self-testing schemes based on Bell-type inequalities provide a device-independent characterization of quantum systems, these are restricted by the fact that they require spatial separation between the subsystems. Thus, there is a hidden assumption of using compatible measurement devices as the spatially separated measurements are compatible with each other. 

Recently, various certification schemes based on Kochen-Specker contextuality \cite{KS67} and temporal correlations \cite{LG85,PhysRevLett.113.050401,PhysRevLett.115.120404} have also been proposed for certifying quantum devices \cite{PhysRevLett.122.250403,IMO+20,SSA20,MMJ+21,SJA22}. Demonstrating quantum contextuality and temporal correlations do not require any spatial separation and can be observed by using sequential quantum measurements on a single quantum system. Thus, we can get rid of the limitation of spatial separation between subsystems. However, the certification schemes based on the violation of non-contextuality inequalities also assume certain compatibility relations between the measurement devices \cite{BRV+19,IMO+20,SSA20} whereas the certification schemes using temporal inequalities assume compatibility relations on measurements as well as a) the preparation device always
prepares a maximally mixed state and b) the measurement device always returns the post-measurement state and does not have any memory \cite{SSA20,DMS+22}. Recently, we propose a self-testing scheme for two-qubit devices based on sequential correlations where we do not require the assumptions of compatibility on the measurements \cite{twoqubit_temp}.  

In this work, we extend the certification scheme of Ref. \cite{twoqubit_temp} to multi-qubit systems without assuming that the measurement operators are commuting. Specifically, we extend our previous work Ref. \cite{SJA22}, in which we introduced a family of temporal inequalities for the certification of multi-qubit quantum systems. To these non-contextual inequalities, by dropping the assumption of compatibility relations between the measurement operators, we modify the non-contextuality inequality by using sequential correlations such that the modified inequalities are also maximally violated by the multi-qubit graph states. In the self-testing protocol, we are able to certify the previously ``assumed" compatibility relations between the measurements from the maximal violation of the inequality. In what follows, we are able to certify all the pairs of anti-commuting operators as well as the multi-qubit graph states.

This paper is organized as follows. In Sec \ref{preliminaries}, we give a brief description of the contextuality scenario and its temporal extension, the graph states, and the self-testing statement. Next, in Sec. \ref{simplestinequality} we introduce the modified non-contextuality inequality for three-qubits and its further modification using sequential correlations and then demonstrate self-testing of three-qubit graph states. In Sec \ref{scalable}, we extend the same procedure for the $n$-qubit graph states. 

\section{Preliminaries}\label{preliminaries}
Imagine a sequential measurements scenario that enables us to study both contextual and temporal correlations as we will describe later. In order to perform sequential measurements we assume that these measurements do not physically destroy the system. Let us consider a set of measurements $\{M_i\}$ that are measured in a sequence on a physical system. The possible measurement outcomes $m_i$ can take the values $\pm1$. For a sequence of $n$ measurements performed in the order $M_1 \rightarrow M_2\rightarrow\ldots \rightarrow M_n$, the sequential correlations are defined as follows. 
\begin{align}\label{seq:exp}
   \la M_1\ldots M_n\ra_{seq}=\sum_{m_i=\pm1}m_1 \ldots m_n p(m_1, \cdots ,m_n|\{M_i\}),
\end{align}
where the $p(m_1, \cdots, m_n|\{M_i\})$ is the joint probability distribution of measurement outcomes for measurements performed in the aforementioned order. 

%From now on, we assume that the physical system and the measurements are both quantum. 
Since, we do not assume that the measurements performed in a sequence commute, we cannot exploit the fact that 
the dimension of the underlying Hilbert space is arbitrary to 
%specify the dimension of the Hilbert space, 
invoke the Naimark dilation technique to 
represent the observed correlations obtained with arbitrary 
measurements (also POVM) in terms of projective measurements.
Therefore, we need to assume the measurements $M_i$ to be projective. Also,
from now on, slightly abusing the notation, by $M_i$ we denote quantum observables representing those measurements; their eigenvalues are
$\pm1$, so that $M_i^2=\mathbbm{1}$.
%
%to generate the joint probability distributions for arbitrary measurements %using a pure quantum state and the projective measurements by exploiting no %restriction on the dimension of the Hilbert space. Therefore, we assume that %the measurements are projective}. 
%

Let us now notice that the expectation values \eqref{seq:exp} for 
for the two (noncummiting) measurements performed in a sequence $M_1\rightarrow M_2$, expresses in terms of the Born's rule as
\cite{MHL21, PhysRevA.98.062115}
\begin{align*}
 \la M_1 M_2 \ra _{seq} &:= \frac{1}{2}(\la M_1M_2\ra+\la M_2 M_1\ra)\nonumber\\
 &=  \frac{1}{2}\tr\left(\rho  \{M_1,M_2 \} \right),
\end{align*}
where $\{A,B \}=AB+BA$.
Similarly, for three measurements $M_1\rightarrow M_2 \rightarrow M_3$, the third order sequential correlations take the following form 
\begin{eqnarray*}
 \la M_1 M_2 M_3\ra _{seq}&:=& \la M_1M_2M_3\ra +\la M_1M_3M_2\ra\nonumber \\&&+\la M_2M_3M_1+\la M_3M_2M_1\ra \nonumber\\ &=&\frac{1}{4}\tr\left(\rho\{M_1,\{M_2,M_3\}\}\right).
\end{eqnarray*}
Therefore, the sequential correlations for $n$ measurements, $\la M_1 \ldots M_n\ra_{seq}$, can exactly be expressed as below 
\begin{align}\label{nth}
    \frac{1}{2^{n-1}}\tr\left(\rho \{M_1,\{M_2,\ldots,\{M_{n-1},M_n\} \}\}\right).
\end{align}
For a complete analysis of $n$-th order sequential correlations, we refer our enthusiastic reader to the appendix of Ref. \cite{PhysRevA.98.062115}.
\subsection{Non-contextuality inequalities in sequential  measurement scenario}
\label{cont_scenario}
Consider now, that the measurements $\{M_l\}$ are such that we know the compatibility relations among them. This knowledge of compatibility relations allows us to identify the contexts, where all the measurements belonging to a ``context" are compatible with each other. An experiment, that realizes the sequential measurement of $n$ measurements $\{M_{l}\}$ belonging to a ``context", generates a joint probability distribution function that is independent of the order in which the measurements are performed. Since the joint probability distribution functions are now independent of the order, the sequential correlation function reduces to the expected values as $\la M_{1}\ldots M_{n}\ra_{seq} =\la M_{1}\ldots M_{n}\ra$.
%\begin{align}\label{seqcontextual}
 %   \la M_{c_1}\ldots M_{c_n}\ra_{seq} =\la M_{c_1}\ldots M_{c_n}\ra,
%\end{align}
%where we have used the fact that the measurement order is irrelevant in \eqref{nth}, as all of them are compatible. 

These expectation values can then be used to construct a non-contextuality inequality. The non-contextuality inequalities are defined as a linear function of sequential correlations as following
\beq \label{gennoncineq}
\mathcal{I}_{NC}=\sum c_{i,j,k,\cdots }\la M_i  M_j M_k\ldots\ra \le \eta_C \le \eta_Q
\eeq
where $c$'s are real coefficients, and $\eta_C$ and $\eta_Q$ are the classical and quantum bounds. Whenever $\eta_Q>\eta_C$ is observed it leads to the refutation of the non-contextuality assumption in quantum theory \cite{KS67}.

The classical bound $\eta_C$ is obtained if a non-contextual hidden variable model can describe the joint probability distributions of the measurement outcomes \cite{KS67}.
In a non-contextual model, the expectation values can be factorized and each expectation value is deterministic, i.e.,
$\la M_{1} \ldots M_{n} \ra=\prod_l m_{l}$, with $m_{l}=\pm 1$.  Whereas the quantum bound $\eta_Q$ is obtained by maximizing over the set of all quantum states and measurements in any Hilbert space, and we can write 
\begin{align*}
    \eta_Q=\sup_{\{M_{l}\},\rho}\left[\sum c_{i,j,k,\cdots}\tr({\rho }\ M_i M_j M_k \ldots)\right],
\end{align*}
where $M_{l}$ are Hermitian operators acting on the Hilbert space $\mathcal{H}$ with eigenvalues $\pm1$ and $\rho=\ket{\psi}\bra{\psi}$. 
It has been demonstrated that the maximal violation of the non-contextuality inequalities can be used to certify multi-qubit entangled states and the measurements \cite{IMO+20,SJA22}. However, we would like to get rid of the assumption of having information about the compatibility of measurements. So we will prescribe a method to construct temporal inequalities from the non-contextuality inequalities in the next section.
\subsection{Temporal inequalities in sequential measurement scenario}
\label{temp_cont_scenario}
% In the previous section and previous works \cite{IMO+20,SJA22}, it was assumed that the compatibility relations among the measurements $M_i$ are known. However, in an experiment the compatibility relations among the mesurements are not always known. So we would like to drop this assumption of the knowledge of the compatibility relations among the measurements and then certify the multi-qubit state and measurements.
 %
%
Let us now generalize the above findings to the case of the complete graph state of any number of qubits. To this end, we will construct temporal inequalities in the sequential measurement scenario by modifying the non-contextuality inequalities in Eq. \eqref{gennoncineq}. To do so, we replace expectation value terms of compatible measurements appearing in \eqref{gennoncineq}, with one or more sequential correlations of $\{M_i\}$ such that there is at least one ``expectation" value term for all permutations of a sequence of the measurements. We explain the methodology with a few examples below.  

\textit{Generating all permutations of expectation values}.-- 
Let us consider the simplest example consisting of two compatible measurements $M_{1}, M_{2}$ and we will make the following replacement with sequential correlation function of $M_1 \rightarrow M_2$
\begin{align*}
    \la M_1M_2 \ra\rightarrow \la M_1M_2\ra_{seq}=\frac{1}{2}(\braket{M_1M_2}+\braket{M_2M_1}),
\end{align*}
As there are only two possible permutations, we get both of them with one sequential correlation. Next, we consider the expectation value of three compatible measurements and replace it with a pair of sequential correlations of the types $M_1 \rightarrow M_2 \rightarrow M_3$ and $M_2 \rightarrow M_1 \rightarrow M_3$, i.e., we have 
\begin{align}
 \la M_1M_2M_3\ra_{\pi} =\frac{1}{2}\left(\la M_1M_2M_3\ra_{seq}+\la M_2M_1M_3\ra_{seq}\right) \nonumber,
\end{align}
where  $\la M_1M_2M_3\ra_{\pi}$ contains sequential correlations such that all possible permutations of expectation values are included at least once. It can be readily checked that the $\la M_1M_2M_3\ra_{\pi}$ contains all possible six permutations of expectation values with $M_1,M_2,M_3$. The factor of $1/2$ is chosen so that the maximum value of the $\la M_1M_2M_3\ra_{\pi}$ is one. 
%Similarly, for the expectation value terms of four compatible measurements $M_{c_1} ,M_{c_2} , M_{c_3},M_{c_4}$, we can do the following replacement
%\begin{align*}
   % &\la M_{c_1}M_{c_2}M_{c_3}M_{c_4}\ra \rightarrow \la M_1M_2M_3M_4\ra_{\pi} \\ &=\frac{1}{4}\left(\la M_1M_2M_3M_4\ra_{seq}+\la M_1M_3M_2M_4\ra_{seq} \right.\\ &\left. +\la M_3M_2M_1M_4\ra_{seq}+\la M_4M_2M_3M_1\ra_{seq}\right),
%\end{align*}
%whereas before $\la M_1M_2M_3M_4\ra_{\pi}$ contains sequential correlations such that all possible permutations of expectation values are included at least once which can be verified by expanding the anti-commutators. Moreover, the factor $1/4$ ensures that the maximum value of $\la M_1M_2M_3M_4\ra_{\pi}$ is one. 
In general, we can do this for an arbitrary number of $n$ measurements which will require us to replace expectation value terms by a set of sequential correlations and divide the sum by the number of sequential correlations. A minimum of $n!/2^{n-1}$ number of sequential correlations are required to generate all permutations of expectation values. This is due to the fact that there are $n!$ possible permutations and each sequential correlation gives $2^{n-1}$ permutations of expectation values. Therefore, we construct a temporal inequality from Eq. \eqref{gennoncineq} as following
\begin{align}\label{gentempineq}
  \mathcal{I}_T=  \sum c_{\{M_i\}}\braket{M_1\ldots M_n}_{\pi}\leq \eta_C \leq \eta_Q, 
\end{align}
where $\la M_1\ldots M_n\ra_{\pi}$ contains all possible sequential correlations of $M_1,\ldots,M_n$ such that all $n!$ permutations of expectation values appear at least once. 

Now, we can derive the classical bound of \eqref{gentempineq} using any deterministic strategy, which is due to the fact that similar to Bell-type and non-contextuality scenarios, the temporal inequalities also rely on the assumption of the existence of joint probability distribution for the measurement outcomes \cite{MKT+14}. Whereas the quantum bound can be obtained by maximizing over the set of all pure quantum states and projective measurements, as argued earlier.
\subsection{Graph states}
The goal of this work is to introduce a method of certification of multi-qubit genuinely entangled GHZ states that are a particular instance of a larger class of states called graph states, corresponding to complete graphs.

Before moving to our results let us define the graph states (see, e.g., Ref. \cite{HDE+06}). Consider a graph $\mathcal{G}=(\mathcal{V},\mathcal{E})$ consisting of $n$ vertices from a set $\mathcal{V}$ and a set of edges $\mathcal{E}$ that connect some pairs of vertices; in particular, a graph is termed complete when for any pair of vertices there exists an edge connecting them. Now, in order to associate an $n$-qubit state to any connected graph we follow the stabilizer formalism \cite{Got96} and to each vertex $v_i\in \mathcal{V}$ we associate an operator defined as 
\begin{equation} \label{Gi}
    G_{i} = X_{i}\otimes  \bigotimes_{j \in \mathcal{N}(i)} Z_{j}, 
\end{equation} 
where $X_i$ and $Z_i$ are the qubit Pauli operators. The operators $X_i$ act on the vertex $v_i$, while the operators $Z_j$ act on the vertices $v_j\in \mathcal{N}_i$. In this way we arrive at the following definition.
\begin{defi}\label{def:gs}
A graph state $|G \rangle $ associated with the graph $\mathcal{G}=(\mathcal{V},\mathcal{E})$ is defined as the unique state stabilized by the corresponding operators $G_i$ \eqref{Gi}, that is,
\begin{equation}\label{eq2}
   G_{i}|G \rangle = |G \rangle ,\qquad  \forall i=1,\dots,n.
\end{equation}
In other words, $\ket{G}$ is the unique common eigenvector of all stabilizers $\{G_i\}$ with eigenvalue $+1$.
\end{defi}
%
%The operators, $G_i$, are called the stabilizers of the graph state $\ket{G}$. The abelian group generated by them is called a stabilizer.

In this work, we consider the $n$-qubit graph states which are constructed using a complete graph, i.e., the graphs in which all pairs of vertices are connected with one another. The stabilizing operators for such connected graphs have the following form
\begin{equation}\label{ngraph}
    G_i=Z_1\ldots Z_{i-1}X_iZ_{i+1}\ldots Z_n,
\end{equation}
with $i=1,\ldots,n$. They stabilize an $n$-qubit graph state 
that is local-unitary equivalent to the GHZ state $(1/\sqrt{2})(\ket{0}^{\otimes n}+\ket{1}^{\otimes n})$.
The simplest example of such a graph state is associated with a complete three vertex graph (left of Fig. \ref{graphs1and2}) state which is stabilized by the following three stabilizing operators:
\begin{eqnarray}\label{Stab3}
G_1 &=& X \otimes Z \otimes Z, \nonumber \\
G_2 &=& Z \otimes X \otimes Z, \nonumber \\
G_3 &=& Z \otimes Z \otimes X,
\end{eqnarray}
and can be stated as
\begin{eqnarray}\label{FCg}
|G_3 \rangle  &=& \frac{1}{\sqrt{8}}( 
{\left| 000 \right\rangle} 
+ {\left| 100 \right\rangle} 
+ {\left| 010 \right\rangle} 
- {\left| 110 \right\rangle} \nonumber \\
&&\hspace{0.75cm}+ {\left| 001 \right\rangle} 
- {\left| 101 \right\rangle} 
- {\left| 011 \right\rangle} 
- {\left| 111 \right\rangle} 
).
\end{eqnarray}
The next simplest set of stabilizers of this form is associated with a connected four vertex graph (on the right of Fig. \ref{graphs1and2}) and in a similar way we can construct further such $n$-qubit connected graph states. 
%
% The graph on the right side in Fig. \ref{graphs1and2} is a non-isomorphic to the complete graph. The unique three-qubit state associated with this graph is stabilized by  
% \begin{eqnarray}
% G_1 &=& X \otimes Z \otimes Z, \\
% G_2 &=& Z \otimes X \otimes \mathbbm{1}, \\
% G_3 &=& Z \otimes \mathbbm{1} \otimes X,
% \end{eqnarray}
% %
% and is given by
% %
% \begin{eqnarray}\label{nFCg}
% |G'' \rangle  &=&
%  \frac{1}{\sqrt{8}}( 
% {\left| 000 \right\rangle} 
% + {\left| 100 \right\rangle} 
% + {\left| 010 \right\rangle} 
% - {\left| 110 \right\rangle} \nonumber \\
% &&\hspace{0.75cm}+ {\left| 001 \right\rangle} 
% + {\left| 101 \right\rangle} 
% - {\left| 011 \right\rangle} 
% + {\left| 111 \right\rangle}
% ).
% \end{eqnarray}

\begin{figure}
    \centering
    \includegraphics[scale=0.45]{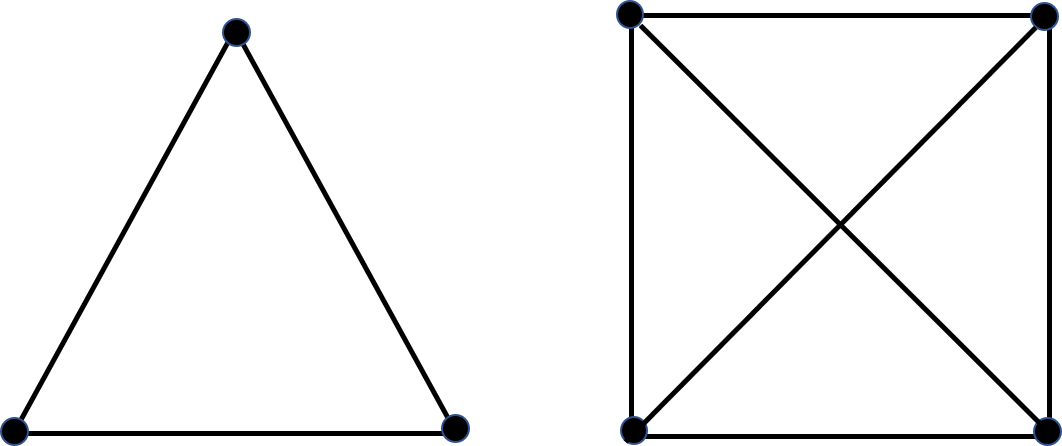}
    \caption{Two simplest connected graphs with 3 and 4 vertices respectively.} \label{graphs1and2}
  \end{figure}
\subsection{Self-testing with temporal inequalities}
In its original formulation put forward in Ref. \cite{MY04} (see also Ref. \cite{Supic2020selftestingof}), self-testing aims to certify an unknown quantum state and a set of measurements from the statistics obtained in an experiment, up to certain equivalences such as local isometries and additional of some extra degrees of freedom \cite{MY04}. The self-testing based on Bell inequality violations is by definition device-independent as it does not, in general, require any assumption about the state and the measurements. However, in the case of contextuality-based certification one has to make an additional assumption that the measurements involved obey certain compatibility relations \cite{BRV+19}, which in the case of Bell scenario is automatically satisfied due to the spatial 
separation between observers. In general it is dificult to certify in an experiment whether such compatibility relations are obeyed by the measurements. There exist other self-testing schemes based on the violation of non-contextual inequalities which do not require making this assumption \cite{SSA20,DMS+22}.

%there exists the hidden assumption of compatibility in measurements which %arises due to the requirement of the spatial separation between the %measurements in the Bell-type experiment \cite{Supic2020selftestingof}. %Whereas, in contextuality-based self-testing protocols, the assumption of %spatial separation is no longer needed, we still require the assumptions of %compatibility in a subset of measurements explicitly to show the violation %\cite{IMO+20}. There exist other self-testing schemes based on the violation %of non-contextual inequalities which assumes certain minimal assumptions on %the preparation as well as the measuring devices \cite{SSA20,DMS+22}.

Our aim here is to introduce the temporal-correlations version of the contextuality-based certification schemes introduced recently in Ref. \cite{SJA22}, which does not rely on any assumption about the compatility of
the involved measurements. To this aim, let us consider a temporal inequality, say $\mathcal{I}_T$ constructed from a non-contextuality inequality, $\mathcal{I}_{NC}$, i.e., $\mathcal{I}_{NC}\rightarrow \mathcal{I}_T$ using the techniques in \ref{temp_cont_scenario}. We have an experiment that realizes the sequential measurement scenario for $\mathcal{I}_T$, such that both the state (in general mixed) and measurements (in general non-projective) is unknown. 

We then consider a reference experiment with a known pure state $\ket{\tilde{\psi}}\in\mathbbm{C}^d$ for some $d$ and known measurements $\tilde{M}_i$ acting on $\mathbbm{C}^d$.  We assume that our reference experiment also achieves the maximal quantum value of the given expression $\mathcal{I}_{T}$. With these two experiments at hand, we can now introduce
the definition of self-testing in the case of contextuality or sequential scenarios that was put forward in \cite{IMO+20}:
\begin{defi}\label{defselftesting}
Suppose an unknown  state $|\psi \ra  \in \mathcal{H}$ and a set of measurements $ M_i $  violate the temporal inequality $\mathcal{I}_T$  maximally. Then, this maximal quantum violation self-tests the state $| \tilde{\psi}  \ra  \in \mathbbm{C}^d$ and the set of measurements $ \tilde{M_i} $ such that there exists exists a projection $P:\mathcal{H} \rightarrow \mathbb{C}^d$ and a unitary $U$ acting on $\mathbbm{C}^d$ so that
\beq\label{DefSelf}
U^\dagger (P\, M_i\, P^\dagger) U = \tilde{M}_i, \:\:{\rm and}\:\:
U (P\,|\psi \ra) =|\tilde{\psi} \ra.
\eeq
\end{defi}
Speaking alternatively, the above definition says
that based on the observed non-classicality one is able to identify 
a subspace $V=\mathbbm{C}^d$ in $\mathcal{H}$ on which 
all the observables act invariantly. Equivalently, 
$M_i$ can be decomposed as $M_i=\hat{M}_i\oplus \bar M_{i}$, where
$\hat{M}_i$ acts on $V$, whereas $\bar M_i$ acts on the orthogonal complement of $V$ in $\mathcal{H}$; in particular, $\bar M_i\ket{\psi}=0$. Moreover, there exists a unitary $U^{\dagger}\,\hat{M}_i\,U=\tilde{M}_i$.
\section{Self-testing of three-qubit complete graph state}\label{simplestinequality}

\subsection{Non-contextuality inequality}
Taking inspiration from the previous works \cite{IMO+20,twoqubit_temp,SJA22,DMS+22}, let us now introduce the following non-contextuality inequality for self-testing the complete graph state of three qubits and a set of nine observables:
\begin{align}\label{nc_new}
\mathcal{I}_3=&\braket{A_1A_2B_3M_{12}}+\braket{A_1B_2A_3M_{31}}+\braket{B_1A_2A_3M_{23}} \nonumber \\
&+\braket{A_1B_2B_3}+\braket{B_1A_2B_3}+\braket{B_1B_2A_3}-\braket{A_1A_2A_3}  \nonumber \\
&+\braket{M_{23}M_{13}}+\braket{M_{13}M_{12}}+\braket{M_{23}M_{12}} \nonumber \\ &\le \eta_C=8\leq\eta_Q=10,
\end{align}
where $A_i$, $B_j$ and $M_{ij}$ are dichotomic observables with eigenvalues $\pm1$ and hence satisfy $A^2_i=B^2_i=M_{ij}^2=\I$. Note that the notation for $M_{ij}$ is such that $M_{ij}$ and $M_{ji}$ represent the same observable. This enables us to identify the symmetries of the inequality which are convenient for the later proofs. 

It can be checked that the classical bound of the inequality \eqref{nc_new} can be obtained by assigning the values $\pm1$ to each variable $A_i$, $B_j$ and $M_{ij}$ which implies $\eta_C = 8$. And, the maximum algebraic value of $\mathcal{I}_3$ is $\eta_Q=10$, which we can achieve with the following choice of quantum operators:
\begin{align} \label{ABCs}
{A}_i = X_i , \qquad
{B}_j = Z_j , \qquad M_{ij}=X_i\otimes X_j \otimes Z_k,
\end{align}
and the graph state $|G_3 \ra$ in Eq. (\ref{FCg}), where $i\neq j\neq k \in\{1,2,3\}$. In fact, to derive the inequality \eqref{nc_new} we exploited the 
fact that the graph state $\ket{G_3}$ is an eigenstate of each of 
the product of observables appearing in $\mathcal{I}_3$, except for
$A_1A_2A_3$ for which it is an eigenvector with eigenvalue $-1$.

Notice also that for readability, we use a compact notation to represent the observables, eg., $X_i$, and $Z_i$ represent respectively the Pauli $X$ and $Z$ matrices acting on $i$-th qubit. As illustrations, for $n=3$ the following should read like, ${A}_2 = \mathbbm{1} \otimes X \otimes \mathbbm{1}$ and $M_{13}=M_{31}=X\otimes Z \otimes X$. We note that these operators obey the following commutation relations for $i\neq j \neq k \in \{1,2,3\}$,
\begin{align}\label{comm_3}
    [{A}_i,{A}_j]&=[{A}_i,{B}_j]=[{B}_i,{B}_j]=[{A}_i,M_{ij}]=[{A}_j,M_{ij}] \nonumber\\ 
    &=[{B}_k,M_{ij}]=[M_{ij},M_{kj}]=0,
\end{align}
as well as the following anti-commutation relations 
\begin{align}\label{anticomm_3}
    \{A_i,B_i\}=\{A_k,M_{ij}\}=\{B_{i(j)},M_{ij}\}=0.
\end{align}
We also notice that the non-contextuality inequality \eqref{nc_new}, respects certain symmetries. These symmetries can be listed as follows 
%(\textbf{Remik: It looks like 2. follows from 1., right?} \textit{Yes, but cyclic is needed for more indices.})
\begin{align}
   & 1.\:\: \text{Permutation of indices} \quad i \leftrightarrow j. \label{symm1}\\ 
   & 2.\:\: \text{Cyclic permutation of indices, induced by \eqref{symm1}.} \nonumber
   %\quad  1 \rightarrow 2 \rightarrow 3 \rightarrow 1.\nonumber
\end{align}

In the next subsection, we propose a temporal non-contextuality inequality derived from the Eq. \eqref{nc_new} using the techniques proposed in \ref{temp_cont_scenario}. We will not assume the commutation relations from Eq. \eqref{comm_3}, rather, we will certify the relations listed in Eqs. \eqref{comm_3} and \eqref{anticomm_3} from the maximal violation of the proposed temporal inequality.  Hence, the self-testing proof of the  measurements and the graph state $\ket{G_3}$ with the temporal non-contextuality inequality also applies to the self-testing proof using the non-contextuality inequality in \eqref{nc_new}.

Note that, we do not use the extant non-contextuality inequality from Ref. \cite{SJA22}. This is because the temporal extension of the non-contextuality inequality in \cite{SJA22} does not certify all required commutations relations.

\subsection{Self-testing with the temporal inequality}
The temporal inequality  constructed from \eqref{nc_new} has the following form
\begin{widetext}
\begin{eqnarray}\label{tncigautamjeba}
T_3&:=&\braket{A_1A_2B_3M_{12}}_{\pi}+\braket{A_3A_1B_2M_{31}}_{\pi}+\braket{A_2A_3B_1M_{23}}_{\pi}+\braket{A_1B_2B_3}_{\pi}+\braket{B_1A_2B_3}_{\pi}+\braket{B_1B_2A_3}_{\pi}\nonumber \\
&&-\braket{A_1A_2A_3}_{\pi}+\braket{M_{23}M_{31}}_{\pi}+\braket{M_{31}M_{12}}_{\pi}+\braket{M_{23}M_{12}}_{\pi}\nonumber\\ 
&\le& \eta_C=8.
\end{eqnarray}
\end{widetext}
It is direct to observe that the maximal quantum value of this expression which  
equals also the maximal algebraic value amounts to 
$\eta_Q = 10$ and is achieved by the same state and observables as the maximal quantum value of $\mathcal{I}_3$.

Our next task is to prove that the maximal quantum violation of the inequality (\ref{tncigautamjeba}) can be used for certification of the graph state (\ref{FCg}) along with the observables in \eqref{ABCs}. To this aim, consider a quantum realization given by a pure state $|\psi\ra\in\mathcal{H}$ and a set of quantum observables $A_i,B_j$ and $M_{ij}$ with $i,j\in \{1,2,3\}$  acting on $\mathcal{H}$, where $\mathcal{H}$ is some unknown Hilbert space. If the inequality in Eq. (\ref{tncigautamjeba}) is maximally violated for a pure state $\ket{\psi}\in \mathcal{H}$ then all the expectation values in the inequality, except the permutations of $\braket{A_1A_2A_3}=-1$, take the value +1. Moreover, we can also get the following set of relations
%\begin{align}
%%    &A_1A_2B_3M_{12}\ket{\psi}=\ket{\psi} \quad & \quad \text{permutations}, \nonumber \\
%    &A_1B_2A_3M_{31}\ket{\psi}=\ket{\psi} \quad & \quad \text{permutations}, \nonumber  \\
  %  &B_1A_2A_3M_{23}\ket{\psi}=\ket{\psi} \quad & \quad \text{permutations}, \nonumber \\
%    &A_1B_2B_3\ket{\psi}=\ket{\psi} \quad & \quad \text{permutations}, \nonumber \\
%    &B_1A_2B_3\ket{\psi}=\ket{\psi} \quad & \quad \text{permutations}, \nonumber \\
%    &B_1B_2A_3\ket{\psi}=\ket{\psi} \quad & \quad \text{permutations}, \nonumber \\
%    &A_1A_2A_3\ket{\psi}=-\ket{\psi} \quad & \quad \text{permutations}, \nonumber \\
%    &M_{23}M_{31}\ket{\psi}=\ket{\psi} \quad & \quad \text{permutations}, \nonumber \\
%    &M_{31}M_{12}\ket{\psi}=\ket{\psi} \quad & \quad \text{permutations},\nonumber  \\
%    &M_{23}M_{12}\ket{\psi}=\ket{\psi} \quad & \quad \text{permutations}.
%\end{align}
%
%These relations can be put in a compact form as follows
%
%
\begin{align}
    &A_iA_jB_kM_{ij}\ket{\psi}=\ket{\psi} \quad & \quad \text{+ permutations},  \label{3qubitallpermutationsa}\\
    &A_iB_jB_k\ket{\psi}=\ket{\psi} \quad & \quad \text{+ permutations}, \label{3qubitallpermutationsb} \\
    &A_iA_jA_k\ket{\psi}=-\ket{\psi} \quad & \quad \text{+ permutations}, \label{3qubitallpermutationsc} \\
    &M_{ij}M_{ik}\ket{\psi}=\ket{\psi} \quad & \quad \text{+ permutations} \label{3qubitallpermutationsd}, 
\end{align}
%
%\textbf{(Remik: explain the permutations.)}
where $i\neq j\neq k$ and by `$+$ permutations', we mean that the relations also hold for all possible permutation of the indices. Notice that although the inequality \eqref{tncigautamjeba} does not respect the symmetries \eqref{symm1} of the non-contextuality inequality \eqref{nc_new}, the above relations \eqref{3qubitallpermutationsa}--\eqref{3qubitallpermutationsd} respect those symmetries, which is sufficient for our purpose. From the above relations, we can conclude that the following commutation relations hold on the state $\ket{\psi}$.
\begin{lemma} \label{st_comm}
Suppose the maximal quantum violation of inequality \eqref{tncigautamjeba} is observed. Then, for all $i\neq j\neq k\in \{1,2,3\}$, the operators $A_i$,$B_j$ and $M_{ij}$ satisfy the following commutation relations on the state $\ket{\psi}$, 
\begin{align}\label{Lemma1}
    [{A}_i,{A}_j]\ket{\psi}=[{A}_i,{B}_j]\ket{\psi}=[{B}_i,{B}_j]\ket{\psi}=[{A}_i,M_{ij}]\ket{\psi}\nonumber\\=[{A}_j,M_{ij}]\ket{\psi}  
    =[{B}_k,M_{ij}]\ket{\psi}=[M_{ij},M_{jk}]\ket{\psi}=0.
\end{align} 
\end{lemma}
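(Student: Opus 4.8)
The plan is to obtain every relation in \eqref{Lemma1} from the stabilizer-type relations \eqref{3qubitallpermutationsa}--\eqref{3qubitallpermutationsd}, which hold on $\ket{\psi}$ for all orderings of the operators inside each product and for all permutations of the indices. The only tool needed is an elementary swapping identity for involutions: let $W$ be any product of the Hermitian observables $A_i,B_j,M_{ij}$, so that $W$ is unitary with $W^{-1}=W^{\dagger}$; if two of the forced relations read $WOO'\ket{\psi}=WO'O\ket{\psi}$ for observables $O,O'$, then left multiplication by $W^{-1}$ gives $OO'\ket{\psi}=O'O\ket{\psi}$, i.e.\ $[O,O']\ket{\psi}=0$. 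Hence, to commute any pair $(O,O')$ on $\ket{\psi}$, I only need two of the forced relations that agree in all factors except the last two, where they differ by the transposition $O\leftrightarrow O'$.

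I would first dispose of the $A$ and $B$ sectors. Comparing the two orderings $A_kA_iA_j\ket{\psi}=-\ket{\psi}$ and $A_kA_jA_i\ket{\psi}=-\ket{\psi}$ from \eqref{3qubitallpermutationsc} and cancelling the common factor $A_k$ yields $[A_i,A_j]\ket{\psi}=0$. Applying the same swap to \eqref{3qubitallpermutationsb}, once with the two $B$-type operators in the last two slots and once with an $A$ and a $B$ there, produces $[B_i,B_j]\ket{\psi}=0$ and $[A_i,B_j]\ket{\psi}=0$ respectively.

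For the $M$-sector, the two-factor relation \eqref{3qubitallpermutationsd} already has the form $M_{ij}M_{ik}\ket{\psi}=M_{ik}M_{ij}\ket{\psi}=\ket{\psi}$, so $[M_{ij},M_{ik}]\ket{\psi}=0$, which upon relabelling is the required $[M_{ij},M_{jk}]\ket{\psi}=0$. The three mixed commutators $[A_i,M_{ij}]\ket{\psi}=[A_j,M_{ij}]\ket{\psi}=[B_k,M_{ij}]\ket{\psi}=0$ come from the four-factor relation \eqref{3qubitallpermutationsa}: in each case I place the wanted pair last and keep a fixed two-operator prefix, for instance $A_jB_k\,A_iM_{ij}\ket{\psi}=A_jB_k\,M_{ij}A_i\ket{\psi}$, and cancel the prefix by left multiplication. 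Together with the index symmetries \eqref{symm1}, which spread each identity over all admissible index choices, this exhausts the seven commutators in \eqref{Lemma1}.

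The step I expect to require the most care is not the algebra above but the input it relies on, namely that $\ket{\psi}$ is a common eigenvector of every single operator ordering (and not merely of the symmetrized sequential correlations) appearing in $T_3$. This is the substance of the sentence preceding \eqref{3qubitallpermutationsa}: each $\langle\cdots\rangle_{\pi}$ term is an average of sequential correlations, and each of those is the average of the $2^{n-1}$ orderings in the nested anticommutator \eqref{nth}. Since every ordering is unitary with expectation of modulus at most one, while the nested anticommutator is Hermitian so that the averages are real, saturating the bound forces each ordering-expectation to equal $+1$ (respectively $-1$ for $A_1A_2A_3$), and the Cauchy--Schwarz equality condition then upgrades $\langle\psi|U|\psi\rangle=\pm1$ to $U\ket{\psi}=\pm\ket{\psi}$ for every ordering $U$. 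With this in hand the swapping argument is purely formal.
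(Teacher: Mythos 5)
Your proposal is correct and follows essentially the same route as the paper: both derive each commutator by comparing two orderings of the stabilizer-type relations \eqref{3qubitallpermutationsa}--\eqref{3qubitallpermutationsd} that differ only by a transposition of the relevant pair and cancelling the common unitary prefix. Your final paragraph additionally spells out why maximal violation forces every individual ordering (not just the symmetrized averages) to stabilize $\ket{\psi}$ up to sign — a point the paper asserts without detail before the lemma — which is a welcome, and correct, clarification rather than a deviation.
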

\begin{proof}
To prove this statement we can exploit the fact that the maximal violation of inequality \eqref{tncigautamjeba} implies that the relations in \eqref{3qubitallpermutationsa}--\eqref{3qubitallpermutationsd} are satisfied.  

Let us begin with the first commutation relation in 
\eqref{Lemma1}. To prove it we observe that 
the third equality in \eqref{Lemma1} together with 
its permutations imply that 
\begin{equation}
        A_iA_j\ket{\psi}=A_jA_i\ket{\psi}=-A_k\ket{\psi},
\end{equation}
for any triple $i\neq j\neq k\in\{1,2,3\}$. Thus
\begin{equation}
    [A_i,A_j]\ket{\psi}=0\qquad (i\neq j).
\end{equation}

The remaining relations in \eqref{Lemma1} can be proven in an analogous way. Precisely,  
$[{A}_i,{B}_j]\ket{\psi}=[{B}_i,{B}_j]\ket{\psi}=0$ for any pair $i\neq j$
follow from Eq. \eqref{3qubitallpermutationsb} and its permutations, whereas the commutation relations between $A_i$ and $M_{ij}$ or $B_k$ and $M_{ij}$ follow from \eqref{3qubitallpermutationsa}. Finally, $[M_{ij},M_{jk}]\ket{\psi}=0$ is a consequence of the fourth equality in Eq. \eqref{3qubitallpermutationsd}.
%
\iffalse
%
\begin{align*}
    A_kA_iA_j\ket{\psi}=A_kA_jA_i\ket{\psi}=-\ket{\psi}, \\
    A_iB_jB_k\ket{\psi}=A_iB_kB_j\ket{\psi}=\ket{\psi}, \\
    B_iA_jB_k\ket{\psi}=B_iB_kA_j\ket{\psi}=\ket{\psi}, \\
    B_kA_jA_iM_{ij}\ket{\psi}=B_kA_jM_{ij}A_i\ket{\psi} =\ket{\psi},\\
    A_iB_kA_jM_{ij}\ket{\psi}=A_iB_kM_{ij}A_j\ket{\psi}=\ket{\psi},\\
    A_iA_jB_kM_{ij}\ket{\psi}=A_iA_jM_{ij}B_k\ket{\psi}=\ket{\psi},\\
    M_{ij}M_{kl}\ket{\psi}=M_{kl}M_{ij}\ket{\psi}=\ket{\psi}, 
\end{align*}
which will yield the required commutation relations on $\ket{\psi}$.
\fi
%
\end{proof}
Next, we also prove the anti-commutation relations \eqref{anticomm_3} between the operators when acting on the state $\ket{\psi}$. 

\begin{lemma}\label{stanti_comm}
Suppose the maximal quantum violation of inequality \eqref{tncigautamjeba} is observed. Then, the operators $A_i$,$B_j$ and $M_{ij}$ along with the state $\ket{\psi}$ satisfy the following relations
\begin{align*}
    &\{A_i,B_i\}\ket{\psi} =\{A_k,M_{ij}\}\ket{\psi}\\&=\{B_i,M_{ij}\}\ket{\psi}=\{B_j,M_{ij}\}\ket{\psi}=0
\end{align*} 
for all triples $i\neq j\neq k\in \{1,2,3\}$,
\end{lemma}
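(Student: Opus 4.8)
The plan is to derive each anti-commutation relation from the maximally-violated eigenvalue equations \eqref{3qubitallpermutationsa}--\eqref{3qubitallpermutationsd}, combined with the commutation relations already established in Lemma~\ref{st_comm}. The strategy in every case is the same: take two of the stabilizer-type relations whose operator products differ by the swap of two neighbouring factors, use a known commutation to line up the operators, and then read off that applying the two operators in the two possible orders gives opposite signs on $\ket{\psi}$, i.e. $\{X,Y\}\ket{\psi}=0$.

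First I would handle $\{A_i,B_i\}\ket{\psi}=0$. From \eqref{3qubitallpermutationsb} we have $A_iB_jB_k\ket{\psi}=\ket{\psi}$, and from \eqref{3qubitallpermutationsb} with a permutation putting $B$ on slot $i$ we get a relation like $B_iA_jB_k\ket{\psi}=\ket{\psi}$. Using the commutations $[A_i,A_j]\ket{\psi}=[A_i,B_j]\ket{\psi}=0$ from Lemma~\ref{st_comm} to move operators past one another, one can bring both expressions to a common form differing only in the relative order of $A_i$ and $B_i$; since the product $A_iB_iC\ket{\psi}$ must equal $+\ket{\psi}$ while $B_iA_iC\ket{\psi}$ must equal $-\ket{\psi}$ (the sign flip coming from comparing \eqref{3qubitallpermutationsb} against \eqref{3qubitallpermutationsc}, as the $A_iA_jA_k$ relation carries the lone minus sign), we conclude $A_iB_i\ket{\psi}=-B_iA_i\ket{\psi}$. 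For $\{A_k,M_{ij}\}\ket{\psi}=0$ and $\{B_{i(j)},M_{ij}\}\ket{\psi}=0$ I would start from the four-operator relation \eqref{3qubitallpermutationsa}, $A_iA_jB_kM_{ij}\ket{\psi}=\ket{\psi}$, and a suitably permuted companion; the commutations involving $M_{ij}$ in Lemma~\ref{st_comm} let me isolate the pair whose anticommutation is sought, again producing two orderings that must carry opposite signs on $\ket{\psi}$.

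The key algebraic device throughout is that, because all these operators square to the identity ($A_i^2=B_i^2=M_{ij}^2=\I$), any stabilizer relation $O\ket{\psi}=\pm\ket{\psi}$ can be rewritten by multiplying through by one of the factors, converting a four-fold product into a shorter one; pairing two such rewritten relations is what exposes the sign discrepancy. I would organize the proof as a short list of these pairings, one per anticommutator, citing the specific permutation of \eqref{3qubitallpermutationsa}--\eqref{3qubitallpermutationsd} used and the specific commutators from Lemma~\ref{st_comm} invoked to reorder. Since the relations and the symmetries \eqref{symm1} are permutation-covariant, it suffices to verify one representative of each anticommutator type and invoke the symmetry for the rest.

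The main obstacle I anticipate is bookkeeping rather than conceptual: ensuring that every reordering step uses only a commutation that has genuinely been established on $\ket{\psi}$ (not merely assumed as an operator identity), and tracking the accumulated signs correctly so that the two orderings really do differ by a sign and not merely by a permutation that preserves it. In particular one must be careful that moving $M_{ij}$ past $A_k$ or $B_k$ is justified by Lemma~\ref{st_comm} before that same pair is claimed to anticommute elsewhere — the relations must be applied to distinct index patterns so no circularity arises. Once the correct pairings are fixed, each individual derivation is a two- or three-line manipulation.
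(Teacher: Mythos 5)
Your overall strategy---extracting the anticommutators from the eigenvalue relations \eqref{3qubitallpermutationsa}--\eqref{3qubitallpermutationsd}---is the right one and is broadly what the paper does, but two steps in your outline do not go through as written. First, you lean on the commutators of Lemma~\ref{st_comm} ``to move operators past one another'' inside three- and four-fold products. Those commutators are established only \emph{on the state}: $[X,Y]\ket{\psi}=0$ lets you swap the two operators adjacent to $\ket{\psi}$, but says nothing about $[X,Y]W\ket{\psi}$ for $W\neq\mathbbm{1}$, which is what a swap deeper inside a product requires. You flag this danger yourself but do not resolve it. The paper's proof avoids Lemma~\ref{st_comm} altogether: because each of \eqref{3qubitallpermutationsa}--\eqref{3qubitallpermutationsd} holds for \emph{every} ordering of its factors, one can always choose the ordering in which the operators to be peeled off stand leftmost, multiply through by them using $A_i^2=B_i^2=M_{ij}^2=\I$, and so build a chain of equalities each of which acts directly on $\ket{\psi}$. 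Relatedly, your ``bring both expressions to a common form'' step, taken literally, yields $A_iB_iC\ket{\psi}=-B_iA_iC\ket{\psi}$, i.e.\ $\{A_i,B_i\}$ annihilates $C\ket{\psi}$ rather than $\ket{\psi}$; to strip off $C$ you must either invert $A_iB_i$ and $B_iA_i$ on the two identities separately (which works, but then obliges you to exhibit a concrete $C$ with $A_iB_iC\ket{\psi}=\ket{\psi}$) or commute $C$ out, which runs into the same state-dependence problem.

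Second, and more concretely, your derivation of $\{A_i,B_i\}\ket{\psi}=0$ from \eqref{3qubitallpermutationsb} and \eqref{3qubitallpermutationsc} alone does not close up. Substituting $B_i\ket{\psi}=A_jB_k\ket{\psi}$ and $A_i\ket{\psi}=B_jB_k\ket{\psi}$ gives $A_iB_i\ket{\psi}=A_iA_jB_k\ket{\psi}$ and $B_iA_i\ket{\psi}=B_iB_jB_k\ket{\psi}$, and every further substitution from \eqref{3qubitallpermutationsb}--\eqref{3qubitallpermutationsc} just cycles back to the starting point. The paper's chain necessarily routes through the $M$ observables:
$A_iB_i\ket{\psi}=A_iA_jB_k\ket{\psi}=M_{ij}\ket{\psi}=M_{jk}\ket{\psi}=B_iA_kA_j\ket{\psi}=-B_iA_i\ket{\psi}$,
where the second and fourth equalities use \eqref{3qubitallpermutationsa} and the middle one uses \eqref{3qubitallpermutationsd}; the minus sign does come from \eqref{3qubitallpermutationsc} in the last step, as you guessed, but relation \eqref{3qubitallpermutationsd} is the indispensable bridge between the two orderings. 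The analogous chains through $M_{jk}\ket{\psi}=M_{ki}\ket{\psi}$ are what deliver $\{A_i,M_{jk}\}\ket{\psi}=\{B_j,M_{jk}\}\ket{\psi}=0$ as well, so the proof should be reorganized around these explicit chains anchored at $\ket{\psi}$ rather than around reorderings licensed by Lemma~\ref{st_comm}.
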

\begin{proof}
As in the case of the previous lemma we can again exploit 
the relations \eqref{3qubitallpermutationsa}--\eqref{3qubitallpermutationsd}. 
Let us first observe that they lead to the following chain of relations:
\begin{eqnarray}
     A_iB_i |\psi\ra &=& A_iA_jB_k |\psi\ra = M_{ij} |\psi\ra= M_{jk} |\psi\ra \nonumber\\
     &=& B_iA_kA_j|\psi\ra  =-B_iA_i |\psi\ra,   
\end{eqnarray}
where the first equality follows from Eq. (\ref{3qubitallpermutationsb}), 
whereas the second one from Eq. (\ref{3qubitallpermutationsa}). Then, the third and fourth identities are consequences of Eqs. (\ref{3qubitallpermutationsd}) and \eqref{3qubitallpermutationsa}, respectively. The last equality follows from Eq. (\ref{3qubitallpermutationsa}). This proves that $\{A_i,B_i\}\ket{\psi}=0$ for any $i$. 

Applying the same approach, we then have 
\begin{eqnarray}
    A_iM_{jk} |\psi\ra &=& A_iM_{k,i} |\psi\ra = B_jA_k |\psi\ra= B_i |\psi\ra \nonumber\\
    &=& M_{jk}A_kA_j |\psi\ra =-M_{jk}A_i |\psi\ra,
\end{eqnarray}
and
\begin{eqnarray}
B_jM_{jk} |\psi\ra &=& B_jM_{k,i} |\psi\ra = A_iA_k |\psi\ra= -A_j |\psi\ra \nonumber\\
&=& -M_{jk}B_iA_k |\psi\ra = -M_{jk}B_j |\psi\ra,    
\end{eqnarray}
which give the remaining anticommutation relations.
A similar proof follows for $\{B_k,M_{jk}\}\ket{\psi}=0$.
\end{proof}

Lemmas \ref{st_comm} \& \ref{stanti_comm} establish thus commutation and anticommutation relations for the observables when acting on the state maximally violating our inequality. Let us notice that the identities (\ref{3qubitallpermutationsa})--(\ref{3qubitallpermutationsd}) allow us to derive even more relations for the observables and the state:
\begin{align}\label{pairwiseequalities}
    &A_iB_i|\psi\ra=-B_iA_i|\psi\ra=M_{12} |\psi\ra=M_{23} |\psi\ra=M_{31} |\psi\ra, \nonumber \\
    &A_iM_{jk}|\psi\ra=-M_{jk}A_i|\psi\ra=B_i|\psi\ra, \nonumber\\
    & B_jM_{jk}|\psi\ra=-M_{jk}B_j|\psi\ra=-A_j|\psi\ra ,\nonumber\\
    &B_kM_{jk}|\psi\ra=-M_{jk}B_k|\psi\ra=-A_k|\psi\ra ,
\end{align}
$\forall$ $i\neq j \neq k$. 
These relations will be useful for the later proofs. 

Now, inspired by the approach of Ref. \cite{IMO+20}, we define a subspace 
\beq \label{defInv}
    V_3 &:=& \mathrm{span} \{ |\psi\ra , A_1 |\psi\ra, A_2 |\psi\ra, A_3 |\psi\ra, \nonumber \\ &&\hspace{1cm}B_1 |\psi\ra, B_2 |\psi\ra, B_3 |\psi\ra, A_1 B_1 |\psi\ra \},
\eeq
and prove the following fact for it.
\begin{lemma}\label{invsubV}
Let the maximal quantum violation of inequality \eqref{tncigautamjeba} is observed, then, $V_3$ is an invariant subspace of all the observables $A_i$, $B_j$ and $M_{ij}$ for $i,j\in\{ 1,2,3 \}$.
\end{lemma}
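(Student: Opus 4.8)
The plan is to prove invariance by a direct, finite computation. Since $V_3$ is spanned by the eight vectors $\ket{\psi}$, $A_i\ket{\psi}$, $B_j\ket{\psi}$ ($i,j\in\{1,2,3\}$) and $w:=A_1B_1\ket{\psi}$, it suffices to apply each of the nine observables $A_i,B_j,M_{ij}$ to each of these eight generators and check that the result again lies in $V_3$; in fact each image will turn out to be $\pm1$ times a single generator. I would first dispose of the action on $\ket{\psi}$ itself, which is immediate: $A_i\ket{\psi}$ and $B_j\ket{\psi}$ are generators by definition, while Eq.~\eqref{pairwiseequalities} shows that every $M_{ij}\ket{\psi}$ equals the single vector $w=A_1B_1\ket{\psi}$. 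This collapse of all three $M_{ij}\ket{\psi}$ onto the eighth generator is precisely what makes this eight-dimensional span closed, and it is the fact I would exploit repeatedly.

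Next I would treat the generators $A_i\ket{\psi}$ and $B_j\ket{\psi}$. For two observables of the same type, the involution relations $A_i^2=B_j^2=\I$ together with Eq.~\eqref{3qubitallpermutationsc} and the identity $B_jB_k\ket{\psi}=A_i\ket{\psi}$ extracted from \eqref{3qubitallpermutationsb} give at once $A_iA_j\ket{\psi}=-A_k\ket{\psi}$ and $B_iB_j\ket{\psi}=A_k\ket{\psi}$. For the mixed products I would use the commutation relations of Lemma~\ref{st_comm} to turn $B_jA_i\ket{\psi}$ into $A_iB_j\ket{\psi}$ and then \eqref{3qubitallpermutationsb} to land on a single generator, while the diagonal case is handled by the anticommutation $B_iA_i\ket{\psi}=-A_1B_1\ket{\psi}=-w$ from Lemma~\ref{stanti_comm}. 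In every case the image is $\pm1$ times a generator.

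The genuinely delicate part, and the step I expect to be the main obstacle, is the action of the $M_{ij}$ on the generators and the action of any observable on $w$. The difficulty is that $w$ is not $\ket{\psi}$, so the on-state relations of Lemmas~\ref{st_comm}--\ref{stanti_comm} may not be treated as operator identities and applied to $w$ directly. The clean way around this is to rewrite $w$ as $M_{ab}\ket{\psi}$ for a conveniently chosen pair $(a,b)$ and then reduce the resulting two-observable product acting on $\ket{\psi}$ using \eqref{3qubitallpermutationsd} or the pairwise identities \eqref{pairwiseequalities}. For instance, to compute $B_1w$ I would write $w=M_{23}\ket{\psi}$ and invoke the four-term relation $A_2A_3B_1M_{23}\ket{\psi}=\ket{\psi}$ from \eqref{3qubitallpermutationsa}, which yields $B_1M_{23}\ket{\psi}=A_3A_2\ket{\psi}=-A_1\ket{\psi}$; likewise $A_iw=B_i\ket{\psi}$, and $M_{ij}w=\ket{\psi}$ because any two distinct stabilizer-type observables multiply to the identity on $\ket{\psi}$ by \eqref{3qubitallpermutationsd}. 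The same device, trading each $M_{ij}$ appearing on the left for products of $A$'s and $B$'s via \eqref{3qubitallpermutationsa} and \eqref{pairwiseequalities}, disposes of $M_{ij}A_k\ket{\psi}$ and $M_{ij}B_k\ket{\psi}$, always choosing the $M$-representative whose indices make the relevant relation applicable.

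Finally I would organize the bookkeeping using the index-permutation symmetry that, as noted after \eqref{3qubitallpermutationsd}, is respected by all of \eqref{3qubitallpermutationsa}--\eqref{3qubitallpermutationsd} and hence by every relation derived from them. This collapses the $9\times 8$ potential cases into a few symmetry classes, so that checking one representative per class suffices. Assembling these computations shows that every observable maps each generator of $V_3$ to $\pm1$ times another generator or to $\ket{\psi}$, which is exactly the assertion that $V_3$ is invariant under all of $A_i$, $B_j$ and $M_{ij}$.
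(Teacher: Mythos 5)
Your proposal is correct and follows essentially the same route as the paper: a direct check that each of the nine observables sends each of the eight generators of $V_3$ to $\pm 1$ times another generator, using Eqs.~\eqref{3qubitallpermutationsa}--\eqref{3qubitallpermutationsd}, \eqref{pairwiseequalities} and Lemmas~\ref{st_comm}--\ref{stanti_comm}. Your explicit device of rewriting $A_1B_1\ket{\psi}$ as $M_{ab}\ket{\psi}$ so that the four-term and pairwise relations apply (rather than misusing on-state identities away from $\ket{\psi}$) is precisely the observation the paper records, and exploits implicitly, immediately after the lemma, so the two arguments coincide.
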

\begin{proof}
In order to prove this lemma we use Lemmas \ref{st_comm} \& \ref{stanti_comm} as well as the relations in Eqs. \eqref{3qubitallpermutationsa}--\eqref{3qubitallpermutationsd} and \eqref{pairwiseequalities}. 

Let us begin with the first observable $A_1$ and notice that non-trivial actions of $A_1$ on the subspace elements are: 
\begin{itemize}
\item $A_1(A_2\ket{\psi})=-A_3\ket{\psi}\in V_3$, which follows from Eq. \eqref{3qubitallpermutationsc}. This relation also implies that $A_1 (A_3\ket{\psi})\in V_3$.
\item $A_1(B_1\ket{\psi})=-M_{ij}\ket{\psi}\in V_3$, which is a consequence of the first identity in Eq. \eqref{pairwiseequalities},
\item $A_1(B_2\ket{\psi})=B_3\ket{\psi}\in V_3$, which stems from Eq. (\ref{3qubitallpermutationsb}). This relation implies also that $A_1(B_3\ket{\psi})=B_2\ket{\psi}\in V_3$.
\item $A_1A_1B_1\ket{\psi}=B_1\ket{\psi} \in V_3$.
\end{itemize}
Thus, the action of $A_1$ on every vector from $V_3$ produces a vector 
belonging to $V_3$. By symmetry, the same conclusion can be drawn for $A_2$ and $A_3$. Thus the subspace is invariant under the action of all $A_i$'s. Similarly, the invariance of subspace can be shown under the action all $B_i$'s. 

Let us finally consider the observables $M_{ij}$. The first identity in Eq. \eqref{pairwiseequalities} says that $M_{ij}\ket{\psi}=A_iB_i\ket{\psi}$ for any $i$ and $j\neq i$, and therefore the action of $M_{ij}$ on $\ket{\psi}$ or $A_1B_1\ket{\psi}$ results in vectors that manifestly belong to $V_3$. It then follows from Lemmas \ref{st_comm} and \ref{stanti_comm} that $M_{ij}$ either commutes or anti-commutes with $A_i$ and $B_i$ on the state $\ket{\psi}$ which implies that the action of $M_{ij}$ keeps the remaining vectors within $V_3$ up to a negative sign. This completes the proof.
\end{proof}

It should be noticed that due to  Eq. \eqref{pairwiseequalities}, the subspace $V_3$ stays the same if one replaces the last vector $A_1B_1\ket{\psi}$ in Eq. (\ref{defInv}) by $A_2B_2 |\psi\ra$ or $A_3B_3 |\psi\ra$ or $M_{ij}\ket{\psi}$. 

An important consequence of Lemma \ref{invsubV} is that the underlying Hilbert space $\mathcal{H}$ and all the observables 
giving rise to the maximal violation of inequality \eqref{tncigautamjeba} split as $\mathcal{H}=V_3\oplus V^{\perp}_3$, where $V_3^{\perp}$ is the orthogonal complement of $V_3$ in $\mathcal{H}$, and 
\begin{equation}\label{block}
    A_i=\hat{A}_i\oplus \bar{A}_i,\quad B_{j}=\hat{B}_j\oplus \bar{B}_j, \quad M_{ij}=\hat{M}_{ij}\oplus \bar{M}_{ij},
\end{equation}
where the hatted operators $\hat{A}_i$, $\hat{B}_i$ and $\hat{M}_{ij}$
are defined on $V_3$, that is, $\hat{A}_i=PA_iP$ etc. with $P:\mathcal{H}\to V_3$ being a projection onto $V_3$, whereas the remaining ones on $V_{3}^{\perp}$.
Since $\bar{A}_i$,$\bar{B}_j$ and $\bar{M}_{ij}$
act trivially on $V_3$, that is, $\bar{A}_iV_3=\bar{B}_jV_3=\bar{M}_{ij}V_3=0$, which means that the
observed correlations giving rise to the maximal violation of 
the inequality (\ref{tncigautamjeba}) come solely from the subspace $V_3$, 
in what follows we can restrict our attention to the 
operators $\hat{A}_i$, $\hat{B}_j$ and $\hat{M}_{ij}$. 

%, it suffices for our purpose to identify the form of the state $|\psi\ra$ %and the operators $A_i$, $B_j$ and $M_{ij}$ restricted to the subspace %$V_3$. 
%In fact, the whole Hilbert can be split as $\mathcal{H}=V_3\oplus %V^{\perp}_3$, 
%where $V^{\perp}_3$ is an orthogonal complement of $V_3$ in $\mathcal{H}$. 
%Then, the fact that $V_3$ is an invariant subspace of all
%the observables $A_i$, $B_j$ and $M_{ij}$ means that they have the 
%following block structure 
%
%\begin{equation}\label{block}
%    A_i=\hat{A}_i\oplus \bar{A}_i,\quad B_{j}=\hat{B}_j\oplus \bar{B}_j, %\quad M_{ij}=\hat{M}_{ij}\oplus \bar{M}_{ij}
%\end{equation}
%
%where $\hat{A}_i=PA_iP$ and analogously $\hat{B}_i=PB_i P$ and %$\hat{M}_{ij}=PM_{ij} P$ with $P:\mathcal{H}\to V_3$ being a projection onto %$V_3$. Since $\bar{A}_i$,$\bar{B}_j$ and $\bar{M}_{ij}$
%act trivially on $V_3$, that is %$\bar{A}_iV_3=\bar{B}_jV_3=\bar{M}_{ij}V_3=0$, which means that the
%observed correlations giving rise to the maximal violation of 
%the inequality (\ref{tncigautamjeba}) come solely from the subspace $V_3$, 
%in what follows we can restrict our attention to the 
%operators $\hat{A}_i$, $\hat{B}_j$ and $\hat{M}_{ij}$. 

First, from the fact that $A_i$, $B_j$ and $M_{ij}$ are observables obeying $A_i^2=B_j^2=M_{ij}^2=\mathbbm{1}$, it directly follows that $\hat{A}_i$, $\hat{B}_j$ and $\hat{M}_{ij}$ are observables too and satisfy 
\begin{equation}\label{dupa3}
 \hat{A}_i^2=\hat{B}_j^2=\hat{M}_{ij}^2=\mathbbm{1}_{V_3}\qquad (i,j=1,2,3),   
\end{equation}
where $\mathbbm{1}_{V_3}$ is the identity acting on $V_3$.
Second, Eq. (\ref{block}) implies that the hatted observables must obey the same commutation relations as $A_i$, $B_j$, and $M_{ij}$. We prove the following two lemmas in Appendix \ref{hatted_three_app}.
\begin{lemma}\label{hattedcommutations}
Suppose the maximal quantum violation of inequality \eqref{tncigautamjeba} is observed. Then, the operators $\hat{A}_i$, $\hat{B}_i$ and $\hat{M}_{ij}$ satisfy the following commutation relations $\forall i\neq j \neq k \in \{1,2,3\}$,
\begin{align*}
    [\hat{A}_i,\hat{A}_j]&=[\hat{A}_i,\hat{B}_j]=[\hat{B}_i,\hat{B}_j]=[\hat{A}_i,\hat{M}_{ij}]=[\hat{A}_j,\hat{M}_{ij}] \nonumber\\ 
    &=[\hat{B}_k,\hat{M}_{ij}]=[\hat{M}_{ij},\hat{M}_{jk}]=0.
\end{align*}
\end{lemma}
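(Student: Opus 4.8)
The plan is to reduce the claimed operator identities on $V_3$ to a finite collection of checks on the eight vectors spanning $V_3$. First I would exploit Lemma \ref{invsubV}: since $V_3$ is invariant under every observable, the restriction of $A_i$ to $V_3$ agrees with $\hat A_i$, i.e. $\hat A_i v = A_i v$ for all $v\in V_3$, and likewise for $\hat B_j$ and $\hat M_{ij}$. Hence, on $V_3$, the commutator $[\hat X,\hat Y]$ is simply $[X,Y]$ acting on $V_3$. Because an operator on a finite-dimensional space is determined by its action on a spanning set, each relation $[\hat X,\hat Y]=0$ is equivalent to verifying $[X,Y]w=0$ for $w$ ranging over the spanning set $\{\ket{\psi},A_1\ket{\psi},A_2\ket{\psi},A_3\ket{\psi},B_1\ket{\psi},B_2\ket{\psi},B_3\ket{\psi},A_1B_1\ket{\psi}\}$ of $V_3$.

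Second, I would assemble an explicit ``action table'' recording how each generator maps the spanning vectors into one another. Using the identities \eqref{3qubitallpermutationsa}--\eqref{3qubitallpermutationsd} together with the derived relations \eqref{pairwiseequalities}, every generator sends each spanning vector to another spanning vector up to a sign: for instance $A_i(A_j\ket{\psi})=-A_k\ket{\psi}$, $A_i(B_j\ket{\psi})=B_k\ket{\psi}$ for $i\neq j\neq k$, $A_i(B_i\ket{\psi})=M_{jk}\ket{\psi}=A_1B_1\ket{\psi}$, and $M_{ij}\ket{\psi}=A_1B_1\ket{\psi}$, while the involution property $X^2=\I$ closes off the remaining entries. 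The anticommutation relations of Lemma \ref{stanti_comm} furnish the signs whenever two operators anticommuting on $\ket{\psi}$ are composed.

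Third, with the action table in hand, each of the seven commutation relations reduces to a mechanical comparison: I would compute $XYw$ and $YXw$ for every spanning vector $w$ and confirm they coincide. The pairs $(A_i,A_j)$, $(A_i,B_j)$, $(B_i,B_j)$, $(A_i,M_{ij})$, $(A_j,M_{ij})$ and $(B_k,M_{ij})$ are each handled by chaining the corresponding table entries; the $M$--$M$ case is the simplest, since $M_{ij}\ket{\psi}=A_1B_1\ket{\psi}$ together with $M_{ij}^2=\I$ forces $M_{ij}M_{jk}$ and $M_{jk}M_{ij}$ to act identically on the common image vector.

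The step I expect to be the genuine obstacle is the lifting itself. A relation such as $[A_i,A_j]\ket{\psi}=0$ does \emph{not} by itself imply $[A_i,A_j]v=0$ for the other seven spanning vectors, so each commutator must be re-established on every generator of $V_3$ rather than on $\ket{\psi}$ alone. The delicate point is guaranteeing that every doubly-applied product, e.g. $A_iA_j(B_k\ket{\psi})$ or $M_{ij}(A_k\ket{\psi})$, collapses to a single signed spanning vector; this is precisely where the completeness of \eqref{3qubitallpermutationsa}--\eqref{3qubitallpermutationsd} and \eqref{pairwiseequalities}, and careful sign-tracking from Lemma \ref{stanti_comm}, are indispensable. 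Invariance (Lemma \ref{invsubV}) guarantees that no product escapes $V_3$, so the reduction procedure is assured to terminate and the checks are finite.
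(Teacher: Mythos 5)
Your proposal is correct and takes essentially the same route as the paper's proof in Appendix~\ref{hatted_three_app}: use the invariance of $V_3$ to identify $[\hat X,\hat Y]$ with $[X,Y]$ restricted to $V_3$, dismiss as trivial the cases where the operators appear together in a single correlator, and then verify each remaining commutator on every spanning vector by chaining the relations \eqref{3qubitallpermutationsa}--\eqref{3qubitallpermutationsd}, \eqref{pairwiseequalities} and Lemmas~\ref{st_comm}--\ref{stanti_comm}. One minor correction: the $[\hat M_{ij},\hat M_{jk}]$ case is actually the most laborious of the seven (four non-trivial spanning vectors, each needing multi-step chains), not the simplest, though your action-table procedure does handle it once the table entries for $M_{ij}$ acting on $A_l\ket{\psi}$ and $B_l\ket{\psi}$ are filled in.
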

And lastly, it turns out that the anti-commutation relations in Lemma \ref{stanti_comm} also apply to the observables on the subspace $V_3$. 
\begin{lemma}\label{hattedanticommutations}
Suppose the maximal quantum violation of the inequality (\ref{tncigautamjeba}) is observed. 
Then, the following anti-commutation relations hold $\forall i\neq j \neq k \in \{1,2,3\}$,
\begin{align*}
    \{\hat{A}_i,\hat{B}_i\}=\{\hat{A}_k,\hat{M}_{ij}\}=\{\hat{B}_i,\hat{M}_{ij}\}=\{\hat{B}_j,\hat{M}_{ij}\}=0.
\end{align*}
\end{lemma}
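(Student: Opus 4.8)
The plan is to upgrade the anti-commutation relations of Lemma \ref{stanti_comm}, which are only known to hold when acting on the single vector $\ket{\psi}$, to genuine operator identities on the whole invariant subspace $V_3$. Since $V_3$ is spanned by the eight vectors listed in \eqref{defInv}, it suffices to show that each of the four anti-commutators $\{\hat{A}_i,\hat{B}_i\}$, $\{\hat{A}_k,\hat{M}_{ij}\}$, $\{\hat{B}_i,\hat{M}_{ij}\}$, $\{\hat{B}_j,\hat{M}_{ij}\}$ annihilates every vector of this spanning set. The base case is $\ket{\psi}$ itself: because $V_3$ is invariant (Lemma \ref{invsubV}), each hatted observable agrees with its unhatted counterpart on $V_3$, so that, e.g., $\{\hat{A}_i,\hat{B}_i\}\ket{\psi}=\{A_i,B_i\}\ket{\psi}=0$, and the same holds for the three $\hat{M}_{ij}$-relations by Lemma \ref{stanti_comm}.

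For the remaining seven spanning vectors I would split into two regimes. First, whenever a generating operator $\hat{O}$ commutes, \emph{as a genuine operator on} $V_3$ by Lemma \ref{hattedcommutations}, with both factors of the anti-commutator under consideration, it commutes with the whole anti-commutator, and hence $\{\hat{X},\hat{Y}\}\,\hat{O}\ket{\psi}=\hat{O}\,\{\hat{X},\hat{Y}\}\ket{\psi}=0$ by the base case. This at once disposes of all the ``transverse'' basis vectors; for instance $\{\hat{A}_1,\hat{B}_1\}$ kills $\hat{A}_2\ket{\psi},\hat{A}_3\ket{\psi},\hat{B}_2\ket{\psi},\hat{B}_3\ket{\psi}$, since by Lemma \ref{hattedcommutations} each of $\hat{A}_2,\hat{A}_3,\hat{B}_2,\hat{B}_3$ commutes with both $\hat{A}_1$ and $\hat{B}_1$. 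For the $\hat{M}_{ij}$-relations I would first use the pairwise identities \eqref{pairwiseequalities} to re-express $\hat{M}_{ij}\ket{\psi}=\hat{A}_i\hat{B}_i\ket{\psi}$ and to compute the action of $\hat{M}_{ij}$ on the other spanning vectors, so that the same commutation-sliding argument applies to the ``transverse'' directions.

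The remaining basis vectors are those generated by an operator that fails to commute with one of the two factors; these cannot be slid across the anti-commutator and must be handled by direct computation using the involutivity $\hat{A}_i^2=\hat{B}_j^2=\hat{M}_{ij}^2=\mathbbm{1}_{V_3}$ from \eqref{dupa3} together with the anti-commutation on $\ket{\psi}$. As a representative example, for $\{\hat{A}_1,\hat{B}_1\}$ acting on $\hat{B}_1\ket{\psi}$ one gets $\hat{A}_1\hat{B}_1^2\ket{\psi}=\hat{A}_1\ket{\psi}$ for the first term, while for the second term $\hat{B}_1\hat{A}_1\hat{B}_1\ket{\psi}=-\hat{A}_1\ket{\psi}$ follows from $\hat{A}_1\hat{B}_1\ket{\psi}=-\hat{B}_1\hat{A}_1\ket{\psi}$ and $\hat{B}_1^2=\mathbbm{1}_{V_3}$, so the two terms cancel; the vector $\hat{A}_1\hat{B}_1\ket{\psi}$ is treated in the same spirit. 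The hard part, and the source of all the remaining bookkeeping, is precisely this last regime: an anti-commutator does not commute through an operator that anti-commutes with it, so these ``diagonal'' vectors resist the clean reduction of the second paragraph and demand explicit sign-tracking. This is most delicate for the $M$-containing relations, where \eqref{pairwiseequalities} interchanges the roles of the $A$, $B$ and $M$ observables and one must verify that every sign conspires to give the required cancellation.
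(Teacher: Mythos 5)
Your proposal is correct and follows essentially the same route as the paper's Appendix~\ref{hatted_three_app} proof: reduce the operator identity on $V_3$ to checking that each anti-commutator annihilates every vector in the spanning set of Eq.~\eqref{defInv}, with the base case given by Lemma~\ref{stanti_comm} and the remaining cases settled via Eqs.~\eqref{3qubitallpermutationsa}--\eqref{3qubitallpermutationsd}, \eqref{pairwiseequalities} and the involutivity \eqref{dupa3}. The only (welcome) organizational difference is that you dispose of the ``transverse'' spanning vectors by sliding the anti-commutator through generators that commute with both factors via Lemma~\ref{hattedcommutations} --- the same trick the paper uses to prune trivial cases in its proof of Lemma~\ref{hattedcommutations} --- whereas for Lemma~\ref{hattedanticommutations} the paper simply computes all non-trivial cases directly; your ``diagonal'' computations coincide with theirs.
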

With Lemmas \ref{hattedcommutations} and \ref{hattedanticommutations} at hand, we can now employ the standard
approach that has already been used in many non-locality-based self-testing 
schemes \cite{Kan16,KST+19,SSK+19,BAS+20}. Precisely, using this approach we can first infer that the dimension $d$ of the subspace $V_3$ is even. 
To see this, note that from the above anti-commutation 
relation between $\hat{A}_i$ and $\hat{B}_i$, we have
\begin{equation}
\hat{A}_i = - \hat{B}_i\hat{A}_i\hat{B}_i\quad\mathrm{or}\quad
\hat{B}_i = - \hat{A}_i\hat{B}_i\hat{A}_i,
\end{equation}
which after taking trace on both sides simplify to $\tr\hat{A}_i=\tr\hat{B}_i=0$. Similarly we can show that $\tr(\hat{M}_{ij})=0$. It thus follows that both the eigenvalues $\pm1$ of each observable $\hat{A}_i$, $\hat{B}_i$ or $\hat{M}_{ij}$ have equal multiplicities. This clearly implies that 
the dimension $d=\dim V_3$ is an even number, $d=2k$ for some $k\in\mathbbm{N}$, and thus $V_3=\mathbbm{C}^2\otimes\mathbbm{C}^k$. On the other hand, since $\dim V_3\leq 8$, one concludes that the possible values that $k$ can take are $k=2,3,4$.

The fact that $\hat{A}_1$, $\hat{B}_1$ and $\hat{M}_{12}$ are traceless together with the fact that they obey the commutation and anticommutation relations established in Lemmas \ref{hattedcommutations} and \ref{hattedanticommutations} imply that up to some unitary operation these operators are equivalent to $X \otimes \mathbbm{1}_k$, $Z \otimes \mathbbm{1}_k$ and $X\otimes Q_k$, where $\mathbbm{1}_k$ and $Q_k$ are the identity and some operator on $\mathbbm{C}^k$ with $k=2,3,4$ (see for instance appendix B in Ref. \cite{KST+19} for the proof of this statement). This observation is one of the key ideas behind the proof of the following lemma. 

\begin{lemma}\label{lemma3}
Suppose the maximal quantum violation of the inequality (\ref{tncigautamjeba}) is observed. Then, there exists a unitary operator $U$ acting on $V_3$ such that 
\begin{align}\label{eq:Lemma6}
U\hat{A}_i U^{\dagger}  = X_i , \quad
U\hat{B}_j U^{\dagger}= Z_j , \nonumber\\
U\hat{M}_{ij}U^{\dagger}=\pm X_i\otimes X_j \otimes Z_k.
\end{align}
\end{lemma}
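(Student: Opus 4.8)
The plan is to bring the hatted observables to canonical Pauli form by exploiting the fact that, by Lemmas \ref{hattedcommutations} and \ref{hattedanticommutations}, the three pairs $(\hat{A}_i,\hat{B}_i)$ consist of anticommuting involutions that pairwise commute across different indices, i.e.\ they carry exactly the algebraic structure of three independent qubits. I would take as starting point the normal form already recorded above, $\hat{A}_1=X\otimes\mathbbm{1}_k$ and $\hat{B}_1=Z\otimes\mathbbm{1}_k$ on $V_3=\mathbbm{C}^2\otimes\mathbbm{C}^k$, valid because $\hat{A}_1,\hat{B}_1$ are traceless anticommuting involutions.

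I would then peel off the remaining qubits one at a time by a commutant argument. Since $\hat{A}_2$ and $\hat{B}_2$ commute with both $\hat{A}_1$ and $\hat{B}_1$ (Lemma \ref{hattedcommutations}), and since $\hat{A}_1,\hat{B}_1$ generate the full matrix algebra on the first tensor factor, both operators must lie in the commutant $\mathbbm{1}_2\otimes M_k$, so that $\hat{A}_2=\mathbbm{1}_2\otimes A_2'$ and $\hat{B}_2=\mathbbm{1}_2\otimes B_2'$. The factors $A_2',B_2'$ are again traceless anticommuting involutions (their anticommutation coming from Lemma \ref{hattedanticommutations}), so $k$ is even and a further unitary of the form $\mathbbm{1}_2\otimes V$ — which leaves $\hat{A}_1,\hat{B}_1$ untouched — casts them into $X\otimes\mathbbm{1}_{k'}$ and $Z\otimes\mathbbm{1}_{k'}$. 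Iterating once more for $(\hat{A}_3,\hat{B}_3)$, which commute with the whole algebra generated by the first two qubits, forces them into $\mathbbm{1}_4\otimes(\cdot)$ and then into the standard Pauli form on a third qubit. At this stage $\dim V_3=8k''$ for some integer $k''$, and together with the bound $\dim V_3\le 8$ established above this forces $k''=1$. Hence $\dim V_3=8$ (ruling out the residual cases $\dim V_3=4,6$) and, up to the accumulated unitary $U$, $\hat{A}_i=X_i$ and $\hat{B}_j=Z_j$ exactly.

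It then remains to fix the $\hat{M}_{ij}$. Starting from the form $\hat{M}_{12}=X_1\otimes Q$ already available above, with $Q$ an involution acting on qubits $2,3$, I would impose the relations of Lemmas \ref{hattedcommutations} and \ref{hattedanticommutations}: $[\hat{M}_{12},\hat{A}_2]=0$ together with $\{\hat{M}_{12},\hat{B}_2\}=0$ force $Q$ to commute with $X_2$ and anticommute with $Z_2$, hence $Q=X_2\otimes Q'$; subsequently $\{\hat{M}_{12},\hat{A}_3\}=0$ with $[\hat{M}_{12},\hat{B}_3]=0$ force $Q'$ to anticommute with $X_3$ and commute with $Z_3$, so that $Q'=\pm Z_3$. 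This yields $\hat{M}_{12}=\pm X_1\otimes X_2\otimes Z_3$, and the invariance of the relations under permutation of indices then gives $\hat{M}_{ij}=\pm X_i\otimes X_j\otimes Z_k$ for every triple $i\neq j\neq k$, completing the identification.

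The step I expect to be most delicate is the dimension bookkeeping in the peeling procedure: one must argue, via the commutant structure, that commuting with the full algebra of the already-fixed qubits indeed confines an operator to the $\mathbbm{1}\otimes(\cdot)$ block, and that the successive halving $\dim V_3\mapsto\dim V_3/2$ terminates precisely at $8$ rather than at $4$ or $6$. This is the standard mechanism underlying Pauli self-testing (cf.\ Refs.\ \cite{Kan16,KST+19}), but here it has to be run three times, and one must also check that the resulting $\pm X_iX_jZ_k$ form for the $\hat{M}_{ij}$ is compatible with every relation in \eqref{pairwiseequalities}, which in particular correlates the otherwise free signs on the state $\ket{\psi}$.
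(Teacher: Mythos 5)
Your proposal is correct and takes essentially the same route as the paper: the paper fixes $\hat{A}_i,\hat{B}_j$ by invoking Lemma~3 of Ref.~\cite{SJA22} (which encapsulates exactly the commutant/peeling argument you spell out, forcing $\dim V_3=8$), and then determines $\hat{M}_{ij}$ by the same successive Pauli-basis decomposition, using commutation with $A_1,A_2,B_3$ and anticommutation with $B_1,B_2,A_3$ together with $\hat{M}_{ij}^2=\mathbbm{1}$ to pin down the sign $\pm1$. The only difference is that you make the cited lemma explicit rather than quoting it.
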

\begin{proof}
Using the result (Lemma 3) of Ref. \cite{SJA22}, we can show that there exists a unitary $U$, acting on the subspace $V_3\cong\mathbbm{C}^2\otimes\mathbbm{C}^2\otimes\mathbbm{C}^2$, such that the operators $\hat{A}_i$ and $\hat{B}_i$ have the form given in Lemma \ref{lemma3}. Then, our task is to determine the form of the operators $\hat{M}_{ij}$.

Let us then move to the observables $\hat{M}_{ij}$ and consider first the $\hat{M}_{12}$. Clearly, the "rotated" matrix $U\,\hat{M}_{12}\, U^{\dagger}$ 
can be decomposed as
\begin{equation}
    U\,\hat{M}_{12}\, U^{\dagger}=\mathbbm{1}\otimes \mathsf{M}_0+
    X\otimes \mathsf{M}_1+Y\otimes \mathsf{M}_2+Z\otimes \mathsf{M}_3,
\end{equation}
where $\mathsf{M}_i$ are some $4\times 4$ matrices. Now, it is direct to observe that the conditions that $U\,\hat{M}_{12}\, U^{\dagger}$ commutes with $A_1$ and anticommutes with $B_1$ implies that 
\begin{equation}\label{formula}
    U\,\hat{M}_{12}\, U^{\dagger}=X\otimes \mathsf{M}_1.
\end{equation}
Next, the conditions that $U\,\hat{M}_{12}\, U^{\dagger}$ commutes with $A_2$ and anticommutes with $B_2$ imply that the matrix $\mathsf{M}_1$ must obey the following relations
\begin{equation}
    [X\otimes \mathbbm{1},\mathsf{M}_1]=\{Z\otimes\mathbbm{1},\mathsf{M}_1\}=0.
\end{equation}
Let us then decompose $\mathsf{M}_1$ in the following way 
\begin{equation}
    \mathsf{M}_1=\mathbbm{1}\otimes \mathsf{N}_0+
    X\otimes \mathsf{N}_1+Y\otimes \mathsf{N}_2+Z\otimes \mathsf{N}_3,
\end{equation}
where $\mathsf{N}_i$ are some $2\times 2$ matrices. After plugging
the above representation into Eq. (\ref{formula}), one easily finds that
$\mathsf{N}_0=\mathsf{N}_2=\mathsf{N}_3=0$, and therefore
\begin{equation}
    \mathsf{M}_1=X\otimes\mathsf{N}_1.
\end{equation}
To finally fix $\mathsf{N}_1$ one can exploit the fact that 
$U\,\hat{M}_{12}\, U^{\dagger}$ commutes with $B_3$ and anticommutes with $A_3$ which means that 
\begin{equation}
    [Z,\mathsf{N}_1]=\{X,\mathsf{N}_1\}=0.
\end{equation}
The only matrix compatible with the above constraints 
is $\mathsf{N}_1=\alpha Z$ for some $\alpha\in\mathbbm{C}$. 
However, since $\hat{M}_{12}$ is a quantum observable such that 
$\hat{M}_{12}^2=\mathbbm{1}$, one has that $\alpha=\pm1$. As a result, 
\begin{equation}
      U\,\hat{M}_{12}\, U^{\dagger}=\pm X_1\otimes X_2\otimes Z_3.
\end{equation}

Using then the same methodology one can then show that 
\begin{equation}
    U\,\hat{M}_{ij}\, U^{\dagger}=\pm X_i\otimes X_j\otimes Z_k
\end{equation}
for any triple $i\neq j\neq k$, and thus under the action of the unitary operation $U$, all the observables $\hat{A}_i$, $\hat{B}_j$ and $M_{ij}$ have the form given Eq. \eqref{eq:Lemma6}, which completes the proof.

\iffalse
%
From the commutation and anti-commutation relations between $M_{ij}$ and $A_1$ and $B_1$ it follows  that we can get the following form
%
\beq
U_1\,\hat{M}_{12}\,U^{\dag}_1 &=& X \otimes Q,\nonumber \\
U_1\,\hat{M}_{23}\,U^{\dag}_1 &=& Z \otimes R,\nonumber\\
U_1\,\hat{M}_{31}\,U^{\dag}_1 &=& X \otimes S,\nonumber
\eeq
%
where $Q,R,S$ are Hermitian involutions acting on a Hilbert space of dimension $4$. Next, by using the commutation and anti-commutation relations of $M_{ij}$ with $A_2$ and $B_2$ we can further conclude that under the action of the unitary operation $U_2$ the involutions $Q$, $R$ and $S$
can be brought to the following form
%
\begin{eqnarray}
    &U_2\,Q\, U^{\dagger}_2=X\otimes Q',\nonumber\\
    &U_2\,R\, U^{\dagger}_2=X\otimes R',\nonumber\\
    &U_2\,S\, U^{\dagger}_2=Z\otimes S',\nonumber
\end{eqnarray}
%
where $Q'$, $R'$ and $S'$ are some involutions acting on $\mathbbm{C}^{2}$. And finally, by using the commutation and anti-commutation relations with $A_3$ and $B_3$ we get the following under the action of the unitary operator $U_3$,
%
\begin{eqnarray}
    &U_3\,Q'\, U_3^{\dagger}=Z,\nonumber\\
    &U_3\,R'\, U_3^{\dagger}=X,\nonumber\\
    &U_3\,S'\, U_3^{\dagger}=X.\nonumber
\end{eqnarray}
%
\fi
%
%
\end{proof}

We can now present one of the  main results of this paper.

\begin{thm}\label{Theo3qubit}
If a quantum state $|\psi \ra$ and a set of measurements $A_i$, $B_i$ and $M_{ij}$ with $i,j\in \{1,2,3\}$ maximally violate the inequality \eqref{tncigautamjeba}, then there exists a projection $P:\mathcal{H} \rightarrow V_3$ with $V_3=(\mathbbm{C}^2)^{\otimes 3}$ and a unitary $U$ acting on $V_3$ such that
\beq\label{dupawolowa}
U^\dagger\, (P\, A_i\, P^\dagger)\, U &=& X_i, \nonumber\\
U^\dagger\, (P\, B_i\, P^\dagger)\, U &=& Z_i,\nonumber \\
U^\dagger\, (P\, M_{ij}\, P^\dagger)\, U &=& X_i\otimes X_j \otimes Z_k, 
\eeq
and the state of the form,
%
%\beq
%U^\dagger (P A_i P^\dagger) U &=& \hat{A}_i, \\
%U^\dagger (P B_j P^\dagger) U &=& \hat{B}_j \\
%U (P|\psi \ra) &=& |G \ra
%\eeq
%
%where $X_i$ and $Z_i$ are $X$ and $Z$ Pauli matrices acting on qubit $i$,
%and
%
\begin{equation}\label{dupawolowa1}
    U (P|\psi \ra) = |G_3 \ra
\end{equation}
with $|G_3 \ra$ being the three-qubit complete graph state defined in Eq. (\ref{FCg}).
\end{thm}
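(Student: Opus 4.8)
The plan is to assemble the preceding results rather than to prove anything genuinely new; essentially all of the analytic work is already contained in the lemmas. First I would use the block decomposition \eqref{block} to push everything onto $V_3$: since $\ket{\psi}\in V_3$ we have $P\ket{\psi}=\ket{\psi}$, while $P A_i P^{\dagger}=\hat{A}_i$, $P B_j P^{\dagger}=\hat{B}_j$ and $P M_{ij}P^{\dagger}=\hat{M}_{ij}$ as operators on $V_3\cong(\mathbbm{C}^2)^{\otimes 3}$ (the dimension $d=8$ being already fixed in Lemma~\ref{lemma3}). Lemma~\ref{lemma3} then hands me a unitary $U$ on $V_3$ with $U\hat{A}_iU^{\dagger}=X_i$, $U\hat{B}_jU^{\dagger}=Z_j$ and $U\hat{M}_{ij}U^{\dagger}=\pm X_i\otimes X_j\otimes Z_k$. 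Hence only two things remain: to fix the residual signs in front of the $\hat{M}_{ij}$ and to identify the rotated state $\ket{\tilde{\psi}}:=U\ket{\psi}$.

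To fix the signs I would substitute the canonical forms into \eqref{3qubitallpermutationsa}. Conjugating $\hat{A}_i\hat{A}_j\hat{B}_k\hat{M}_{ij}\ket{\psi}=\ket{\psi}$ by $U$ and inserting $U^{\dagger}U$ between consecutive factors turns its left-hand side into $(X_i\otimes X_j\otimes Z_k)(\pm X_i\otimes X_j\otimes Z_k)\ket{\tilde{\psi}}=\pm\ket{\tilde{\psi}}$, where I used $(X_i\otimes X_j\otimes Z_k)^2=\mathbbm{1}$. Crucially, $\hat{A}_i\hat{A}_j\hat{B}_k$ carries exactly the same Pauli pattern as $\hat{M}_{ij}$, so the product collapses to $\pm\mathbbm{1}$. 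Since $\ket{\tilde{\psi}}\neq 0$ the sign must be $+$, and running over the three permutations of \eqref{3qubitallpermutationsa} fixes $\hat{M}_{12},\hat{M}_{23},\hat{M}_{31}$ simultaneously, giving $U\hat{M}_{ij}U^{\dagger}=X_i\otimes X_j\otimes Z_k$.

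To identify $\ket{\tilde{\psi}}$ I would use \eqref{3qubitallpermutationsb}. Conjugating $\hat{A}_i\hat{B}_j\hat{B}_k\ket{\psi}=\ket{\psi}$ by $U$ gives $X_i Z_j Z_k\ket{\tilde{\psi}}=\ket{\tilde{\psi}}$ for each distinct triple, and the three resulting conditions are precisely the stabilizer relations $G_1\ket{\tilde{\psi}}=G_2\ket{\tilde{\psi}}=G_3\ket{\tilde{\psi}}=\ket{\tilde{\psi}}$ with $G_1,G_2,G_3$ as in \eqref{Stab3}. Because three independent, mutually commuting stabilizers on three qubits single out a unique common $+1$ eigenvector (Definition~\ref{def:gs}), I conclude $\ket{\tilde{\psi}}=\ket{G_3}$ up to a global phase, which I absorb into $U$ without affecting the conjugation relations. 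Reading this back through $P$ and the adjoint convention of Definition~\ref{defselftesting} yields exactly \eqref{dupawolowa}--\eqref{dupawolowa1}.

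The real difficulty of the whole argument sits upstream — in constructing the invariant subspace $V_3$, establishing the (anti)commutation relations on it, and above all in the rigidity step of Lemma~\ref{lemma3} that upgrades those relations to concrete Pauli operators on $(\mathbbm{C}^2)^{\otimes 3}$. Given all of that, the only new care needed here is (i) checking that one and the same unitary simultaneously canonicalizes the observables and maps $\ket{\psi}$ to $\ket{G_3}$, which works precisely because the defining relations \eqref{3qubitallpermutationsa}--\eqref{3qubitallpermutationsb} are unitarily covariant, and (ii) the consistent sign-fixing above; the global-phase ambiguity is harmless since self-testing is defined only up to such a phase and up to local isometries.
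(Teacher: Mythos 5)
Your proposal is correct and follows essentially the same route as the paper's proof: invoke Lemma \ref{lemma3} for the canonical form of the observables on $V_3$, read the relations \eqref{3qubitallpermutationsb} as the stabilizer conditions \eqref{Stab3} to pin down $U(P\ket{\psi})=\ket{G_3}$, and use \eqref{3qubitallpermutationsa} to eliminate the residual $\pm$ signs on the $\hat{M}_{ij}$. The only (immaterial) difference is the order of the last two steps; your observation that the sign-fixing needs only $\ket{\tilde\psi}\neq 0$ rather than its explicit form is a slightly cleaner phrasing of the same argument.
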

\begin{proof}
A quantum state $\ket{\psi}$ that belongs to a Hilbert space $\mathcal{H}$ and a set of observables $\hat{A}_i$, $\hat{B}_i$, and $M_{ij}$ acting on $\mathcal{H}$ attain the maximal quantum violation of the inequality \eqref{tncigautamjeba} if and only if they satisfy the set of relations \eqref{3qubitallpermutationsa}--\eqref{3qubitallpermutationsd}. The algebraic relations induced by this set of equations let us prove Lemmas \ref{stanti_comm}-\ref{lemma3} which imply that there exists a projection $P:\mathcal{H} \rightarrow V_3 \cong \mathbbm{C}^8 $ and a unitary operation $U:\mathbbm{C}^8\to\mathbbm{C}^8$ for which Eqs. \eqref{eq:Lemma6} hold true.

From the above characterization of the observables, we can infer the form of the state $\ket{\psi}$. We will also fix the sign of all the observables $\hat{M}_{ij}$. First, after plugging $\hat{A}_i$ and
$\hat{B}_i$ into Eqs. (\ref{3qubitallpermutationsb}), one finds that 
the latter are the stabilizing relations of the three-qubit complete graph state and thus $U(P\ket{\psi})=\ket{G_3}$. One then observes that 
Eqs. (\ref{3qubitallpermutationsa}) can only be satisfied if the sign of 
each of the $U\,\hat{M}_{ij}\,U^{\dagger}$ observables in Eq. (\ref{eq:Lemma6}) is $+1$. Thus, the state and observables maximally violating our inequality are of the form (\ref{dupawolowa}-\ref{dupawolowa1}). 
%
%From the above characterization of the observables, we can infer the form of %the state $\ket{\psi}$. Indeed, after plugging the form of the observables 
%(\ref{eq:Lemma6}) into the conditions (\ref{3qubitallpermutationsa})--
%(\ref{3qubitallpermutationsb}) one realizes that the latter are simply the %stabilizing conditions of the graph state associated with the complete graph %of three vertices given in Eq. (\ref{FCg}) and thus $U\ket{\psi}=\ket{G_3}$. %This completes the proof.
\end{proof}

Notice that the statement in Theorem 
\ref{Theo3qubit} involves a global unitary operation, and therefore it cannot be considered only a state certification, but rather simultaneous certification of both the state and measurements giving rise to the maximal violation of our inequality.
%

%Now, we show that the above self-testing protocol for three qubit systems is robust to very small noise ($\epsilon>0$) which might come experimental imperfections or environmental influences in the following theorem  
%\begin{thm}\label{robust-main}
%    Suppose a quantum state $\ket{\psi}$ and a set of measurements $\mathcal{O}=\{A_i,B_i,M_{ij}\}$ with $i,j\in [1,3]$ in $\mathcal{H}$, violate the inequality \eqref{tncigautamjeba} maximally for some noise $\epsilon>0$. Then, there exists a projection $P:\mathcal{H} \rightarrow V_3$ (with $V_3=\mathbbm{C}^2 \otimes \mathbbm{C}^2\otimes \mathbbm{C}^2$) and a unitary $U$ acting on $V_3$ such that the measurements $\hat{\mathcal{O}}$ and the state $\ket{\hat{\psi}}$ acting on $V_3$, satisfies the following relations,
%    \begin{align}\label{robustness_statement}
%        &\lVert U^{\dagger}\hat{\mathcal{O}}^{noisy}U-\hat{\mathcal{O}}^{opt}\rVert\leq m_1\sqrt{\epsilon}+m_2\epsilon^{1/4} \nonumber\\
 %       & |\braket{\hat{\psi}|G_3}|^2\geq 1-\left(s_1\epsilon+s_2\epsilon^{3/4}+s_3\sqrt{\epsilon}\right),
 %   \end{align}
%    where $m_1,m_2,s_1,s_2,s_3$ are positive constants and $\hat{A}^{opt}$ are the optimal measurements obtained in Theorem \ref{Theo3qubit}.
%\end{thm}
%The above theorem is proved in Appendix \ref{theorem:2}.
%
\section{Self-testing of multi-qubit graph states}\label{scalable}
We will construct temporal non-contextual inequalities to certify multi-qubit quantum systems, without assuming any commutation relations between the operators. Importantly, the maximum violation of the new inequality will enable us to verify the required commutation relations. The new inequalities are such that the number of measurement operators grows polynomially with the number of qubits. However, the number of correlators grows rapidly with the number of qubits. The inequalities we propose generalize the inequalities given in Eqs. \eqref{nc_new} and \eqref{tncigautamjeba}.

\subsection{New scalable non-contextuality inequalities}

We present new scalable non-contextuality inequalities (analogous to the $n$-qubit inequalities in \cite{SJA22}) for which a temporal extension will enable self-testing the multi-qubit states and measurements. Consider a set of $n,n,\binom{n}{2}$ observables denoted by $A_i$, $B_i$ and $N_{ij}$ respectively, where $\{i,j\}\in \{1,\ldots,n\}$ and $i\neq j$. These operators are assumed to obey the following commutation relations
\begin{align}\label{commut_n}
 &[A_i,A_j]=[B_i,B_j]=[A_i,B_j]=[A_i,N_{ij}]=[A_j,N_{ij}]\nonumber \\ 
 &=[B_k,N_{ij}]=[N_{ij},N_{lm}]=0,
%& \quad (i\neq j \neq k \quad \text{and} \quad l\neq m ),
\end{align}
where $i\neq j \neq k$ and $l\neq m $. It should be noted that $A_k$ commutes with $N_{ij}$ if $k=i$ \text{or} $j$ while $B_k$ commutes with $N_{ij}$ if $k\neq i$ \text{and} $j$. Therefore, it is evident that $N_{ij}$ is the generalization of the observables $M_{ij}$ for arbitrary $n$-values. Also, note that $N_{ij}=N_{ji}$.

We will now describe our construction of scalable non-contextuality inequalities. Consider correlators of the following types,
\begin{align}\label{correlators-n}
     &\la B_1\ldots A_i\ldots A_j\ldots B_k\ldots B_nN_{ij} \ra, \quad \#\binom{n}{2} \nonumber \\
    &\la B_1\ldots B_{i-1} A_iB_{i+1}\ldots B_n \ra, \quad\quad\quad\quad \# n \nonumber \\
    &\la B_1\ldots A_i\ldots A_j \ldots A_k\ldots B_n \ra, \quad \quad\#\binom{n}{3} \nonumber \\
    &\quad\quad\quad\quad\la N_{ij}N_{jk} \ra,  \quad\quad\quad\quad\quad\quad \# 3\binom{n}{3} .
\end{align}
where $i\neq j \neq k$, and on the right-hand side we added the number of different correlators of a given type. Let us provide a few words of explanation about these correlations. The correlators of the first type in \eqref{correlators-n} are that involve two measurements $A_i$ and $A_j$ for all pairs $i\neq j$ and $n-2$ observables $B_m$ with $m\neq i\neq j$ and a single observable $N_{ij}$. The second type involves those correlators that consist of a single measurement $A_i$ and $n-1$ observables $B_j$. Then, we have correlators containing observables $A_i$, $A_j$ and $A_k$ for all triples of $i\neq j\neq k$ and $n-3$ $B_i$ observables at the remaining 'positions'.
%
%choose three such correlations $C_i,C_j,C_k$ $(i\neq j\neq k)$ %of $n$ such correlations and subtract a correlation of the %form of $\la B_1\ldots A_i\ldots A_j \ldots A_k\ldots B_n %\ra$. This correlations are consist of three different $A$ %observables and $(n-3)$ $B_m (m\neq i\neq j\neq k)$. 
%
Lastly, the correlators $\la N_{ij}N_{jk} \ra$ are self-explanatory.

Using these correlators let us now construct our non-contextuality inequality for the $n$-qubit case to be of the following form,
\begin{widetext}
\begin{align}\label{nc_new_n}
\mathcal{I}_n =& (n-2)\Big(\la A_1A_2B_3\ldots B_nN_{12} \ra+\la A_1B_2A_3\ldots B_n N_{31}\ra+\ldots  + \la B_1\ldots A_{n-1}A_nN_{n-1,n}\ra\Big)\nonumber \\ &+\alpha_n\Big(\la A_1B_2B_3B_4\ldots B_n \ra + \la B_1A_2B_3B_4\ldots B_n \ra + \la B_1B_2A_3B_4\ldots B_n \ra + \ldots  + \la B_1B_2B_3B_4\ldots A_n \ra\Big) \nonumber \\
& - \la A_1A_2A_3B_4\ldots B_n \ra - \la A_1A_2B_3A_4\ldots B_n \ra - \ldots  -
\la B_1\ldots B_{n-3}A_{n-2}A_{n-1}A_{n} \ra  +\la N_{12}N_{23}\ra\nonumber \\ &+\la N_{23}N_{31}\ra+\la N_{31}N_{12} \ra \ldots +\la N_{n-2,n-1}N_{n-1,n}\ra+\la N_{n-1,n}N_{n,n-2}\ra+\la N_{n,n-2}N_{n-2,n-1} \ra \le \eta_C^{(n)},  
\end{align}
\end{widetext}
where the constants $n-2$ and $\alpha_n=\binom{n-1}{2}$ have been added to make the numbers of correlators of different types equal (cf. inequality \eqref{nc_new}), $\eta_C^{(n)}$ stands for the maximal classical value of $\mathcal{I}_n$. To estimate the latter we can follow the pattern that one might observe after considering the expressions $\mathcal{I}_3$ in (\ref{nc_new}) and also $\mathcal{I}_4$ and $\mathcal{I}_5$ explicitly stated in Appendix \ref{4case}. Indeed, we notice that the contribution from first line of Eq. (\ref{nc_new_n}) is equal to $(n-2)\binom{n}{2}=n\alpha_n$ which is exactly equal to the contribution from second line. The last two lines contribute $(3-1)\binom{n}{3}$. Therefore, the classical upper bound of $\mathcal{I}_n$ is
\begin{equation}
    \eta_C^{(n)}=2 \alpha_n n+2\binom{n}{3}.
\end{equation}
At the same time, the maximal algebraic and at the same time the maximal quantum value of the expression $\mathcal{I}_n$ amounts to
\begin{equation}
     \eta_Q^{(n)} = 2 \alpha_n n+4\binom{n}{3}
\end{equation}
and it is strictly larger than $\eta_C^{(n)}$ for any $n$.

Note here that for $n=3$ the above inequality reproduces the one in Eq. \eqref{nc_new} by replacing $N_{ij}$ with $M_{ij}$. The inequality in Eq. \eqref{nc_new_n} is constructed in such a way so that for every three indices $i,j,k\in \{1,2,\ldots,n\}$ such that $i\neq j \neq k$, we have an inequality of the form of Eq. (\ref{nc_new}). To understand the connection, we refer enthusiastic readers to Appendix \ref{4case} where we discuss in a more detailed way the $n=4$ case.

Now, coming back to the inequality \eqref{nc_new_n}, we see that it is non-trivial for any $n$, i.e., $\eta^{(n)}_C < \eta^{(n)}_Q$. To prove this, we note that the quantum bound $\eta^{(n)}_Q$ can be attained by the following observables 
\begin{align}\label{obs_n}
&A_i = X_i, \quad
B_j = Z_j, \quad 
N_{ij}= \pm X_i\otimes X_j \bigotimes_{k\neq i,j}^n Z_k,
\end{align}
where these operators obey the commutation relations given in Eq. \eqref{commut_n}.

To use the inequality Eq. \eqref{nc_new_n} for certification of the state and measurements whenever the inequality is violated maximally, we note that the operators must also satisfy the following relations,
\begin{align}\label{nqubitallpermutations}
    & B_1\ldots A_i...\ldots A_j\ldots B_k\ldots B_nN_{ij} \ket{\psi}=\ket{\psi}\nonumber \\
    & B_1\ldots B_{i-1} A_iB_{i+1}\ldots B_n \ket{\psi}=\ket{\psi}\nonumber \\
    & B_1\ldots A_i\ldots A_j \ldots A_k\ldots B_n \ket{\psi}=-\ket{\psi}\nonumber \\
    & N_{ij}N_{jk} \ket{\psi}=\ket{\psi},
\end{align}
where these relations hold true for all possible permutations of operators. In the next section, we will use the above relations to obtain the form of operators $A_i, B_j$ and $N_{ij}$ and the state. In fact, we will obtain the compatibility relations and  the relations in \eqref{nqubitallpermutations} from the maximum violation of the temporal version of the inequality \eqref{tncigautamjeba_n}. 

Also, note here that, we do not use the extant $n$-qubit non-contextuality inequality from Ref. \cite{SJA22} as its temporal extension does not certify all required commutations relations.

\subsection{Self-testing with temporal extension of n-qubit non-contextuality inequality}
Using the similar prescription for $n=3$, we construct the temporal version of the $n$-qubit non-contextuality inequality \eqref{nc_new_n}, by replacing the sequential expectation value terms with a sum of sequential correlations \eqref{nth}. The exact form of our $n$-qubit temporal inequality is given by 
\begin{widetext}
\begin{align}\label{tncigautamjeba_n}
    T_n =& (n-2)\Big(\la A_1A_2B_3\ldots B_nN_{12} \ra_{\pi}+\la A_1B_2A_3\ldots B_n N_{31}\ra_{\pi} + \ldots  +\la B_1\ldots A_{n-1}A_nN_{n-1,n}\ra_{\pi}\Big)\nonumber \\ &+\alpha_n\Big(\la A_1B_2B_3B_4\ldots B_n \ra_{\pi} + \la B_1A_2B_3B_4\ldots B_n \ra_{\pi} + \la B_1B_2A_3B_4\ldots B_n \ra_{\pi} + \ldots  + \la B_1B_2B_3B_4\ldots A_n \ra_{\pi}\Big) \nonumber \\
& - \la A_1A_2A_3B_4\ldots B_n \ra_{\pi} - \la A_1A_2B_3A_4\ldots B_n \ra_{\pi} - \ldots  -
\la B_1\ldots B_{n-3}A_{n-2}A_{n-1}A_{n} \ra_{\pi}\nonumber\\
&+\la N_{12}N_{23}\ra_{\pi}+\la N_{23}N_{31}\ra_{\pi}+\la N_{31}N_{12} \ra_{\pi} \ldots +\la N_{n-2,n-1}N_{n-1,n}\ra_{\pi}+\la N_{n-1,n}N_{n,n-2}\ra_{\pi}+\la N_{n,n-2}N_{n-2,n-1} \ra_{\pi} \nonumber \\ &\le \eta_C^{(n)}=2 \alpha_n n+2\binom{n}{3} < \eta_Q^{(n)} = 2 \alpha_n n+4\binom{n}{3}.
\end{align}
\end{widetext}

We will now use the above inequality for certification of the complete graph states of any number of qubits, and the measurements $A_i$ ,$B_i$ and $N_{ij}$ given in Eq. (\ref{obs_n}). 
From the maximum violation of the above inequality, we have all the temporal expectation value terms in $T_n$ taking the value +1, except the expectation value terms of the form $\la B_1\ldots  A_i\ldots A_j \ldots A_k \ldots B_n \ra_{\pi}$ which takes the value -1. In other words, the maximal violations give us the relations in \eqref{nqubitallpermutations}. Using this we can prove the following lemma,
\begin{lemma} \label{st_comm_n}
Suppose the maximal quantum violation of the inequality (\ref{tncigautamjeba_n}) is observed. Then, the operators $A_i$, $B_i$ and $N_{ij}$ obey the following commutation relations,
\begin{align}
[A_i,A_j]\ket{\psi}=[A_i,B_j]\ket{\psi}=[B_i,B_j]\ket{\psi}=[N_{ij},N_{jk}]\ket{\psi}\nonumber \\ =[A_i,N_{ij}]\ket{\psi}=[A_j,N_{ij}]\ket{\psi}=[B_k,N_{ij}]\ket{\psi}=0,\nonumber
\end{align}
where $i\neq j \neq k$.
\end{lemma}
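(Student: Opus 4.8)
\textbf{Proof proposal for Lemma \ref{st_comm_n}.}

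The plan is to mirror exactly the strategy used for the three-qubit case in Lemma \ref{st_comm}, using the four families of state-relations in \eqref{nqubitallpermutations} (valid for all permutations of operators) as the sole input. The key observation throughout is that each commutator $[P,Q]\ket{\psi}$ is shown to vanish by exhibiting two of the relations in \eqref{nqubitallpermutations} whose operator strings differ only by swapping the adjacent pair $P,Q$; since both strings send $\ket{\psi}$ to the same vector ($\pm\ket{\psi}$), the difference $PQ\ket{\psi}-QP\ket{\psi}$ must be zero. I would treat the seven commutators in turn, grouping them by which of the four relation-families supplies the pair of equal expressions.

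First I would handle $[A_i,A_j]\ket{\psi}=0$. Using the third relation in \eqref{nqubitallpermutations}, $B_1\ldots A_i\ldots A_j\ldots A_k\ldots B_n\ket{\psi}=-\ket{\psi}$ holds for \emph{every} permutation of the three $A$-labels among the fixed positions; in particular the strings with $\ldots A_iA_j\ldots$ and $\ldots A_jA_i\ldots$ (the remaining operators being identical $B$'s that are harmlessly carried along) both equal $-\ket{\psi}$, so acting on $\ket{\psi}$ the product $A_iA_j$ equals $A_jA_i$ and the commutator vanishes. Next, $[A_i,B_j]\ket{\psi}=0$ and $[B_i,B_j]\ket{\psi}=0$ follow from the second family $B_1\ldots A_i\ldots B_n\ket{\psi}=\ket{\psi}$: permuting the single $A_i$ against a neighboring $B_j$, or permuting two of the $B$'s, again produces two equal expressions. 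The relation $[N_{ij},N_{jk}]\ket{\psi}=0$ comes directly from the fourth family $N_{ij}N_{jk}\ket{\psi}=\ket{\psi}$ together with its permutation $N_{jk}N_{ij}\ket{\psi}=\ket{\psi}$.

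Finally, the three commutators involving $N_{ij}$ all come from the first family $B_1\ldots A_i\ldots A_j\ldots B_k\ldots B_nN_{ij}\ket{\psi}=\ket{\psi}$, which holds for all orderings of the operators appearing in it. To get $[A_i,N_{ij}]\ket{\psi}=0$ I would compare the ordering where $N_{ij}$ sits immediately to the right of $A_i$ with the ordering where those two are transposed; both equal $\ket{\psi}$, giving $A_iN_{ij}\ket{\psi}=N_{ij}A_i\ket{\psi}$. The same argument with $A_j$ in place of $A_i$, and with any $B_k$ ($k\neq i,j$) in place of $A_i$, yields $[A_j,N_{ij}]\ket{\psi}=0$ and $[B_k,N_{ij}]\ket{\psi}=0$ respectively. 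The only point requiring a little care—and the step I expect to be the mild obstacle—is verifying that the permutation-closure asserted in \eqref{nqubitallpermutations} genuinely contains both orderings needed for each swap, i.e.\ that the construction $\la\cdot\ra_\pi$ really enforces every transposition of adjacent operators and not merely some permutations; this is exactly the design property of $\la\cdot\ra_\pi$ established in Sec.~\ref{temp_cont_scenario}, so once that is invoked the remaining arguments are purely mechanical and identical in structure to the $n=3$ proof.
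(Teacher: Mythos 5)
Your proposal is correct and follows essentially the same route as the paper: for each commutator it compares two permutations of one of the relations in \eqref{nqubitallpermutations} that differ only by transposing the relevant pair at the tail of the operator string, cancels the common (involutory, hence invertible) prefix of $A$'s and $B$'s, and matches each commutator to the same relation-family the paper uses, with $[A_j,N_{ij}]$ handled via $N_{ij}=N_{ji}$ exactly as in the paper's closing remark. The permutation-closure you flag as the one delicate point is precisely what the paper guarantees by construction of $\la\cdot\ra_\pi$, so no gap remains.
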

\begin{proof}The proof is analogous to the one of Lemma 
Firstly, we define $\mathcal{B}_{ijk}=\prod_{\substack{m\neq i,j,k}}^n B_{m}$ and then, by using the relations in \eqref{nqubitallpermutations}, we can get the following relations which give the above commutators;  
\begin{align*}
    \mathcal{B}_{ijk}A_kA_iA_j\ket{\psi}=\mathcal{B}_{ijk}A_kA_jA_i\ket{\psi}=-\ket{\psi},\\
    \mathcal{B}_{ijk}B_kA_iB_j\ket{\psi}=\mathcal{B}_{ijk}B_kB_jA_i\ket{\psi}=\ket{\psi},\\
    \mathcal{B}_{ijk}A_kB_iB_j\ket{\psi}=\mathcal{B}_{ijk}A_kB_jB_i\ket{\psi}=\ket{\psi} ,\\
    \mathcal{B}_{ijk}A_jB_kA_iN_{ij}\ket{\psi}=\mathcal{B}_{ijk}A_jB_kN_{ij}A_i\ket{\psi}=\ket{\psi}, \\
    \mathcal{B}_{ijk} A_iA_jB_kN_{ij}\ket{\psi}=\mathcal{B}_{ijk} A_iA_jN_{ij}B_k\ket{\psi}=\ket{\psi},\\
    {\rm and}\quad N_{ij}N_{jk}\ket{\psi}=N_{jk}N_{ij}\ket{\psi}=\ket{\psi}.
\end{align*}
Note that the proof of $[A_i,N_{ij}]=0$ is the same as the proof of $[A_j,N_{ij}]=0$ because of the symmetry of $N_{ij}$ with respect to its indices.
\end{proof}
Moreover, we also have anti-commutation relations in the form of the following lemma.
\begin{lemma}\label{stanti_comm_n}
Suppose the maximal quantum violation of the inequality (\ref{tncigautamjeba_n}) is observed. Then, the operators $A_i$, $B_i$ and $N_{ij}$ obey the following anti-commutation relations,
\begin{align*}
\{A_i,B_i\} |\psi\ra=\{A_i,N_{jk}\} |\psi\ra=\{B_j,N_{jk}\} |\psi\ra \\=\{B_k,N_{jk}\} |\psi\ra=0,\:\: {\rm where}\:\: i\neq j \neq k.
\end{align*} 
\end{lemma}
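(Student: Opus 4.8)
The plan is to imitate the proof of Lemma~\ref{stanti_comm} almost verbatim, the only new ingredient being the bookkeeping of the ``spectator'' factors $B_m$ with $m\notin\{i,j,k\}$. Accordingly I fix three distinct indices $i\neq j\neq k$ and reuse the shorthand $\mathcal{B}_{ijk}=\prod_{m\neq i,j,k}^n B_m$ already introduced in the proof of Lemma~\ref{st_comm_n}. Throughout, the two tools are (i) the four families of relations \eqref{nqubitallpermutations}, which hold for \emph{every} ordering of the operators inside a given correlator, and (ii) the commutation relations of Lemma~\ref{st_comm_n}, valid on $\ket{\psi}$. Together these let me peel off involutions one at a time and reorder factors while staying anchored at $\ket{\psi}$, exactly as in the $n=3$ case.

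For the first anticommutator I would build the chain
\begin{align*}
A_iB_i\ket{\psi} &= \mathcal{B}_{ijk}A_iA_jB_k\ket{\psi} = N_{ij}\ket{\psi} = N_{jk}\ket{\psi}\\
&= \mathcal{B}_{ijk}B_iA_jA_k\ket{\psi} = -B_iA_i\ket{\psi},
\end{align*}
where the first equality uses the single-$A$ relation (second line of \eqref{nqubitallpermutations}) to replace $B_i\ket{\psi}$ by $\mathcal{B}_{ijk}A_jB_k\ket{\psi}$, the second and fourth use the $N$-type relation (first line), the third uses $N_{ij}N_{jk}\ket{\psi}=\ket{\psi}$, and the last uses the triple-$A$ relation $\mathcal{B}_{ijk}A_jA_k\ket{\psi}=-A_i\ket{\psi}$. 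This yields $\{A_i,B_i\}\ket{\psi}=0$. The mixed $A$--$N$ anticommutator (for $i\notin\{j,k\}$) follows from the parallel chain
\begin{align*}
A_iN_{jk}\ket{\psi} &= A_iN_{ki}\ket{\psi} = \mathcal{B}_{ijk}A_kB_j\ket{\psi} = B_i\ket{\psi}\\
&= N_{jk}A_jA_k\mathcal{B}_{ijk}\ket{\psi} = -N_{jk}A_i\ket{\psi},
\end{align*}
in which the first step is $N_{jk}\ket{\psi}=N_{ki}\ket{\psi}$ and the remaining equalities use, in order, the $N$-type, single-$A$, $N$-type and triple-$A$ relations.

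The two remaining relations are of the $B$--$N$ type. For $\{B_j,N_{jk}\}$ I would use the analogous chain
\begin{align*}
B_jN_{jk}\ket{\psi} &= B_jN_{ki}\ket{\psi} = \mathcal{B}_{ijk}A_iA_k\ket{\psi} = -A_j\ket{\psi}\\
&= -N_{jk}A_kB_i\mathcal{B}_{ijk}\ket{\psi} = -N_{jk}B_j\ket{\psi},
\end{align*}
built from the same four relations with $B_j$ now appearing as a spectator in the $N$-type relation for the pair $(k,i)$. The case $\{B_k,N_{jk}\}$ then follows immediately from the symmetry $N_{jk}=N_{kj}$ under $j\leftrightarrow k$, exactly as noted for $\{B_k,M_{jk}\}$ in the $n=3$ proof. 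Assembling the four chains completes the lemma.

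I expect the only genuine difficulty to be organizational rather than conceptual: keeping track of the spectator product $\mathcal{B}_{ijk}$ as it is moved past the active operators $A$, $B$ and $N$, and verifying that every such transposition is legitimate. The crucial points are that each $B_m$ with $m\notin\{i,j,k\}$ commutes with all of $A_i,A_j,A_k$, with $N_{ij},N_{jk},N_{ki}$, and with the other $B$'s \emph{on} $\ket{\psi}$ by Lemma~\ref{st_comm_n}, and that the ``all permutations'' clause of \eqref{nqubitallpermutations} lets me choose whichever ordering makes a given factor peel cleanly. Once these reorderings are granted, each link above reduces to exactly one application of a relation in \eqref{nqubitallpermutations} together with an involution identity $O^2=\I$, so the argument is a faithful padded copy of Lemma~\ref{stanti_comm}.
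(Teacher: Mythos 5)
Your proposal is correct and follows essentially the same route as the paper's proof: the three chains of equalities you write (for $\{A_i,B_i\}$, $\{A_i,N_{jk}\}$ and $\{B_j,N_{jk}\}$, each anchored at $\ket{\psi}$ and passing through $N_{ij}\ket{\psi}=N_{jk}\ket{\psi}$ or $B_i\ket{\psi}$, $-A_j\ket{\psi}$ with the spectator product $\mathcal{B}_{ijk}$ carried along) are, link for link, the ones in the paper, and the paper likewise disposes of $\{B_k,N_{jk}\}$ by the symmetry $N_{jk}=N_{kj}$.
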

\begin{proof}
Again, we will be using the relations from \eqref{nqubitallpermutations} for the proof. We have the following relations, 
\begin{align*}
    A_iB_i |\psi\ra = A_iB_kA_j\mathcal{B}_{ijk} |\psi\ra = N_{ij} |\psi\ra= N_{jk} |\psi\ra \\
    =B_iA_kA_j\mathcal{B}_{ijk}|\psi\ra=
 -B_iA_i |\psi\ra,\\
 A_iN_{jk} |\psi\ra = A_iN_{i,k} |\psi\ra = A_kB_j\mathcal{B}_{ijk} |\psi\ra 
 = B_i |\psi\ra \\=N_{jk}A_kA_j \mathcal{B}_{ijk}|\psi\ra=-N_{jk}A_i |\psi\ra,\\
 B_jN_{jk} |\psi\ra = B_jN_{i,k} |\psi\ra = A_kA_i \mathcal{B}_{ijk}|\psi\ra 
 = -A_j |\psi\ra  \\=-N_{jk}A_kB_i \mathcal{B}_{ijk}|\psi\ra=-N_{jk}B_j |\psi\ra.
\end{align*}
 which gives the required anti-commutation relations. Note that the proof for $\{B_j,N_{jk}\} |\psi\ra=0$ is similar to $\{B_k,N_{jk}\} |\psi\ra=0$.
\end{proof}
Lemmas \ref{st_comm_n} and \ref{stanti_comm_n} thus establish commutation and anticommutation relations between observables on the state maximally violating our inequality. What is more, together with equations they enable and the relations in  \eqref{nqubitallpermutations}, we can also claim the following relations,
\begin{align}\label{pairwiseequalities_n}
    &A_iB_i|\psi\ra=-B_iA_i|\psi\ra=A_jB_j|\psi\ra=-B_jA_j|\psi\ra\nonumber =N_{ij} |\psi\ra\\ &\quad =N_{jk} |\psi\ra=N_{i,k} |\psi\ra=N_{k,l} |\psi\ra, \:\:\forall i\neq j \neq k \neq l.\nonumber \\
    &A_iN_{jk}|\psi\ra=-N_{jk}A_i|\psi\ra=B_i|\psi\ra, \hspace{2mm}\forall i\neq j \neq k.\nonumber\\
    &B_jN_{jk}|\psi\ra=-N_{jk}B_j|\psi\ra=-A_j|\psi\ra, \hspace{2mm}\forall j \neq k.\nonumber\\
    &B_kN_{jk}|\psi\ra=-N_{jk}B_k|\psi\ra=-A_k|\psi\ra, \hspace{2mm}\forall j \neq k.
\end{align}
Now we will use these relations to show that the following subspace is invariant under the action of the operators $A_i$, $B_i$ and $N_{ij}$
\begin{eqnarray}
    V_{n}&=&\mathrm{span}\{\ket{\psi},B_{i_1}\ket{\psi},B_{i_1}B_{i_2}\ket{\psi},\ldots, B_{i_1}B_{i_2}\ldots B_{i_{k}}\ket{\psi},\nonumber\\
    &&\hspace{0.3cm}\ldots,B_{i_1}B_{i_2}\ldots B_{i_{n-1}}\ket{\psi},B_{1}\ldots B_{n}\ket{\psi}\},
\end{eqnarray}
where $i_j \in [1,n]$ for any $j$ and $i_1<\ldots <i_{k-1}<i_k<\ldots <i_{n-1}$. In the simplest cases of $n=3$ and $n=4$, the above construction gives 
\begin{align*}
    V_3=\mathrm{span}\{\ket{\psi},B_i\ket{\psi},B_{i}B_{j}\ket{\psi},B_1B_2B_3\ket{\psi}\},\:\: {\rm and}\\
     V_4=\mathrm{span}\{\ket{\psi},B_i\ket{\psi},B_iB_j\ket{\psi},B_iB_jB_k\ket{\psi},\prod_{i=1}^4 B_i\ket{\psi}\},
\end{align*}
with $i\neq j\neq k\in [1,4]$. In particular, 
$V_3$ is exactly the same as the one defined in 
Eq. \eqref{defInv} by noticing that due to Eq. \eqref{3qubitallpermutationsb},
$B_iB_j\ket{\psi}=A_k\ket{\psi}$ with $i\neq j\neq k$
and $B_1B_2B_3\ket{\psi}=-A_1B_1\ket{\psi}$.

This is the same subspace as was found in \cite{IMO+20} for the $n$-qubit scenario. Hence, the number of vectors spanning $V_n$ is $2^n$ which implies $\dim V_n\leq 2^n$. In fact, we show later that $\dim V_n=2^n$.

Our aim now is to identify the form of the operators $A_i$, $B_j$, and $N_{ij}$ projected onto the subspace $V_n$. The method of the proof of self-testing is similar to that used in the case $n=3$. In what follows, we prove the following lemma in Appendix \ref{nqubit_app},

\begin{lemma}\label{invsubVn}
Suppose the maximal quantum violation of the inequality (\ref{tncigautamjeba_n}) is observed. Then, the subspace $V_n$ of $\mathcal{H}_n$ is invariant under the action of the operators  $A_i$, $B_j$ and $N_{ij}$ for $i,j=[1,n ]$.
\end{lemma}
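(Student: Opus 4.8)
The plan is to prove invariance of $V_n$ under each of the three families of operators $A_i$, $B_j$, $N_{ij}$ by checking their action on a generating set of $V_n$, exactly mirroring the structure of the $n=3$ proof in Lemma \ref{invsubV}. The generating vectors are products of distinct $B$'s applied to $\ket{\psi}$, so the natural strategy is to push a given operator through such a product using the commutation and anticommutation relations established in Lemmas \ref{st_comm_n} and \ref{stanti_comm_n} together with the derived identities \eqref{pairwiseequalities_n}, and verify that the result is again (up to a sign) a product of distinct $B$'s acting on $\ket{\psi}$, hence an element of $V_n$.

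First I would treat the $B_j$'s, which are the easiest case. Acting with $B_j$ on a generator $B_{i_1}\ldots B_{i_k}\ket{\psi}$ I commute $B_j$ inward using $[B_i,B_j]\ket{\psi}=0$ and the fact that the $B$'s pairwise commute on $\ket{\psi}$; if the index $j$ is absent from the product it simply extends it to a longer admissible product, while if $j$ already appears then $B_j^2=\mathbbm{1}$ collapses it to a shorter product. In either case the image stays in $V_n$. Next I would handle the $A_i$'s: here the key is the first line of \eqref{pairwiseequalities_n}, namely $B_iB_j\ket{\psi}=A_k\ket{\psi}$-type relations and $A_iB_i\ket{\psi}=N_{ij}\ket{\psi}$, which let me rewrite an $A_i$ acting on a $B$-string as another $B$-string. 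Concretely, using $A_i\ket{\psi}=(\prod_{m\neq i}B_m)\ket{\psi}$ up to sign (read off from the relation $B_1\ldots B_{i-1}A_iB_{i+1}\ldots B_n\ket{\psi}=\ket{\psi}$), I can convert the action of $A_i$ on any generator into a $B$-string, and then commute/square as before. Finally for $N_{ij}$ I would use $N_{ij}\ket{\psi}=A_iB_i\ket{\psi}$ from \eqref{pairwiseequalities_n}, reducing the action of $N_{ij}$ on $\ket{\psi}$ and on the generators to the already-handled actions of $A$'s and $B$'s, invoking Lemmas \ref{st_comm_n} and \ref{stanti_comm_n} to move $N_{ij}$ through the $B$-string picking up at most a sign.

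The main obstacle will be bookkeeping the signs and the index constraints uniformly across all generators of $V_n$, rather than any single deep step. In the $n=3$ case one can enumerate the finitely many vectors by hand, but here the generators are indexed by arbitrary subsets of $\{1,\ldots,n\}$, so I must argue for a generic product $B_{i_1}\ldots B_{i_k}\ket{\psi}$ with $k$ ranging over all values. I expect the cleanest formulation is an inductive or parity argument: show that for each operator $O\in\{A_i,B_j,N_{ij}\}$ and each generator $g$, the commutator structure guarantees $O g\ket{\psi}=\pm g'\ket{\psi}$ where $g'$ is again an admissible $B$-string, the sign being determined by how many anticommutations were performed. Since every relation needed is already localized on $\ket{\psi}$ via \eqref{pairwiseequalities_n} and Lemmas \ref{st_comm_n}--\ref{stanti_comm_n}, no new algebraic input is required; the work is to organize the case distinctions (whether the index of $O$ lies inside or outside the support of $g$) so that all cases land back in $V_n$. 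This is precisely why the detailed verification is deferred to Appendix \ref{nqubit_app}.
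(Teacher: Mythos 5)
Your overall strategy coincides with the paper's: the appendix proof of Lemma \ref{invsubVn} also proceeds by acting with each of $A_m$, $B_m$, $N_{lm}$ on a generic generator $B_{i_1}\cdots B_{i_k}\ket{\psi}$, splitting into cases according to whether the operator's indices lie in the support of the $B$-string, and showing the result is again an admissible string via the identities $\prod_{i_j\neq k}B_{i_j}\ket{\psi}=A_k\ket{\psi}$ and $B_1\cdots B_n\ket{\psi}=-N_{ij}\ket{\psi}$, which are the same identities you invoke.

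There is, however, one step in your plan that would fail as stated. You propose to ``commute $B_j$ inward using $[B_i,B_j]\ket{\psi}=0$'' and, for $N_{ij}$, to invoke Lemmas \ref{st_comm_n} and \ref{stanti_comm_n} ``to move $N_{ij}$ through the $B$-string picking up at most a sign.'' Those lemmas only assert that the (anti)commutators annihilate $\ket{\psi}$ itself; they say nothing about, e.g., $B_iB_jB_k\ket{\psi}$ versus $B_jB_iB_k\ket{\psi}$, so they do not license reordering operators inside a partial product. The paper's proof does not use the state-localized (anti)commutators for this purpose. Instead it uses the fact that each relation in \eqref{nqubitallpermutations} holds for \emph{every} ordering of the operators in that correlator: from $O_{\sigma(1)}\cdots O_{\sigma(n)}\ket{\psi}=\ket{\psi}$ for all $\sigma$, repeatedly multiplying from the left by involutions gives $O_{\sigma(k+1)}\cdots O_{\sigma(n)}\ket{\psi}=O_{\sigma(k)}\cdots O_{\sigma(1)}\ket{\psi}$, and varying $\sigma$ over orderings of a fixed index set shows that any sub-product of operators drawn from a single correlator, applied to $\ket{\psi}$, is independent of the order and equals the reversed complementary sub-product. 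That permutation-invariance of partial products (not the pairwise commutators on $\ket{\psi}$) is the engine that lets one absorb or cancel a $B_m$, convert an $A_m$ into the complementary $B$-string, and pull an $N_{lm}$ through. Your argument goes through once you replace the appeal to Lemmas \ref{st_comm_n}--\ref{stanti_comm_n} by this stronger consequence of \eqref{nqubitallpermutations}; as written, the justification for the reordering step is missing.
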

%

%\sBox{\cj{The above proof may be written elaborately as it seems unclear to understand.}}

Now, lemma \ref{invsubVn} implies that $A_i$, $B_j$ and $N_{ij}$ can be represented as a direct sum of two blocks, 
\begin{equation}
A_i = \hat{A}_i \oplus \bar{A}_{i},\qquad  B_j = \hat{B}_j \oplus \bar{B}_{j} \qquad N_{ij}=\hat{N}_{ij}\oplus \bar{N}_{ij},\nonumber
\end{equation}
where $\hat{A}_i$, $\hat{B}_i$ and $\hat{N}_{ij}$ are projections of $A_i$ , $B_i$ and $N_{ij}$ onto $V_n$, 
that is $\hat{A}_i=P_n A_i P_n$, $\hat{B}_i=P_n B_i P_n$ and $\hat{N}_{ij}=P_n N_{ij} P_n$ with $P_n:\mathcal{H}_n\to V_n$
denoting the projector onto $V_n$. On the other hand, $\bar{A}_{i}$, $\bar{B}_{j}$ and $\bar{N}_{ij}$
are defined on the orthogonal complement of $V_n$ in the Hilbert space $\mathcal{H}_n$ that we denote
$V_n^{\perp}$; clearly, $\mathcal{H}_n=V_n\oplus V_n^{\perp}$. 

Importantly, $\bar{A}_i$, $\bar B_i$ and $\bar{N}_{ij}$ act trivially on the subspace $V_n$, in particular $\bar{A}_i\ket{\psi}=B_i'\ket{\psi}=\bar{N}_{ij}\ket{\psi}=0$, and consequently
it is enough for our purposes  to characterize $\hat{A}_i$, $\hat{B}_j$ and $\hat{N}_{ij}$. Our first step to achieving this goal is to show that the following commutation and anti-commutation relations between the hatted operators hold. We prove these relations in Appendix \ref{nqubit_app}. 
\begin{lemma}\label{hattedcommutations_n}
Suppose the maximal quantum violation of the inequality (\ref{tncigautamjeba}) is observed, then the following holds, 
\begin{align*}
    [\hat{A}_i,\hat{A}_j]=[\hat{A}_i,\hat{B}_j]=[\hat{B}_i,\hat{B}_j]=[\hat{B}_k,\hat{N}_{ij}]=[\hat{A}_i,\hat{N}_{ij}]\\=[\hat{A}_j,\hat{N}_{ij}]=[\hat{N}_{ij},\hat{N}_{jk}]=0,\:\: \forall i\neq j\neq k\in [ 1, n].
\end{align*}
%$\forall$ $i, j, k\in [ 1, n]$ s.t. $i\neq j\neq k$.
\end{lemma}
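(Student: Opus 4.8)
The plan is to upgrade the \emph{state-level} relations of Lemmas~\ref{st_comm_n} and \ref{stanti_comm_n} to genuine \emph{operator identities} on the invariant subspace $V_n$. The starting observation is that, since $V_n$ is invariant under every generator (Lemma~\ref{invsubVn}), each hatted operator acts on $V_n$ exactly as the original one: for $v\in V_n$ and $O\in\{A_i,B_j,N_{ij}\}$ one has $\hat O v = P_n O v = O v$. Hence for any two generators $O,O'$ and any $v\in V_n$ we have $[\hat O,\hat O']v=[O,O']v$, so it is enough to show that each of the seven commutators in the statement annihilates every vector of the spanning family $\{B_S\ket{\psi}:S\subseteq[1,n]\}$ of $V_n$, where $B_S:=\prod_{m\in S}B_m$.

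First I would build the explicit ``multiplication table'' describing how each generator acts on a basis vector $B_S\ket{\psi}$. For the $B$'s this is immediate, $B_m B_S\ket{\psi}=B_{S\triangle\{m\}}\ket{\psi}$ (symmetric difference), using $B_m^2=\I$. For $A_i$ and $N_{ij}$ the table is extracted from the relations \eqref{nqubitallpermutations}--\eqref{pairwiseequalities_n} together with the elementary palindrome identity $(B_{i_k}\cdots B_{i_1})(B_{i_1}\cdots B_{i_k})=\I$, which holds as a genuine operator identity and lets one peel $B$'s off $\ket{\psi}$ without invoking any commutation. Chaining these, one obtains $A_i B_S\ket{\psi}=(-1)^{[i\in S]}B_{S\,\triangle\,([1,n]\setminus\{i\})}\ket{\psi}$ and a completely analogous signed index-set update for $N_{ij}B_S\ket{\psi}$. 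This step is the main obstacle: each generator maps a basis vector to $\pm$ another basis vector, and the difficulty is purely the sign bookkeeping --- in particular $N_{ij}$ anticommutes on $\ket{\psi}$ with several of the $B$'s, so one must chain the relations of \eqref{pairwiseequalities_n} carefully and track every sign picked up along the way.

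With the table in hand, each commutation relation is verified by applying both orderings to a generic $B_S\ket{\psi}$ and comparing. Because every generator acts as a signed update of the index set $S$, any two of them automatically commute up to a sign, and the content of the lemma is that in each listed case the two accumulated signs coincide. For $[\hat A_i,\hat A_j]$ this is transparent, since the two index-set updates combine in the same way irrespective of order (symmetric difference is associative and commutative) and the sign factors $(-1)^{[i\in S]}$, $(-1)^{[j\in S]}$ multiply identically in both orders; the $B$--$B$ case is trivial, and the mixed $A$--$B$, $A$--$N$, $B$--$N$, and $N$--$N$ cases follow in the same fashion once the corresponding table entries are substituted, the matching of signs reproducing precisely the commutation pattern \eqref{commut_n}. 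Since $V_n$ is spanned by the $B_S\ket{\psi}$, vanishing of each commutator on this family yields the operator identities $[\hat A_i,\hat A_j]=\cdots=[\hat N_{ij},\hat N_{jk}]=0$ on $V_n$, completing the proof; the whole argument runs parallel to the $n=3$ case treated in Lemmas~\ref{hattedcommutations} and \ref{hattedanticommutations}.
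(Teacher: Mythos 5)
Your proposal is correct and follows essentially the same route as the paper: both arguments use the invariance of $V_n$ to reduce the operator identities for the hatted operators to showing that each commutator annihilates every spanning vector $B_{i_1}\cdots B_{i_k}\ket{\psi}$, and both carry out that check using the relations \eqref{nqubitallpermutations} and \eqref{pairwiseequalities_n}. Your signed index-set ``multiplication table'' is merely a tidier bookkeeping of the same case-by-case computation the paper performs (explicitly for $[A_l,A_m]$, with the rest deferred to the $n=3$ appendix).
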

\begin{lemma}\label{hattedanticommutations_n}
Suppose the maximal quantum violation of the inequality (\ref{tncigautamjeba}) is observed, then the bellow relations holds $\forall$ $i\neq j\neq k\in [1, n]$,
\begin{align*}
    \{\hat{A}_i,\hat{B}_i\}=\{\hat{A}_k,\hat{N}_{ij}\}=\{\hat{B}_i,\hat{N}_{ij}\}=\{\hat{B}_j,\hat{N}_{ij}\}=0.
\end{align*}
\end{lemma}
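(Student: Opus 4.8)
The plan is to upgrade the anticommutation relations of Lemma \ref{stanti_comm_n}, which are known only on the state $\ket{\psi}$, into genuine operator identities on the invariant subspace $V_n$. The starting observation is that, since $V_n$ is invariant under every $A_i$, $B_j$ and $N_{ij}$ by Lemma \ref{invsubVn}, each hatted operator acts on $V_n$ exactly as the restriction of its unhatted counterpart: for $v\in V_n$ one has $\hat{A}_iv=A_iv$, and likewise for $\hat{B}_j$ and $\hat{N}_{ij}$, with every intermediate vector again lying in $V_n$. Consequently it is enough to check that each of the four anticommutators annihilates every spanning vector $\hat{B}_S\ket{\psi}:=\big(\prod_{m\in S}\hat{B}_m\big)\ket{\psi}$ of $V_n$, for an arbitrary index set $S\subseteq\{1,\dots,n\}$.

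First I would dispose of $\{\hat{A}_i,\hat{B}_i\}$, which is the cleanest case. Using the operator commutation relations already available on $V_n$ from Lemma \ref{hattedcommutations_n}, namely $[\hat{A}_i,\hat{B}_m]=[\hat{B}_i,\hat{B}_m]=0$ for $m\neq i$, together with $\hat{B}_m^2=\mathbbm{1}_{V_n}$, every factor $\hat{B}_m$ with $m\neq i$ commutes with both entries and can be pulled out to the left; any factor $\hat{B}_i$ present is removed by $\hat{B}_i^2=\mathbbm{1}_{V_n}$. This reduces the claim to the two base identities $\{\hat{A}_i,\hat{B}_i\}\ket{\psi}=0$ from Lemma \ref{stanti_comm_n} and $\hat{B}_i\hat{A}_i\hat{B}_i\ket{\psi}=-\hat{A}_i\ket{\psi}$, the latter obtained from $\{A_i,B_i\}\ket{\psi}=0$ by multiplying on the left by $B_i$ and using $B_i^2=\mathbbm{1}$. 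Hence $\{\hat{A}_i,\hat{B}_i\}\hat{B}_S\ket{\psi}=0$ for every $S$.

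For the relations involving $\hat{N}_{ij}$ I would follow the same transport strategy but first isolate the active indices. In $\{\hat{B}_i,\hat{N}_{ij}\}$, $\{\hat{B}_j,\hat{N}_{ij}\}$ and $\{\hat{A}_k,\hat{N}_{ij}\}$ the only generators failing to commute with both entries are $\hat{B}_i,\hat{B}_j$ (with $\hat{N}_{ij}$) and $\hat{B}_k$ (with $\hat{A}_k$). All spectator factors $\hat{B}_m$ with $m\notin\{i,j,k\}$ commute with both entries by Lemma \ref{hattedcommutations_n} and are pulled to the left, exactly as in the proofs of Lemmas \ref{st_comm_n} and \ref{stanti_comm_n}, where the common prefactor $\mathcal{B}_{ijk}=\prod_{m\neq i,j,k}B_m$ is factored off. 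This collapses the problem onto the three active indices and renders it identical to the three-qubit computation carried out in Appendix \ref{hatted_three_app} for Lemma \ref{hattedanticommutations}, which can then be invoked.

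The main obstacle I anticipate is the coupling between $\{\hat{B}_i,\hat{N}_{ij}\}=0$ and $\{\hat{B}_j,\hat{N}_{ij}\}=0$: transporting $\hat{N}_{ij}$ across $\hat{B}_i$ on a vector that still carries a $\hat{B}_j$ requires precisely the companion relation for $\hat{B}_j$, and vice versa, so neither can be established in isolation. I would break this circularity by verifying the single combined identity $\hat{N}_{ij}\hat{B}_i\hat{B}_j\ket{\psi}=\hat{B}_i\hat{B}_j\hat{N}_{ij}\ket{\psi}$ directly, evaluating both sides through the stabilizer relations \eqref{nqubitallpermutations} and the auxiliary relations \eqref{pairwiseequalities_n}; for instance $N_{ij}\ket{\psi}=A_jB_j\ket{\psi}$ and $B_jA_jB_j\ket{\psi}=-A_j\ket{\psi}$ give the right-hand side as $-A_jB_i\ket{\psi}$. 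Since both coupled relations reduce to this one identity, establishing it simultaneously discharges them, after which the remaining anticommutators $\{\hat{A}_k,\hat{N}_{ij}\}$ follow by the same transport argument combined with the already-proven commutation relations of Lemma \ref{hattedcommutations_n}.
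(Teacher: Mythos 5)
Your proposal is correct, and it organizes the argument differently from the paper. The paper's proof (Appendix~\ref{nqubit_app}) works by brute force: it takes each spanning vector $B_{i_1}\cdots B_{i_k}\ket{\psi}$ of $V_n$, splits into cases according to how the indices of the anticommutator intersect $\{i_1,\dots,i_k\}$, and evaluates the anticommutator on each vector directly from the stabilizer relations \eqref{nqubitallpermutations} and \eqref{pairwiseequalities_n}; it carries this out only for $\{\hat{A}_i,\hat{B}_i\}$ and asserts that the other three anticommutators follow ``similar to Lemma~\ref{hattedanticommutations}.'' You instead exploit the already-established operator-level commutation relations of Lemma~\ref{hattedcommutations_n} to strip off every spectator factor $\hat{B}_m$ that commutes with both entries, collapsing the $2^n$ spanning vectors to a handful of base cases on $\ket{\psi}$ that are settled by Lemma~\ref{stanti_comm_n} and Eq.~\eqref{pairwiseequalities_n}. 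This is more economical and makes the logical dependency on the preceding lemma explicit, and your diagnosis of the $\hat{B}_i$/$\hat{B}_j$ coupling for $\{\hat{B}_i,\hat{N}_{ij}\}$ --- together with the fix of verifying the single combined identity $\hat{N}_{ij}\hat{B}_i\hat{B}_j\ket{\psi}=\hat{B}_i\hat{B}_j\hat{N}_{ij}\ket{\psi}$ on the state --- is exactly the right way to avoid circularity. Two small caveats: the residual computation is not literally ``identical'' to the three-qubit one in Appendix~\ref{hatted_three_app}, since the $n$-qubit stabilizer relations carry the extra prefactor $\mathcal{B}_{ijk}=\prod_{m\neq i,j,k}B_m$ (e.g.\ $A_iB_jB_k\mathcal{B}_{ijk}\ket{\psi}=\ket{\psi}$ rather than $A_iB_jB_k\ket{\psi}=\ket{\psi}$) and the reduced vectors are $B_T\ket{\psi}$ with $T\subseteq\{i,j,k\}$ rather than the spanning set of Eq.~\eqref{defInv}, so the three-qubit lemma cannot be invoked verbatim and the analogous chains must be rerun --- though the paper is equally loose on this point; and you should note explicitly that $\hat{B}_m^2=\mathbbm{1}_{V_n}$, which follows from $B_m^2=\mathbbm{1}$ together with the invariance of $V_n$, as in Eq.~\eqref{dupa3}.
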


%\sBox{\cj{I think that the above proof should be rewritten.}}

Next, we are in a position to state the following results in a lemma which is a straightforward generalization of Lemma \ref{lemma3}.
\begin{lemma}\label{lemma3n}
Suppose the maximal quantum violation of our inequality (\ref{tncigautamjeba_n}) is observed.
Then, there exists a unitary matrix $U$ acting on $V_n$ for which
\begin{align}\label{Aitilden}
    U\hat{A}_i U^{\dagger} = X_i, \quad U\hat{B}_jU^{\dagger} = Z_j, \nonumber\\ U\hat{N}_{ij}U^{\dagger}= \pm X_i\otimes X_j \bigotimes_{k\neq i,j}^n Z_k.
\end{align} 
\end{lemma}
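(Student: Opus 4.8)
The plan is to follow the same strategy as in the proof of Lemma \ref{lemma3}, generalizing the iterative Pauli-decomposition argument from three qubits to $n$ qubits. The argument splits naturally into two halves: first fixing the $\hat{A}_i$ and $\hat{B}_j$, and then pinning down each $\hat{N}_{ij}$ from its commutation structure.

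First I would fix the form of $\hat{A}_i$ and $\hat{B}_j$ by invoking Lemma~3 of Ref.~\cite{SJA22}. Lemmas \ref{hattedcommutations_n} and \ref{hattedanticommutations_n} show that the $\hat{A}_i$ and $\hat{B}_j$ constitute $n$ mutually anticommuting pairs (that is, $\{\hat{A}_i,\hat{B}_i\}=0$ while all operators carrying distinct indices commute). The tracelessness argument used in the case $n=3$, deduced from the anticommutation relations $\hat{A}_i=-\hat{B}_i\hat{A}_i\hat{B}_i$, carries over verbatim and forces $\dim V_n=2^n$, so that $V_n\cong(\mathbbm{C}^2)^{\otimes n}$. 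The cited result then supplies a unitary $U$ acting on $V_n$ with $U\hat{A}_iU^{\dagger}=X_i$ and $U\hat{B}_jU^{\dagger}=Z_j$, which is exactly the first two families of relations in \eqref{Aitilden}.

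The remaining and main task is to determine the $\hat{N}_{ij}$. Writing $\tilde{N}_{ij}=U\hat{N}_{ij}U^{\dagger}$ and expanding it in the $n$-qubit Pauli basis, I would transport the relations of Lemmas \ref{hattedcommutations_n} and \ref{hattedanticommutations_n} through $U$: the operator $\tilde{N}_{ij}$ commutes with $X_i$, $X_j$ and with $Z_k$ for every $k\neq i,j$, and anticommutes with $Z_i$, $Z_j$ and with $X_k$ for every $k\neq i,j$. The key observation is that a single-qubit factor is pinned down up to a scalar by its commutation pattern with $X$ and $Z$ alone: commuting with $X$ while anticommuting with $Z$ forces the factor $X$, whereas anticommuting with $X$ while commuting with $Z$ forces the factor $Z$. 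Applying this criterion at each of the $n$ positions yields the factor $X$ at positions $i$ and $j$ and the factor $Z$ at the remaining $n-2$ positions, so that $\tilde{N}_{ij}=\alpha_{ij}\,X_i\otimes X_j\bigotimes_{k\neq i,j}^n Z_k$ for some scalar $\alpha_{ij}$. Finally I would fix this scalar using involutivity: since $N_{ij}^2=\mathbbm{1}$ implies $\hat{N}_{ij}^2=\mathbbm{1}$ on $V_n$ through the block decomposition, and the displayed Pauli string squares to the identity, one gets $\alpha_{ij}^2=1$, i.e.\ $\alpha_{ij}=\pm1$, establishing the last line of \eqref{Aitilden}.

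I expect the main obstacle to be organizational rather than conceptual. In contrast with Lemma~\ref{lemma3}, where the factors were peeled off one qubit at a time, here the cleanest route is to let the joint commutation pattern with the full generating set $\{X_k,Z_k\}_{k=1}^n$ fix all $n$ single-qubit factors simultaneously; the only point requiring genuine care is to verify that Lemmas \ref{hattedcommutations_n} and \ref{hattedanticommutations_n} indeed constrain every position. This they do: positions $i$ and $j$ are determined by the commutation with $\hat{A}_i,\hat{A}_j$ and the anticommutation with $\hat{B}_i,\hat{B}_j$, while each position $k\neq i,j$ is determined by the commutation with $\hat{B}_k$ and the anticommutation with $\hat{A}_k$. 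One should also confirm that expanding an arbitrary operator in the Pauli basis and imposing these $2n$ sign conditions leaves only the single matching string, which holds because the commutation signature with $\{X_k,Z_k\}$ is distinct for each of $\{\mathbbm{1},X,Y,Z\}$ at each site.
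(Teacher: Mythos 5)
Your proposal is correct and follows essentially the same route as the paper: first invoking the result of Ref.~\cite{SJA22} to fix $\hat{A}_i\mapsto X_i$, $\hat{B}_j\mapsto Z_j$ on $V_n\cong(\mathbbm{C}^2)^{\otimes n}$, then using the commutation pattern from Lemmas \ref{hattedcommutations_n} and \ref{hattedanticommutations_n} to force each tensor factor of $U\hat{N}_{ij}U^{\dagger}$ to be $X$ at positions $i,j$ and $Z$ elsewhere, with involutivity fixing the overall scalar to $\pm1$. (The paper cites Lemma~6 of that reference for the $n$-qubit case rather than Lemma~3, but this is immaterial to the argument.)
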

\begin{proof}
%We note that the commutation and anti-commutation relations between the operators, $A_i$ and $B_i$ in Lemmas \ref{hattedcommutations_n} and \ref{hattedanticommutations_n} are exactly the same as assumed in the previous work \cite{IMO+20}. 
Again, using the result (Lemma 6) of Ref. \cite{SJA22}, we can show that there exists a unitary $U$, acting on the subspace $V_n\cong(\mathbbm{C}^2)^{\otimes n}$, such that the operators $\hat{A}_i$ and $\hat{B}_i$ have the form given in Lemma \ref{lemma3n}. Then, our task is to determine the form of the operators $\hat{N}_{ij}$.

%Hence, we can use the techniques directly from \cite{IMO+20} and show that the subspace, $V_n=(\mathbbm{C}^2)^{\otimes n}$ is a $n$-qubit Hilbert space. 
%Moreover, there exists a unitary operation $U$ composed of all the 
%intermediate unitary operations $U_i$ such that 
%
%\begin{equation}
  %  U^{\dagger}\hat{A}_i\, U = X_i,\qquad U^{\dagger}\hat{B}_i\, U = Z_i\nonumber
%\end{equation}
%
%for any $i\in [1,n]$.
%
Using the structure of subspace, $V_n=(\mathbbm{C}^2)^{\otimes n}$, one can decompose $\hat{N}_{ij}$ in terms of the Pauli basis as
\begin{equation*}
   U\hat{N}_{ij}U^{\dagger}=\bigotimes_{m=1}^n\sigma_m,
\end{equation*}
where $\sigma_m$ is a linear combination of Pauli matrices $\{\mathbbm{1}_m,X_m,Y_m,Z_m\}$. Now, using the same technique used in the proof of Lemma \ref{lemma3}, we get the following relations from Lemmas \ref{hattedcommutations_n} and \ref{hattedanticommutations_n},
\begin{align*}
    & \text{for}  \quad m=i \quad \sigma_m=X, \nonumber \\
    & \text{for} \quad m=j \quad \sigma_m=X, \nonumber \\
    & \text{for} \quad m\neq i,j \quad \sigma_m=\pm Z.
\end{align*}
which gives the form of $\hat{N}_{ij}$.
\end{proof}

Thus, we are able to certify the measurements $A_i$, $B_i$, and $N_{ij}$, without assuming commutation relations between the observables. Moreover, as we can show that the dimension of $V_n$ is exactly $2^n$, our inequalities can be seen as dimension witnesses: 
\textit{The dimension of the Hilbert space supporting a state and observables giving rise to the maximal violation of our inequalities must be at least $2^n$.} Now, we present the main result in the form of the following theorem. 

\begin{thm}\label{nqubit_theorem}
If a quantum state $|\psi \ra$ and a set of dichotomic observables $A_i$, $B_j$ and $N_{ij}$ with $i,j\in [1,n]$ give rise to the maximal violation of the temporal non-contextual inequality \eqref{tncigautamjeba_n}, then there exists a projection $P_n:\mathcal{H}_n \rightarrow V_n$ with $V_n=\mathbbm{C}^{2^n}$ and a unitary $U$ acting on $\mathbbm{C}^{2^n}$ such that
\beq
U^\dagger (P\, A_i\, P^\dagger) U &=& X_i, \nonumber\\
U^\dagger (P\, B_j\, P^\dagger) U &=& Z_j, \nonumber\\
U^\dagger (P\, N_{ij}\, P^\dagger) U &=& X_i\otimes X_j \bigotimes_{k\neq i,j}^n Z_k, \nonumber\\
U (P|\psi \ra) &=& |G_n \ra,
\eeq
where $|G_n \ra$ is the complete graph state of $n$ qubits. 
\end{thm}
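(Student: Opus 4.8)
The plan is to assemble the lemmas already proved for the $n$-qubit inequality into a single argument that runs exactly parallel to the proof of Theorem \ref{Theo3qubit}. The starting observation is that a pure state $\ket{\psi}$ together with observables $A_i$, $B_j$, $N_{ij}$ attains the maximal value $\eta_Q^{(n)}$ of \eqref{tncigautamjeba_n} if and only if every temporal expectation value saturates its extreme value, which is precisely the content of the operator relations \eqref{nqubitallpermutations}. I would therefore first record this equivalence and then treat \eqref{nqubitallpermutations} as the sole algebraic input from which everything else is extracted.

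From \eqref{nqubitallpermutations}, Lemmas \ref{st_comm_n} and \ref{stanti_comm_n} deliver the commutation and anti-commutation relations of $A_i$, $B_j$, $N_{ij}$ on $\ket{\psi}$, while \eqref{pairwiseequalities_n} records the auxiliary pairwise identities. Next I would invoke Lemma \ref{invsubVn}, so that $V_n$ is invariant under all observables; this forces the block decomposition $\mathcal{H}_n=V_n\oplus V_n^{\perp}$ with $A_i=\hat{A}_i\oplus\bar{A}_i$ and analogously for $B_j$, $N_{ij}$, the barred blocks annihilating $\ket{\psi}$. Attention may thus be restricted to the hatted operators on $V_n$, and Lemmas \ref{hattedcommutations_n} and \ref{hattedanticommutations_n} transfer the full commutation and anti-commutation structure to them. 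Tracelessness of each $\hat{A}_i$, $\hat{B}_i$, $\hat{N}_{ij}$ (extracted from the anti-commutation relations exactly as in the three-qubit case) then shows both eigenvalues $\pm1$ have equal multiplicity, so $\dim V_n$ is even; combined with the upper bound $\dim V_n\le 2^n$ coming from the $2^n$ spanning vectors and with Lemma \ref{lemma3n} (which imports the structural result of Ref.~\cite{SJA22}), this pins the dimension down to exactly $2^n$ and identifies $V_n\cong(\mathbb{C}^2)^{\otimes n}$. Lemma \ref{lemma3n} then supplies a unitary $U$ on $V_n$ bringing $\hat{A}_i\mapsto X_i$, $\hat{B}_j\mapsto Z_j$, and $\hat{N}_{ij}\mapsto\pm\,X_i\otimes X_j\bigotimes_{k\neq i,j}^n Z_k$, leaving only the signs of the $N_{ij}$ free.

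The final step recovers the state and fixes those signs. Feeding $\hat{A}_i\to X_i$, $\hat{B}_j\to Z_j$ into the second family of relations in \eqref{nqubitallpermutations} turns them into $Z_1\cdots Z_{i-1}X_iZ_{i+1}\cdots Z_n\,U(P_n\ket{\psi})=U(P_n\ket{\psi})$, i.e.\ the complete set of stabilizer equations \eqref{ngraph}; since $\ket{G_n}$ is by Definition \ref{def:gs} the unique $+1$ common eigenvector of all the $G_i$, this yields $U(P_n\ket{\psi})=\ket{G_n}$. To fix the remaining signs I would use the first family in \eqref{nqubitallpermutations}: the prefactor $B_1\cdots A_i\cdots A_j\cdots B_n$ equals the Pauli string carrying $X$ on sites $i,j$ and $Z$ elsewhere, which is exactly $s_{ij}\,\hat{N}_{ij}$ with $s_{ij}=\pm1$, so that relation collapses to $s_{ij}\ket{G_n}=\ket{G_n}$ and enforces $s_{ij}=+1$ for every pair. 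Assembling these facts gives the asserted forms \eqref{obs_n} together with $U(P_n\ket{\psi})=\ket{G_n}$, which is the claim.

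I expect the main obstacle to be the dimension bookkeeping rather than any individual relation. One must verify that the $2^n$ vectors spanning $V_n$ are genuinely independent, so that $\dim V_n$ is not strictly smaller than $2^n$, and that the structural result underlying Lemma \ref{lemma3n} really forces a full tensor factor $(\mathbb{C}^2)^{\otimes n}$ with a trivial multiplicity space rather than merely a dimension that is a multiple of $2^n$. Once the dimension is controlled, the identification of the stabilizer relations and the sign-fixing are essentially bookkeeping, provided the index conventions ($N_{ij}=N_{ji}$, and which sites carry $X$ versus $Z$) are tracked consistently.
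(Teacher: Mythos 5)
Your proposal is correct and follows essentially the same route as the paper: both reduce the maximal violation to the relations \eqref{nqubitallpermutations}, chain through Lemmas \ref{st_comm_n}--\ref{lemma3n} to obtain the projection onto $V_n\cong\mathbbm{C}^{2^n}$ and the unitary bringing the observables to the Pauli forms, then read off the graph state from the stabilizer relations in the second family of \eqref{nqubitallpermutations} and fix the signs of the $\hat{N}_{ij}$ from the first family. Your closing caveat about verifying $\dim V_n=2^n$ and the triviality of the multiplicity space is a fair observation, but the paper handles this the same way you propose, by deferring to the structural result of Ref.~\cite{SJA22} inside Lemma \ref{lemma3n}.
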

\begin{proof}
The state $|\psi \ra\in\mathcal{H}_n$ and observables $A_i,B_i,N_{ij}$ acting on the Hilbert space $\mathcal{H}_n$ attain the maximal quantum violation of the inequality \eqref{tncigautamjeba_n}, if and only if, they satisfy the set of $\binom{n}{2}+n+\binom{n}{3}+3\binom{n}{3}$ equations in \eqref{nqubitallpermutations}. The algebra induced by this set of equations allows us to prove Lemmas \ref{invsubVn}, \ref{hattedcommutations_n}, \ref{hattedanticommutations_n} and \ref{lemma3n}; in particular, it follows that there exists a projection $P_n:\mathcal{H} \rightarrow V_n \cong \mathbbm{C}^{2^n} $ and a unitary $U$ acting on $V_n$ such that the relations \eqref{Aitilden} holds. Then, we note that the first equation of \eqref{nqubitallpermutations} is satisfied only if the the sign of each observables, $ U\hat{N}_{ij}U^{\dagger}$ in \eqref{Aitilden} is $+1$. 
%
%\beq
%U^\dagger (P\, A_i\, P^\dagger) U &=& X_i, \nonumber\\
%U^\dagger (P\, B_j\, P^\dagger) U &=& Z_j, \nonumber\\
%U^\dagger (P\, N_{ij}\, P^\dagger) U &=& X_i\otimes X_j \bigotimes_{k\neq i,j}^n Z_k,\nonumber
%\eeq
%

Now, we note that the $n$ stabilizing relations from the second equation of \eqref{nqubitallpermutations} uniquely determine a state, $\ket{G_n}$, associated with the complete graph with $n$ vertices. 
%Note that the third equation of \eqref{nqubitallpermutations} is then trivially satisfied. 
%, whereas the correlators $\la B_1\ldots A_i\ldots A_j \ldots A_k\ldots B_n \ra$ with negative sign correspond to products of three different stabilizing operators for which the graph state $\ket{G_n}$ is an eigenvector with associated eigenvalue $-1$. Thus the complete graph state will be the unique state that attains the maximal quantum violation, then
%
%\beq
%U (P|\psi \ra) = |G_n \ra.\nonumber
%\eeq
%
Therefore, we prove the theorem.
\end{proof}
%We highlight here that the two additional types of correlators $\la B_1\ldots A_i...\ldots A_j\ldots B_k\ldots B_nN_{ij} \ra$ and $\la N_{ij}N_{jk} \ra$  introduced here in comparison to \cite{IMO+20}, which helped us in extending the self-testing proof to the temporal scenario.
%
%
\section{Conclusions}
Quantum contextuality provides a unique avenue in extending the task of self-testing quantum devices to scenarios where entanglement is not necessary or spatial separation between the
subsystems is not required (cf., \cite{IMO+20,SSA20,MMJ+21,SJA22,BRV+19,DMS+22}). However, such contextuality-based certification schemes rely on the compatibility relations among the measurements involved and are thus generally hard to implement in practice.

In this paper, we overcome this limitation by using the temporal inequalities derived from the non-contextuality inequalities, such that the maximal violation of these new inequalities will certify these commutation relations among the said measurement observables. In what follows, we are able to show that the maximum violation of these temporal non-contextual inequalities can be used for the certification of multi-qubit graph states and the measurements. 

Recently, we developed a similar method to certify a two-qubit maximally entangled state and measurements from the maximal violation of a temporal non-contextual inequality \cite{twoqubit_temp}. Moreover, we were also able to show that the scheme presented in \cite{twoqubit_temp} is also robust to noise and small experimental errors. Since the certification schemes presented in both works are similar, we expect that the certification scheme presented in the current work is also robust to noise and small experimental errors. 
%In what follows, we are able to prove the robustness of our scheme for three-qubit systems. It would therefore be highly useful to come up with a robustness proof of the certification protocol for $n$-qubit systems.

Furthermore, it would be interesting to identify which other non-contextuality inequality-based self-testing schemes can be converted to temporal inequalities so that they can still be used for self-testing of the states and measurements. Further research is needed to improve the scalability of our scheme with the number of certified qubits. 
\section*{acknowledgments}
This work was supported by the Polish National Science Centre through the SONATA BIS project No. 2019/34/E/ST2/00369. SS acknowledges funding through PASIFIC program call 2 (Agreement No. PAN.BFB.S.BDN.460.022 with the Polish Academy of Sciences). This project has received funding from the European Union’s Horizon 2020 research and innovation programme under the Marie Skłodowska-Curie grant agreement No 847639 and from the Ministry of Education and Science of Poland.

\bibliography{mermin}
\onecolumngrid
\appendix

% \section{Quantum mechanical expression of sequential correlations}\label{proof_nth_corr}
% We have $P_i=P_i(t_i)=1/2(1+s_i\hat{Q}_i(t_i))$ and the following identities 
% \begin{align*}
%     \sum s_iP_i\rho P_i=\frac{1}{2}(\hat{Q}_i\rho+\rho\hat{Q}_i),\\
%     \sum s_iP_i\rho=\hat{Q}_i\rho.
% \end{align*}
% Using these two identities we will now find the quantum version of the $n^{th}$ order temporal correlations.
% \begin{widetext}
% \begin{align}
%     T_{12...n}&=\sum_{s_1,s_2,s_3,...,s_n}s_1s_2...s_np(s_1,s_2,s_3,...,s_{n-1},s_n)\nonumber \\
%     &=\sum_{s_1,s_2,s_3,...,s_n}s_1s_2...s_n\tr{[P_nP_{n-1}...P_2P_1\rho P_1P_2...P_{n-1}]}\nonumber\\
%     &=\frac{1}{2^{n-1}}\tr{[\hat{Q}_n\hat{Q}_{n-1}\mathcal{P}(\rho,\hat{Q_1},\hat{Q_2},...,\hat{Q}_{n-2})+\hat{Q}_n\mathcal{P}(\rho,\hat{Q_1},\hat{Q_2},...,\hat{Q}_{n-2})\hat{Q}_{n-1}]}\nonumber\\
%     &=\frac{1}{2^{n-1}}\tr{[\mathcal{P}(\rho,\hat{Q_1},\hat{Q_2},...,\hat{Q}_{n-2})\{\hat{Q}_{n-1},\hat{Q}_n\}]}\nonumber\\
%     &=\frac{1}{2^{n-1}}\tr{[\left(\hat{Q}_{n-2}\mathcal{P}(\rho,\hat{Q_1},\hat{Q_2},...,\hat{Q}_{n-3})+\mathcal{P}(\rho,\hat{Q_1},\hat{Q_2},...,\hat{Q}_{n-3})\hat{Q}_{n-2}\right)\{\hat{Q}_{n-1},\hat{Q}_n\}]}\nonumber \\
%     &=\frac{1}{2^{n-1}}\tr{[\left(\mathcal{P}(\rho,\hat{Q_1},\hat{Q_2},...,\hat{Q}_{n-3})\right)\{\hat{Q}_{n-2},\{\hat{Q}_{n-1},\hat{Q}_n\}\}]}\nonumber \\
% & T_{12...n}=\frac{1}{2^{n-1}}\tr{[\{\hat{Q}_1,\{\hat{Q}_2...,\{\hat{Q}_{n-1},\hat{Q}_n\}\}\}]}.
% \end{align}
% \end{widetext}
%
%
\section{The special case of $n=4$ of non-contextual inequality \eqref{nc_new_n}}\label{4case}
The non-contextual inequality \eqref{nc_new_n} is constructed in such a way that for every index $i\neq j\neq k\in [1,n]$, we have an inequality of the form of Eq. \eqref{nc_new}. We explain it by expanding the inequality for $n=4$, 
\begin{align} \label{nc_new_4}
\mathcal{I}_4  = &2( \la A_1A_2B_3B_4N_{12}\ra+\la A_1A_3B_2B_4N_{13}\ra+\la A_1A_4B_2B_3N_{14} \ra+\la A_2A_3B_1B_4N_{23} \ra+\la A_2A_4B_1B_3N_{24} \ra \nonumber \\ &+\la A_3A_4B_1B_2N_{34}\ra) +3\left(\la A_1B_2B_3B_4 \ra + \la B_1A_2B_3B_4 \ra + \la B_1B_2A_3B_4 \ra + \la B_1B_2B_3A_4 \ra \right)\nonumber \\ &- \la B_1A_2A_3A_4 \ra - \la A_1B_2A_3A_4 \ra - \la A_1A_2B_3A_4 \ra - \la A_1A_2A_3B_4 \ra+\la N_{12}N_{13}\ra+\la N_{23}N_{13}\ra \nonumber \\
&+\la N_{12}N_{23}\ra+\la N_{12}N_{14}\ra+\la N_{24}N_{14}\ra+\la N_{12}N_{24}\ra+\la N_{14}N_{13}\ra+\la N_{34}N_{13}\ra+\la N_{14}N_{34}\ra\ \nonumber \\ 
&+\la N_{24}N_{34}\ra+\la N_{23}N_{34}\ra+\la N_{24}N_{23}\ra \leq \eta^{(4)}_C = 32 < \eta^{(4)}_Q = 40.
\end{align}
Rearranging the terms in $\mathcal{I}_4$ so that for the four sets of three indices, we have one of the four expressions,
\begin{align}
\mathcal{I}_4&=(\braket{B_4A_1A_2B_3N_{12}}+\braket{B_4A_1B_2A_3N_{31}}+\braket{B_4B_1A_2A_3N_{23}} +\la B_4A_1B_2B_3 \ra + \la B_4B_1A_2B_3 \ra + \la B_4B_1B_2A_3 \ra \nonumber \\ & -\la B_4A_1A_2A_3\ra +\la N_{12}N_{13}\ra+\la N_{23}N_{13}\ra+\la N_{12}N_{23}\ra)\nonumber\\
&+(\braket{B_3A_1A_2B_4N_{12}}+\braket{B_3A_1B_2A_4N_{14}}+\braket{B_3B_1A_2A_4N_{24}} +\la B_3A_1B_2B_4 \ra + \la B_3B_1A_2B_4 \ra + \la B_3B_1B_2A_4 \ra \nonumber \\ & -\la B_3A_1A_2A_4\ra +\la N_{12}N_{14}\ra+\la N_{24}N_{14}\ra+\la N_{12}N_{24}\ra)\nonumber\\
&+(\braket{B_2A_1B_3A_4N_{14}}+\braket{B_2A_1A_3B_4N_{13}}+\braket{B_2B_1A_3A_4N_{34}} +\la B_2A_1B_3B_4 \ra + \la B_2B_1B_3A_4 \ra + \la B_2B_1A_3B_4 \ra \nonumber \\ & -\la B_2A_1A_3A_4\ra +\la N_{14}N_{13}\ra+\la N_{34}N_{13}\ra+\la N_{14}N_{34}\ra)\nonumber\\
&+(\braket{B_1A_2B_3A_4N_{24}}+\braket{B_1A_2A_3B_4N_{23}}+\braket{B_1B_2A_3A_4N_{34}} +\la B_1A_2B_3B_4 \ra + \la B_1B_2B_3A_4 \ra + \la B_1B_2A_3B_4 \ra \nonumber \\ & -\la B_1A_2A_3A_4\ra +\la N_{24}N_{23}\ra+\la N_{34}N_{23}\ra+\la N_{24}N_{34}\ra).
\end{align}
From the above form, it is clear that $\mathcal{I}_4$ consists of four expressions of the form \eqref{nc_new}. For example, when $i=1,j=2$ and $k=4$ in Eqs. \eqref{correlators-n}, we have the following terms for $\mathcal{I}_4$
\begin{align*}
\braket{B_3A_1A_2B_4N_{12}}&+\braket{B_3A_1B_2A_4N_{14}}  +\braket{B_3B_1A_2A_4N_{24}} +\la B_3A_1B_2B_4 \ra + \la B_3B_1A_2B_4 \ra \\&+ \la B_3B_1B_2A_4 \ra  -\la B_3A_1A_2A_4\ra  +\la N_{12}N_{14}\ra+\la N_{24}N_{14}\ra+\la N_{12}N_{24}\ra,
\end{align*}
where we see that we have a $B_3$ sitting in from of all the terms from inequality \eqref{nc_new} except for $\la N_{ij}N_{jk}\ra$ terms. Likewise, we have three other expressions. Therefore, it is direct to find that $\eta^{(4)}_C = 4\times 8= 32$.
%%%%%%%%%%%%%%%%%%%%%%%%%%%%%%%%%%%%%%%%%%%%%%%%
\section{Commutation and anti-commutation relations for hatted operators in three-qubit scenario}\label{hatted_three_app}
\textit{The proof of Lemma \ref{hattedcommutations}}.-- 
%\begin{proof}
If we can show that the operators $A_i$, $B_j$, and $M_{ij}$ follow the aforementioned commutation relations on the subspace $V_3$, it would also immediately imply the commutation relations for the hatted operators. Now consider the commutators $[A,B]$ acting on $M\ket{\psi}\in V_3$ so that $A,B,M$ appear together in any sequential correlation terms in inequality \eqref{tncigautamjeba}. We can always get the following 
$[A,B]M\ket{\psi}=M[A,B]\ket{\psi}$ 
from the relations \eqref{3qubitallpermutationsa}-\eqref{3qubitallpermutationsd} which contain all the possible permutations. So in the following, we exclude such trivial scenarios and will consider only those commutators and subspace vectors so that the operators $A, B, M$ do not appear together in any sequential correlation terms in inequality \eqref{tncigautamjeba}. 

For the commutation $[A_i,A_j]=0 \in V_3$, we find that the following relation is the only non-trivial one to show, 
\begin{align*}
    [A_i,A_j]B_i\ket{\psi}=(A_iB_k-A_jM_{ij})\ket{\psi}=0
\end{align*}
which follows from $A_jB_iB_k\ket{\psi}=\ket{\psi}$ from Eq. \eqref{3qubitallpermutationsb}, $B_kA_iA_jM_{ij}\ket{\psi}=\ket{\psi}$ (Eq. \ref{3qubitallpermutationsa}) and $A_iB_i\ket{\psi}=M_{ij}\ket{\psi}$ (Eq. \ref{pairwiseequalities}). Similarly, for $[A_i,B_j]=0\in V_3$, we need to show the following two relations only. First one is 
\begin{align*}
    [A_i,B_j]B_i\ket{\psi}=(A_iA_k-B_jM_{ik})\ket{\psi}=0
\end{align*}
using $A_kB_jB_i\ket{\psi}=\ket{\psi}$ from Eq. \eqref{3qubitallpermutationsb}, $A_kA_iB_jM_{ik}\ket{\psi}=\ket{\psi}$ from Eq. \eqref{3qubitallpermutationsa} and $A_iB_i\ket{\psi}=M_{jk}\ket{\psi}$ from Eq. \eqref{pairwiseequalities}. And the second one is found to be 
\begin{align*}
    [A_i,B_j]A_j\ket{\psi}=(-A_iM_{ik}+B_jA_k)\ket{\psi}=0
\end{align*}
using $B_jA_j\ket{\psi}=-M_{ik}\ket{\psi}$ from Eq. \eqref{pairwiseequalities}, $A_kA_iA_j\ket{\psi}=-\ket{\psi}$ from Eq. \eqref{3qubitallpermutationsc} and $M_{ik}A_iB_jA_k\ket{\psi}=\ket{\psi}$ from Eq. \eqref{3qubitallpermutationsa}. Next, for $[B_i,B_j]=0\in V_3$, we need to prove three non-trivial cases. We first have 
\begin{align*}
    [B_i,B_j]A_i\ket{\psi}=(B_iB_k-B_jM_{ik})\ket{\psi}=(A_j+A_iA_k)\ket{\psi}=0
\end{align*}
using $B_kB_jA_i\ket{\psi}=\ket{\psi}$ from Eq. \eqref{3qubitallpermutationsb}, $A_jA_iA_k\ket{\psi}=-\ket{\psi}$ from Eq. \eqref{3qubitallpermutationsc} and $B_iA_i\ket{\psi}=-M_{ik}$ from Eq. \eqref{pairwiseequalities}. It is easy to see that a similar proof works for $[B_i,B_j]A_j\ket{\psi}=0$. Further, we can see that following holds 
\begin{align*}
    [B_i,B_j]B_k\ket{\psi}=(B_iA_i-B_jA_j)\ket{\psi}=0
\end{align*}
by using $A_iB_jB_k\ket{\psi}=\ket{\psi}$ and $A_jB_iB_k\ket{\psi}=\ket{\psi}$ from Eq. \eqref{3qubitallpermutationsb}. Lastly, we find below one  
\begin{align*}
    [B_i,B_j]M_{ij}\ket{\psi}=(-B_iA_j+B_jA_i)\ket{\psi}=(B_k-B_k)\ket{\psi}=0
\end{align*}
using $B_jM_{ij}\ket{\psi}=-A_j\ket{\psi}$ from Eq. \eqref{3qubitallpermutationsa}, $A_iB_jB_k\ket{\psi}=\ket{\psi}$ from Eq. \eqref{3qubitallpermutationsb} and $B_iM_{ij}\ket{\psi}=-A_i\ket{\psi}$ from Eq. \eqref{pairwiseequalities}. 

For $[A_i,M_{ij}]=0\in V_3$, we show the following relation 
\begin{align*}
    [A_i,M_{ij}]A_k\ket{\psi}=(-A_iB_k+M_{ij}A_j)\ket{\psi}=0
\end{align*}
using $M_{ij}A_k\ket{\psi}=B_k\ket{\psi}$ from Eq. \eqref{pairwiseequalities} and $A_jA_iA_k\ket{\psi}=-\ket{\psi}$ and $B_kA_iM_{ij}A_j\ket{\psi}=\ket{\psi}$ from Eq. \eqref{3qubitallpermutations}. Next, 
\begin{align*}
    [A_i,M_{ij}]B_i\ket{\psi}=(A_iA_i-M_{ij}M_{ij})\ket{\psi}=0,
\end{align*}
by considering  $M_{ij}B_i\ket{\psi}=A_i\ket{\psi}$ and $A_iB_i\ket{\psi}=M_{ij}\ket{\psi}$ from Eq. \eqref{pairwiseequalities}. Lastly, we have 
$[A_i,M_{ij}]B_j\ket{\psi}=(A_iA_j-M_{ij}B_k)\ket{\psi}=0$ which follows a similar arguments. It is also clear that the above proofs work equally for $[A_j,M_{ij}]=0 \in V_3$, as the operator $M_{ij}$ is symmetric with respect to the exchange of indices. Now, for $[B_k,M_{ij}]=0\in V_3$, we show 
\begin{align*}
    [B_k,M_{ij}]A_k\ket{\psi}=(-B_kB_k+M_{ij}M_{ij})\ket{\psi}=0
\end{align*}
holds using $M_{ij}A_k\ket{\psi}=-B_k\ket{\psi}$ and $B_kA_k\ket{\psi}=M_{ij}\ket{\psi}$ from Eq. \eqref{pairwiseequalities}). Then, one can prove following
\begin{align*}
    [B_k,M_{ij}]B_i\ket{\psi}=(B_kA_i-M_{ij}A_j)\ket{\psi}=0
\end{align*}
by utilizing $M_{ij}B_i\ket{\psi}=A_i\ket{\psi}$ from Eq. \eqref{pairwiseequalities}, $A_jB_kB_i\ket{\psi}=\ket{\psi}$ and $A_iB_kM_{ij}A_j\ket{\psi}=\ket{\psi}$ from Eq. \eqref{3qubitallpermutations}.

For $[M_{ij},M_{jk}]=0\in V_3$, we have the following four non-trivial commutation relations. First one being
\begin{align*}
    [M_{ij},M_{jk}]A_i\ket{\psi}=(-M_{ij}B_i-M_{jk}A_jB_k)\ket{\psi}=(-A_i-M_{jk}B_i)\ket{\psi}=(-A_i-A_jA_k)\ket{\psi}=0
\end{align*}
using $M_{jk}A_i\ket{\psi}=-B_i\ket{\psi}$, $M_{ij}B_i\ket{\psi}=A_i$ from Eq. \eqref{pairwiseequalities}, $M_{ij}A_iA_jB_k\ket{\psi}=\ket{\psi}$,  $B_iA_jB_k\ket{\psi}=\ket{\psi}$, $A_jA_kB_iM_{jk}\ket{\psi}=\ket{\psi}$, and $A_iA_jA_k\ket{\psi}=-\ket{\psi}$ from Eq. \eqref{3qubitallpermutationsa}-\eqref{3qubitallpermutationsc}. Second, we show 
\begin{align*}
    [M_{ij},M_{jk}]A_j\ket{\psi}=(M_{ij}A_kB_i-M_{jk}A_iB_k)\ket{\psi})=(M_{ij}B_j-M_{jk}B_j)\ket{\psi}=0
\end{align*}
using $B_iA_kM_{jk}A_j\ket{\psi}=\ket{\psi}$, $B_kA_iM_{ij}A_j\ket{\psi}=\ket{\psi}$, $A_kB_i\ket{\psi}=A_iB_k\ket{\psi}=B_j\ket{\psi}$ from Eq. \eqref{3qubitallpermutationsa}-\eqref{3qubitallpermutationsc} and $M_{ij}B_j\ket{\psi}=A_j\ket{\psi}$ from Eq. \eqref{pairwiseequalities}. Third, one can deduce that the below relation
\begin{align*}
    [M_{ij},M_{jk}]B_i\ket{\psi}=(M_{ij}A_jA_k-M_{jk}A_i)\ket{\psi}=(-M_{ij}A_i+B_i)\ket{\psi}=(-A_jB_k+B_i)\ket{\psi}=0
\end{align*}
does hold as $A_kA_jM_{jk}B_i\ket{\psi}=\ket{\psi}$, $M_{ij}B_i\ket{\psi}=A_i\ket{\psi}$, $A_iA_jA_k\ket{\psi}=-\ket{\psi}$, $B_kA_jM_{ij}A_i\ket{\psi}=\ket{\psi}$, $B_iA_jB_k\ket{\psi}=\ket{\psi}$ from Eq. \eqref{3qubitallpermutationsa}-\eqref{3qubitallpermutationsc} and $M_{jk}A_i\ket{\psi}=-B_i\ket{\psi}$ from Eq. \eqref{pairwiseequalities}. Finally, it is easy to show that 
\begin{align*}
    [M_{ij},M_{jk}]B_j\ket{\psi}=(M_{ij}A_j-M_{jk}A_j)\ket{\psi}=(A_iB_k-A_kB_i)\ket{\psi}=(B_j-B_j)\ket{\psi}=0
\end{align*}
holds using $M_{jk}B_j\ket{\psi}=A_j\ket{\psi}$, $M_{ij}B_j\ket{\psi}=A_j\ket{\psi}$ from Eq. \eqref{pairwiseequalities}, $B_kA_iM_{ij}A_j\ket{\psi}=\ket{\psi}$ plus its permutations, $B_jA_iB_k\ket{\psi}=\ket{\psi}$ and $B_jA_kB_i\ket{\psi}=\ket{\psi}$ from Eqs. \eqref{3qubitallpermutationsa}-\eqref{3qubitallpermutationsb}. $\qed$

\vspace{5mm}
%%%%%%%%%%%%%%%%%%%%%%%%%%%%%%%%%%%%%%%%%%%%%%%%%%%%%%%%%%%%%%
\textit{The proof of Lemma \ref{hattedanticommutations}}.--
As we had proven for the commutators, if we can show that the anti-commutators from Lemma \ref{stanti_comm} hold on the subspace $V_3$, then the anti-commutation relations will also hold for the hatted operators. For $\{A_i,B_i\}=0\in V_3$, we need to consider five non-trial cases. The first one,
\begin{align*}
    \{A_i,B_i\}A_i\ket{\psi}=(-A_iM_{jk}+B_i)\ket{\psi}=0
\end{align*}
follows using $B_iA_i\ket{\psi}=-M_{jk}\ket{\psi}$ and $A_iM_{jk}\ket{\psi}=B_i\ket{\psi}$ from Eq. \eqref{pairwiseequalities}. Then, one shows  
\begin{align*}
    \{A_i,B_i\}A_j\ket{\psi}=(A_iB_k-B_iA_k)\ket{\psi}=0
\end{align*}
using $B_kB_iA_j\ket{\psi}=\ket{\psi}$, $A_kA_iA_j\ket{\psi}=-\ket{\psi}$, $B_jA_iB_k\ket{\psi}=B_jB_iA_k\ket{\psi}=\ket{\psi}$ from Eq. \eqref{3qubitallpermutationsb}-\eqref{3qubitallpermutationsc}. Next, we find that  
\begin{align*}
    \{A_i,B_i\}B_i\ket{\psi}=(A_i+B_iM_{ij})\ket{\psi}=0
\end{align*}
holds using the relation $A_iB_i\ket{\psi}=M_{ij}\ket{\psi}$ and $B_iM_{ij}\ket{\psi}=-A_i\ket{\psi}$ from Eq. \eqref{pairwiseequalities}. Further, we find the following relation 
\begin{align*}
    \{A_i,B_i\}B_j\ket{\psi}=(A_iA_k+B_iB_k)\ket{\psi}=0
\end{align*}
using $B_iB_j\ket{\psi}=A_k\ket{\psi}$, $B_kA_iB_j\ket{\psi}=\ket{\psi}$, $A_jA_iA_k\ket{\psi}=-\ket{\psi}$, and $A_jB_iB_k\ket{\psi}=\ket{\psi}$ from Eq. \eqref{3qubitallpermutationsb}-\eqref{3qubitallpermutationsc}; and lastly, 
\begin{align*}
    \{A_i,B_i\}A_iB_i\ket{\psi}=(-A_iB_iB_iA_i+B_iA_iA_iB_i)\ket{\psi}=0
\end{align*}
follows using $A_iB_i\ket{\psi}=-B_iA_i\ket{\psi}$ from Eq. \eqref{pairwiseequalities}. Then for $\{A_k,M_{ij}\}=0\in V_3$, we need to show the following relations. We can prove following 
\begin{align*}
    \{A_k,M_{ij}\}A_k\ket{\psi}=(-A_kB_k+M_{ij})\ket{\psi}=0
\end{align*}
using $M_{ij}A_k\ket{\psi}=-B_k\ket{\psi}$ and $A_kB_k\ket{\psi}=M_{ij}\ket{\psi}$ from Eq. \eqref{pairwiseequalities}. Next one is  
\begin{align*}
    \{A_k,M_{ij}\}A_i\ket{\psi}=(A_kA_jB_k-M_{ij}A_j)=(A_kB_i-A_iB_k)\ket{\psi}=0
\end{align*}
using $B_kA_jM_{ij}A_i\ket{\psi}=\ket{\psi}$, $A_jA_kA_i\ket{\psi}=-\ket{\psi}$, $B_iA_jB_k\ket{\psi}=\ket{\psi}$, $B_kA_iM_{ij}A_j\ket{\psi}=\ket{\psi}$, and $B_jA_kB_i\ket{\psi}=B_jA_iB_k\ket{\psi}=\ket{\psi}$ from Eq. \eqref{3qubitallpermutationsa}-\eqref{3qubitallpermutationsc}. Then, we show the following holds
\begin{align*}
    \{A_k,M_{ij}\}B_k\ket{\psi}=(A_kA_jA_i+M_{ij}M_{ij})\ket{\psi}=0
\end{align*}
utilizing $A_iA_jM_{ij}B_k\ket{\psi}=\ket{\psi}$ from Eq. \eqref{3qubitallpermutationsa}, $A_kA_jA_i\ket{\psi}=-\ket{\psi}$ from Eq. \eqref{3qubitallpermutationsc} and $A_kB_k\ket{\psi}=M_{ij}$ from Eq. \eqref{pairwiseequalities}. Further, we show
\begin{align*}
    \{A_k,M_{ij}\}B_i\ket{\psi}=(A_kA_i+M_{ij}B_j)\ket{\psi}=(-A_j+A_j)\ket{\psi}=0
\end{align*}
using $M_{ij}B_i\ket{\psi}=A_i\ket{\psi}$, $M_{ij}B_j\ket{\psi}=A_j\ket{\psi}$ from Eq. \eqref{pairwiseequalities}, $B_jA_kB_i\ket{\psi}=\ket{\psi}$ from Eq. \eqref{3qubitallpermutationsb}, $A_jA_iA_k\ket{\psi}=-\ket{\psi}$ from Eq. \eqref{3qubitallpermutationsc}; and lastly, 
\begin{align*}
    \{A_k,M_{ij}\}M_{ij}\ket{\psi}=(A_k+M_{ij}B_k)\ket{\psi}=(-A_iA_j+A_iA_j)\ket{\psi}=0
\end{align*}
using $A_kM_{ij}\ket{\psi}=B_k\ket{\psi}$ from Eq. \eqref{pairwiseequalities}, $A_jA_iA_k\ket{\psi}=-\ket{\psi}$ from Eq. \eqref{3qubitallpermutationsc} and $A_jA_iM_{ij}B_k\ket{\psi}=\ket{\psi}$ from Eq. \eqref{3qubitallpermutationsa}. Next for $\{B_i,M_{ij}\}=0\in V_3$, we can show 
\begin{align*}
    \{B_i,M_{ij}\}A_i\ket{\psi}=(B_iA_jB_k-M_{ij}M_{jk})\ket{\psi}=0
\end{align*}
by using $B_kA_jM_{ij}A_i \ket{\psi}=\ket{\psi}$ and $B_iA_i\ket{\psi}=-M_{ij}\ket{\psi}$ from Eq. \eqref{pairwiseequalities}, $B_jA_iB_k\ket{\psi}=\ket{\psi}$ from Eq. \eqref{3qubitallpermutationsb} and $M_{ij}M_{jk}\ket{\psi}=\ket{\psi}$ from Eq. \eqref{3qubitallpermutationsd}; 
\begin{align*}
    \{B_i,M_{ij}\}A_j\ket{\psi}=(B_iA_iB_k+M_{ij}B_k)\ket{\psi}=(B_iB_j+A_iA_j)\ket{\psi}=0
\end{align*}
using $B_kA_iM_{ij}A_j\ket{\psi}=\ket{\psi}$, $B_kB_iA_j\ket{\psi}=\ket{\psi}$, $B_jA_iB_k\ket{\psi}=\ket{\psi}$, $A_jA_iM_{ij}B_k\ket{\psi}=\ket{\psi}$, $A_kB_iB_j\ket{\psi}=\ket{\psi}$, and $A_kA_iA_j\ket{\psi}=-\ket{\psi}$ from Eq. \eqref{3qubitallpermutationsa}-\eqref{3qubitallpermutationsc}; 
\begin{align*}
    \{B_i,M_{ij}\}A_k\ket{\psi}=(-B_iB_k+M_{ij}B_j)\ket{\psi}=(-A_j+A_j)\ket{\psi}=0
\end{align*}
utilizing $M_{ij}A_k\ket{\psi}=-B_k\ket{\psi}$, $M_{ij}B_j\ket{\psi}=A_j\ket{\psi}$ from Eq. \eqref{pairwiseequalities}, $B_jB_iA_k\ket{\psi}=\ket{\psi}$, and $A_jB_iB_k\ket{\psi}=\ket{\psi}$ from Eq. \eqref{3qubitallpermutationsb}; 
\begin{align*}
    \{B_i,M_{ij}\}B_i\ket{\psi}=(B_iA_i+M_{ij})\ket{\psi}=0
\end{align*}
using $M_{ij}B_i\ket{\psi}=A_i\ket{\psi}$ and $B_iA_i\ket{\psi}=-M_{ij}\ket{\psi}$ from Eq. \eqref{pairwiseequalities}; 
\begin{align*}
    \{B_i,M_{ij}\}B_j\ket{\psi}=(B_iA_j+M_{ij}A_k)\ket{\psi}=(B_k-B_k)\ket{\psi}=0
\end{align*}
by using $M_{ij}B_j\ket{\psi}=A_j\ket{\psi}$, $M_{ij}A_k\ket{\psi}=-B_k\ket{\psi}$ from Eq. \eqref{pairwiseequalities}, $A_kB_iB_j\ket{\psi}=\ket{\psi}$ and $B_kB_iA_j\ket{\psi}=\ket{\psi}$  from Eq. \eqref{3qubitallpermutationsb}; 
\begin{align*}
    \{B_i,M_{ij}\}B_k\ket{\psi}=(B_iA_iA_j+M_{ij}A_j)\ket{\psi}=(-B_iA_k+A_iB_k)\ket{\psi}=0
\end{align*}
utilizing $A_jA_iM_{ij}B_k\ket{\psi}=\ket{\psi}$, $A_jB_iB_k\ket{\psi}=\ket{\psi}$, $A_kA_iA_j\ket{\psi}=\ket{\psi}$, $B_kA_iM_{ij}A_j\ket{\psi}=\ket{\psi}$, $B_jB_iA_k\ket{\psi}=\ket{\psi}$, and $B_jA_iB_k\ket{\psi}=\ket{\psi}$ from Eq. \eqref{3qubitallpermutationsa}-\eqref{3qubitallpermutationsc}; and lastly, 
\begin{align*}
    \{B_i,M_{ij}\}M_{ij}\ket{\psi}=(B_i+M_{ij}A_i)\ket{\psi}=(B_i+A_jB_k)\ket{\psi}=0
\end{align*}
using $B_iM_{ij}\ket{\psi}=A_i\ket{\psi}$ from Eq. \eqref{pairwiseequalities}, $B_kA_jM_{ij}A_i\ket{\psi}=\ket{\psi}$ and $B_iA_jB_k\ket{\psi}=\ket{\psi}$ from Eq. \eqref{3qubitallpermutationsa}-\eqref{3qubitallpermutationsb}.

Finally, we notice that a proof for $\{B_j,M_{ij}\}=0$ $\in V_3$ holds similarly to $\{B_i,M_{ij}\}=0$. $\qed$

\section{Invariant subspace, commutation and anti-commutation relations for hatted operators in $n$-qubit scenario}\label{nqubit_app}
{\it The proof of Lemma \ref{invsubVn}}.-- We have the following by using $A_1B_2\ldots B_n\ket{\psi}=\ket{\psi}$ from Eq. \eqref{nqubitallpermutations} and $B_iA_i\ket{\psi}=-N_{ij}\ket{\psi}$ from Eq. \eqref{pairwiseequalities_n},
\begin{align}\label{nelementsvector}
    B_1\ldots B_n\ket{\psi}=B_1A_1\ket{\psi}=-N_{ij}\ket{\psi}.
\end{align}
Then, by using $A_k\prod_{i_j\neq k}^n B_{i_j}\ket{\psi}=\ket{\psi}$ from Eq. \eqref{pairwiseequalities_n}, we also have the following relations,
\begin{align}\label{n-1elements}
    \prod_{i_j\neq k}^n B_{i_j}\ket{\psi}=A_k\ket{\psi}.
\end{align}
These two identities are crucial for the proof of the lemma. We will consider the different cases separately as follows.

\textbf{Action of $B_m$ on $V_n$}:
Let us first consider the following different cases when $B_m$ acts on on $B_{i_1}B_{i_2}\ldots B_{i_{k}}\ket{\psi}$. Using relation $B_1\ldots B_{i-1} A_iB_{i+1}\ldots B_n \ket{\psi}=\ket{\psi}$ from Eq. \eqref{nqubitallpermutations}, we find
\begin{align*}
    B_mB_{i_1}B_{i_2}\ldots B_{i_{k}}\ket{\psi}=(B_mB_{i_j})B_{i_1}B_{i_2}\ldots B_{i_{k}}\ket{\psi},
\end{align*}
where $m=i_j$ for $j\in [1,k]$ and $k\in [1,n-1]$. Then we notice the following hold for $k=n$, 
\begin{align*}
B_mB_1\ldots B_n \ket{\psi}=-B_mN_{jm}\ket{\psi}= A_m\ket{\psi}=\prod_{i_j\neq m}^n B_{i_j}\ket{\psi},
\end{align*}
where the first equality comes from Eq. \eqref{nelementsvector}. The second and last equality comes from $B_mN_{jm}\ket{\psi}=-A_m\ket{\psi}$ (Eq. \ref{pairwiseequalities_n}) and $A_m\prod_{i_j=1,i_j\neq m}^nB_{i_j}\ket{\psi}=\ket{\psi}$ (Eq. \ref{nqubitallpermutations}) respectively. Thus, $B_m$ keeps the state in $V_n$ for $m=i_j$. Now, for the cases with $m\neq i_j$ where $j\in [1,k]$, first, we consider $k\in [1,n-2]$, and find that the following hold,
\begin{align*}
    B_mB_{i_1}B_{i_2}\ldots B_{i_{k}}\ket{\psi}=B_{i_1}B_{i_2}\ldots B_{i_{j-1}}B_mB_{i_{j+1}}\ldots B_{i_{k}}\ket{\psi} \quad {\rm for} \quad i_{j-1}< m< i_{j+1},
\end{align*}
where we used the fact that $B_{i_1}B_{i_2}\ldots B_{i_{j-1}}B_mB_{i_{j+1}}\ldots B_{i_{k}}\ket{\psi}$ is equivalent to $B_mB_{i_1}B_{i_2}\ldots B_{i_{k}}\ket{\psi}$ under permutation from Eq. \eqref{nqubitallpermutations}. The last case for $k=n-1$, we show
\begin{align*}
    B_m\prod_{i_j\neq m}^n B_{i_j}\ket{\psi}=B_mA_m\ket{\psi}=B_1A_1\ket{\psi}=B_1\ldots B_n\ket{\psi},
\end{align*}
where we get the first equality from \eqref{n-1elements}, then used $B_mA_m\ket{\psi}=B_1A_1\ket{\psi}=N_{ij}\ket{\psi}$ from Eq. \eqref{pairwiseequalities_n} and $A_1B_2\ldots B_n\ket{\psi}=\ket{\psi}$ from Eq. \eqref{nqubitallpermutations} for immediate steps.  Thus the subspace $V_n$ is invariant under the action of $B_m$ operators for any $m\in [1,n]$.

\textbf{Action of $A_m$ on $V_n$}: We consider the action of $A_m$ operators on $B_{i_1}B_{i_2}\ldots B_{i_{k}}\ket{\psi}$ in different cases. For the cases with $m=i_j$ where $j\in [1,k]$ and $k\in [1,n-1]$, we find 
\begin{align*}
    A_mB_{i_1}B_{i_2}\ldots B_{i_j}\ldots B_{i_{k}}\ket{\psi}=A_mA_{j'}\mathcal{B}_{j'}\ket{\psi},
\end{align*}
where $\prod_{i_{j'}\neq j',m,i_j}^{n}B_{i_{j'}}=\mathcal{B}_{j'}$. We get above relation using permutations of $B_1\ldots B_{i-1} A_iB_{i+1}\ldots B_n \ket{\psi}=\ket{\psi}$ from Eq. \eqref{nqubitallpermutations} by moving the operators $A_{j'}$ and $A_m$ towards the right. Then, for $k=n$, we get 
\begin{align*}
A_mB_1\ldots B_n \ket{\psi}=-A_mN_{jk}\ket{\psi}=
-A_mA_mB_m\ket{\psi} =-B_m\ket{\psi},
\end{align*}
where the first equality follows from Eq. \eqref{nelementsvector} and last one using $N_{jk}\ket{\psi}=A_mB_m\ket{\psi}$ from Eq. \eqref{pairwiseequalities_n}. Lastly, for $m\neq i_j$ with $j\in [1,k]$ and $k\in [1,n-1]$, we show the following
\begin{align*}
    A_mB_{i_1}B_{i_2}\ldots B_{i_{k}}\ket{\psi}=\prod_{i_{j'}\neq m,i_j}^{n}B_{i_{j'}}\ket{\psi},
\end{align*}
where we used the various permutations of $A_m B_{i_1}\ldots B_{i_n}\ket{\psi}=\ket{\psi}$ from Eq. \eqref{pairwiseequalities_n}. Therefore $A_m$ keeps $V_n$ invariant.

\textbf{Action of $N_{lm}$ on $V_n$}: Finally, we will consider the action of $N_{lm}$ operators on $B_{i_1}B_{i_2}\ldots B_{i_{k}}\ket{\psi}$ in different cases. Let us assume $l,m \neq i_j$ with $j\in [1,k]$ and $k\in \{1,n-2\}$, then we have 
\begin{align*}
    N_{lm}B_{i_1}B_{i_2}\ldots B_{i_{k}}\ket{\psi}=A_{j'}A_m\mathcal{B}_{j'}\ket{\psi},
\end{align*}
using the permutations of the identity $A_mA_{j'}N_{lm}\mathcal{B}_{j'}\ket{\psi}=\ket{\psi}$ from Eq. \eqref{nqubitallpermutations}. And, whenever $(l=i_j ,m\neq i_{j})$ or $(m=i_j ,l\neq i_{j})$ for $k=[1,n-2]$, we find 
\begin{align*}
    N_{lm}B_{i_1}\ldots B_{i_{k}}\ket{\psi}=N_{lm}B_{i_{1'}}\ldots B_{i_{k'}} A_m\ket{\psi}=N_{lm}N_{lm}B_{i_1}\ldots B_{i_{k}} A_l\ket{\psi},
\end{align*}
where using the permutations of the identity $\prod_{i_j\neq m}^nB_{i_j}A_m\ket{\psi}=\ket{\psi}$ s.t., $i_{j'}\neq i_j$, and the identity $N_{lm}A_mA_l\prod_{i_j\neq m,l}^nB_{i_j}\ket{\psi}=\ket{\psi}$ from Eq. \eqref{nqubitallpermutations}. Now, the scenario when both $l$ and  $m$ are equal to one of the $i_{j}$'s with $k=\{1,n-2\}$, we find 
\begin{align*}
    N_{lm}B_{i_1}\ldots B_{i_{k}}\ket{\psi}=N_{lm}B_{i_{1'}}\ldots B_{i_{k'}}A_{i_{k'+1}}\ket{\psi}=-N_{lm}\prod_{i_j\neq l,m,k'+1,i_{j'}}^nB_{i_j}A_lA_m\ket{\psi}=-N_{lm}N_{lm}\mathcal{B}_{j'}\ket{\psi}.
\end{align*}
using the permutations of the identity $\prod_{i_s\neq k'+1}^nB_{i_{s}}A_{k'+1}\ket{\psi}=\ket{\psi}$, then $\prod_{i_s\neq k'+1,l,m}^nB_{i_{s}}A_{k'+1}A_lA_m\ket{\psi}=-\ket{\psi}$ from Eq. \eqref{nqubitallpermutations} and the permutations of  $N_{lm}\prod_{i_s\neq l,m}^nB_{i_{s}}A_lA_m\ket{\psi}=\ket{\psi}$ from Eq. \eqref{pairwiseequalities_n}. Next, we get from Eq. \eqref{n-1elements}  
\begin{align*}
    N_{lm}A_k\ket{\psi}=-B_k\ket{\psi}, \quad k\neq l,m; \:\:{\rm and}\:\:
    N_{lm}A_l\ket{\psi}=A_m\prod_{i_j \neq l,m}^n B_{i_j} \ket{\psi},
\end{align*}
by using the identity from Eq. \eqref{pairwiseequalities_n} for the first equality and the permutations of $N_{lm}A_mA_l\prod_{i_j\neq m,l}^nB_{i_j}\ket{\psi}=\ket{\psi}$ from Eq. \eqref{nqubitallpermutations} for the second one. Lastly, using Eq. \eqref{pairwiseequalities_n} in Eq. \eqref{nelementsvector} we find
$N_{lm}N_{ij}\ket{\psi}=\ket{\psi}$, where $i,j$ can be the same or different from $l,m$. Thus, $N_{lm}$ keeps $V_n$ invariant. $\qed$
%%%%%%%%%%%%%%%%%%%%%%%%%%%%%%%%%%%%%%%%%%%%%%%

\vspace{2mm}
{\it The proof of Lemma \ref{hattedcommutations_n}}.--
Similar to the proof of Lemma \ref{hattedcommutations}, we can show that the commutators of unhatted operators vanish on all the subspace elements. Again if the operators $A,B,N$ appear together in inequality \eqref{tncigautamjeba_n}, seeing $[A,B]N\ket{\psi}=N[A,B]\ket{\psi}$ holds, we ignore such scenarios in the $n$-qubit case also. We consider $[A_l,A_m]$ acting on $V_n$, for a demonstration, the remaining commutators can be proven similar to the Lemma \ref{hattedcommutations}.

For $(l=i_j,m\neq i_j)$ or $(m=i_j,l\neq i_j)$ with $j \in [1,k]$ and $k\leq n-1$, we find using $A_m\prod_{i_j\neq m}^nB_{i_j}\ket{\psi}=\ket{\psi}$ from Eq. \eqref{nqubitallpermutations},
    \begin{align*}
        &[A_l,A_m]\prod_{i_j=1}^n B_{i_j}\ket{\psi}=\left(A_l\prod_{i_{j'}\neq l,m,i_j}^nB_{i_{j'}}-A_mA_lA_m\prod_{i_{j'}\neq i_j,l,m}^nB_{i_{j'}}\right)\ket{\psi} =\left(A_l\prod_{i_{j'}\neq l,m,i_j}^nB_{i_{j'}}-A_mN_{lm}\prod_{i_j\neq l,m}^nB_{i_j}\right)\ket{\psi},
    \end{align*}
where the second equality comes from $A_lA_mN_{lm}\prod_{i_j\neq l,m}^nB_{i_j}\ket{\psi}=\ket{\psi}$ using Eq. \eqref{nqubitallpermutations}. When both $l,m$ are equal to some $i_j$ we have 
\begin{align*}
    &[A_l,A_m]\prod_{i_j=1}^n B_{i_j}\ket{\psi}=\left(A_lA_mA_{m'}\prod_{i_{j'}\neq l,m,m',i_j}^nB_{i_{j'}}-A_mA_lA_{m'}\prod_{i_{j'}\neq l,m,m',i_j}^nB_{i_{j'}}\right)\ket{\psi}=0,
\end{align*}
using $A_{m'}\prod_{i_j \neq m'}^n B_{i_j}\ket{\psi}=\ket{\psi}$, and the permutations of $A_lA_mA_{m'}\prod_{i_j\neq l,m, m'}^nB_{i_j}\ket{\psi}=-\ket{\psi}$ from Eq. \eqref{nqubitallpermutations}. 
%This scenario was trivial in the 3-qubit scenario because of $B_iB_j\ket{\psi}=A_k\ket{\psi}$ which in the n-qubit scenario took a different form. 
Therefore, $[\hat A_l,\hat A_m]=0$ $\in V_n$.  $\qed$

% \item Now we consider the commutators of the form $[A_l,B_m]$ acting on $V_n$,

% \begin{itemize}
%     \item When $l=i_j$ and $k\leq n-1$
%     \begin{align*}
%     &[A_l,B_m]\prod_{i_j=1}^{k}B_{i_j}\ket{\psi}=(A_lA_{m'}\prod_{i_{j'}\neq i_j,m,m'}B_{i_{j'}}-B_mA_lA_{m'}\prod_{i_{j'} \neq i_j,m'}B_{i_{j'}})\ket{\psi}=(A_lA_{m'}\prod_{i_{j'}\neq i_j,m,m'}B_{i_{j'}}-B_mC_{n_{l,m'}}\prod_{i_j\neq l,m,m'}B_{i_j})\ket{\psi}=0,
% \end{align*}
% where we use the relation $A_{m'}\prod_{i_j=1,i_j\neq m'}^{n}B_{i_j}\ket{\psi}=\ket{\psi}$ for the first equality. While for the next two equalities we use $C_{n_{l,m'}}A_lA_m'\prod_{i_j=1,i_j\neq l,m'}^{n}B_{i_j}\ket{\psi}$.
% \end{itemize}   

% \item $[B_i,B_j]$ acting on $V_n$, $i=i_j$ for $j\in \{1,k\}$ with $k\leq n-1$
% \begin{align*}
%     [B_i,B_j]\prod_{i_j=1}^{k}B_{i_j}\ket{\psi}=(B_iB_i\prod_{i_l\neq i}^{n}B_{i_l}-B_j\prod_{i_j\neq i,j}^{n}B_{i_j})\ket{\psi}=0.
% \end{align*}
% The same proof works for $j=i_j$.
% \item $[B_k,N_{ij}]$ acting on $V_n$, s.t. $k=i_j$ with $k\leq n-2$
% \begin{align*}
%   [B_k,N_{ij}]\prod_{i_j=1}^{k}B_{i_j}\ket{\psi}=(B_kA_iA_j-N_{ij}\prod_{i_j\neq k}^{n}B_{i_j})\ket{\psi}=0.
% \end{align*}
% $[B_k,N_{ij}]$ acting on $V_n$, s.t. $i=i_j$ with $k\leq n-2$
% \begin{align*}
% &[B_k,N_{ij}]\prod_{i_j=1}^{k}B_{i_j}\ket{\psi}=(B_kN_{ij}A_j\prod_{i_l\neq i,j,i_j}B_{i_l}-N_{ij}A_j\prod_{i_l\neq i,j,k,i_j}B_{i_l})\ket{\psi}\\    
% &=(B_kN_{ij}A_j\prod_{i_l\neq i,j,i_j}B_{i_l}-A_i\prod_{i_j\neq i}B_{i_l})\ket{\psi}=0.
% \end{align*}
% \item 

%%%%%%%%%%%%%%%%%%%%%%%%%%%%%%%%%%%%%%%%%%%%%%%%%%%%%%%%%%%%%
\vspace{3mm}
{\em The proof of Lemma \ref{hattedanticommutations_n}}.-- 
For the anti-commutators, we will show the proof for $\{A_i,B_i\}$ for a demonstration, and the remaining follows similar to Lemma \ref{hattedanticommutations}.  

Using $A_j\ket{\psi}$ from Eq. \eqref{n-1elements} for $i=j$, we have $\{A_i,B_i\}A_i\ket{\psi}=(A_iN_{jk}+B_i)\ket{\psi}=0$ by using $B_iA_i\ket{\psi}=-N_{jk}\ket{\psi}$ from Eq. \eqref{pairwiseequalities_n} twice. Similarly, for $i\neq j$, we have 
\begin{align*}
     \{A_i,B_i\}A_j\ket{\psi}=\left(A_iB_k\prod_{i_j\neq i,j,k}^n B_{i_j}-B_iA_k\prod_{i_j\neq i,j,k}^n B_{i_j}\right)\ket{\psi}=0,
\end{align*}
using $\prod_{i_j\neq i,j,k}^nB_{i_j}B_iA_jB_k\ket{\psi}=\ket{\psi}$, $\prod_{i_j\neq i,j,k}^n B_{i_j}A_iA_jA_k\ket{\psi}=-\ket{\psi}$ and $\prod_{i_j\neq i,j,k}^nB_{i_j}A_iB_kB_j\ket{\psi}=\prod_{i_j\neq i,j,k}^nB_{i_j}B_iA_kB_j\ket{\psi}=\ket{\psi}$ from Eq. \eqref{nqubitallpermutations}. Then, using Eq. \eqref{nelementsvector} and $A_iB_i\ket{\psi}=-B_iA_i\ket{\psi}$ from Eq. \eqref{pairwiseequalities_n}, we have
\begin{align*}
     -\{A_i,B_i\}B_iA_i\ket{\psi}=(-\ket{\psi}-B_iA_iB_iA_i)\ket{\psi}=(-\ket{\psi}+\ket{\psi})=0.
\end{align*}
Note that the following two cases don't appear in the case of Lemma \ref{hattedanticommutations}, and therefore, we consider them to complete the proof. Considering a state of the form $\prod_{j=1}^kB_{i_j}\ket{\psi}$ with $k\in [1,n-2]$, we have 
\begin{align*}
    \{A_i,B_i\}\prod_{i_j=1}^nB_{i_j}\ket{\psi}=\left(A_i\prod_{i_j\neq i}^n B_{i_j}+B_iA_iA_k\prod_{i_{j'}\neq i,k,i_j}^n B_{i_{j'}}\right)\ket{\psi}=\left(\prod_{i_{j'}\neq i,l,i_j}^n B_{i_{j'}}-B_iA_l\prod_{i_{j}\neq i,k,i_{j'}}^n B_{i_{j}}\right)\ket{\psi}=0,
\end{align*}
where $i=i_j$ and we use the permutations of $A_k\prod_{i_j\neq k}^nB_{i_j}\ket{\psi}=\ket{\psi}$, $A_i\prod_{i_j\neq i}^nB_{i_j}\ket{\psi}=\ket{\psi}$, $A_iA_kA_l\prod_{i_j\neq i,k,l}^nB_{i_j}\ket{\psi}=-\ket{\psi}$ and $A_l\prod_{i_j\neq l}^nB_{i_j}\ket{\psi}=\ket{\psi}$ from Eq. \eqref{nqubitallpermutations}. Lastly, we find the below relation on the same state with $i\neq i_j$ 
\begin{align*}
    \{A_i,B_i\}\prod_{i_j=1}^nB_{i_j}\ket{\psi}=\left(A_iA_k\prod_{i_{j'}\neq i,k,i_j}^n B_{i_{j'}}+\prod_{i_{j'}\neq i_j}^n B_{i_{j'}}\right)\ket{\psi}=\left(-A_l\prod_{i_{j}\neq i,k,l,i_{j'}}^n B_{i_{j}}+\prod_{i_{j'}\neq i_j}^n B_{i_{j'}}\right)\ket{\psi}=0,
\end{align*}
using the fact that $A_k\prod_{i_j\neq k}^nB_{i_j}\ket{\psi}=\ket{\psi}$, $A_i\prod_{i_j\neq i}^nB_{i_j}\ket{\psi}=\ket{\psi}$, the permutations of $A_iA_kA_l\prod_{i_j\neq i,k,l}^nB_{i_j}\ket{\psi}=-\ket{\psi}$, and $A_l\prod_{i_j\neq l}^nB_{i_j}\ket{\psi}=\ket{\psi}$ from Eq. \eqref{nqubitallpermutations}.
$\qed$

\end{document}